\newtheorem{theorem}{Theorem}[section]
\newtheorem{lemma}[theorem]{Lemma}
\newtheorem{observation}[theorem]{Observation}
\newtheorem{corollary}[theorem]{Corollary}
\newtheorem{claim}[theorem]{Claim}
\theoremstyle{definition}
\newtheorem{remark}[theorem]{Remark}
\newtheorem*{definition*}{Definition}
\theoremstyle{plain}
\newenvironment{DenseItemize}[0]
{\begin{itemize}}
{\end{itemize}}
\newenvironment{MathMaybe}[0]
{\begin{displaymath}\ignorespaces}
{\end{displaymath}\ignorespacesafterend}
\newcommand{\Left}{\left}
\newcommand{\Right}{\right}
\newenvironment{subproof}[1][\proofname]{%
\begin{proof}[#1]%
}{%
\end{proof}%
}
\def\qedlabel#1{\def\theqedlabel{#1}}
\newcommand{\InlineBreakForOverFull}{}
\begin{document}

\title{Bayesian Generalized Network Design\footnote{An extended abstract of this paper is to appear in the 27th Annual European Symposium on Algorithms (ESA 2019).}}

\author{Yuval Emek\footnote{The work of Yuval Emek was supported in part by an Israeli Science Foundation grant number 1016/17.} , Shay Kutten\footnote{The work of Shay Kutten was supported in part by a grant from the ministry of science in the program that is joint with JSPS and in part by the BSF.} , Ron Lavi\footnote{The work of Ron Lavi was partially supported by the ISF-NSFC joint research program (grant No. 2560/17).} , Yangguang Shi\footnote{The work of Yangguang Shi was partially supported at the Technion by a fellowship of the Israel Council for Higher Education.
}}
\affil{Technion - Israel Institute of Technology.\\
\texttt{\{yemek, kutten, ronlavi, shiyangguang\}@ie.technion.ac.il}}

\date{}

\maketitle

\begin{abstract}
We study network coordination problems, as captured by the setting of
\emph{generalized network design} (Emek et al., STOC 2018), in the face of
uncertainty resulting from partial information that the network users hold
regarding the actions of their peers.
This uncertainty is formalized using Alon et al.'s \emph{Bayesian ignorance}
framework (TCS 2012).
While the approach of Alon et al.\ is purely combinatorial, the current paper
takes into account computational considerations:
Our main technical contribution is the development of (strongly) polynomial
time algorithms for local decision making in the face of Bayesian
uncertainty.
\end{abstract}

\noindent \textbf{Keywords:} Bayesian competitive ratio, Bayesian ignorance, generalized network design, diseconomies of scale, energy consumption, smoothness, best response dynamics

\vskip 5mm




\section{Introduction}
\label{section:introduction}
In real-life situations, network users are often required to coordinate
actions for performance optimization.
This challenging coordination task becomes even harder in the face of
\emph{uncertainty}, as users often act with partial information regarding
their peers.
Can users overcome their \emph{local} views and reach a good \emph{global}
outcome?
How far would this outcome be from optimal?

For a formal treatment of the aforementioned questions, we adopt the
\emph{Bayesian ignorance} framework of Alon et al.~\cite{Alon2012BI}.
Consider $N$ \emph{agents} in a \emph{routing} scenario, where each agent
$i \in [N]$
should decide on a
$(u_{i}, v_{i})$-path
$a_{i}$ in the network with the objective of minimizing some global \emph{cost
function} that depends on the links' load.
The
$(u_{i}, v_{i})$
pair, also referred to as the \emph{type} of agent $i$, is drawn from a
distribution $p_{i}$.
All agents know this distribution, but the actual realization
$(u_{i}, v_{i})$
of each agent $i$ is only known to $i$ herself.

Our goal is to construct a \emph{strategy} for each agent $i$ that determines
her action $a_{i}$ based only on her individual type $(u_{i}, v_{i})$.
These strategies are computed in a ``preprocessing stage'' and the actual
decision making happens in real-time without further communication.
We measure the quality of a tuple of strategies in terms of its \emph{Bayesian
competitive ratio (BCR)} defined as the ratio of the expected cost obtained by
these strategies to that of an optimal solution computed by an omnipotent
algorithm (refer to Section~\ref{section:bayesian-routing} for the exact
definition).
To the best of our knowledge, this algorithmic evaluation measure has not
been studied so far.

Our main technical contribution is a generic framework that yields strongly
polynomial-time algorithms constructing agent strategies with low BCR for
\emph{Bayesian generalized network design (BGND)} problems --- a setting that
includes routing and many other network coordination problems.
Our framework assumes cost functions that exhibit \emph{diseconomy of scale
(DoS)} \cite{Andrews2012RPM, Bampis2014EES, Makarychev2014SOP}, capturing the
power consumption of network devices that employ the popular \emph{speed
scaling} technique.

\subsection{Model}
\label{section:model}
For clarity of the exposition, we start with the special case of Bayesian routing
in Section~\ref{section:bayesian-routing} and then present the more general
BGND setting in Section~\ref{section:bgnd}. Conceptually, the new algorithmic problem
of Bayesian routing that we define here is related to oblivious routing~\cite{gupta2006oblivious, Englert2009ORL, Shi2015ROI},
where routing requests should be performed without any knowledge about actual network
traffic. This means that the routing path chosen for a routing request may only depend on the
network structure and the other parameters of the problem. Oblivious algorithms are attractive
as they can be implemented very efficiently in a distributed environment as they base routing
decisions only on local knowledge. As will become formally clear below, Bayesian routing has
a similar flavor, but with an important additional ingredient. We will assume that the algorithm
is equipped with statistical (``Bayesian'') knowledge about network traffic. Thus, in a sense,
we replace internal randomization techniques, that oblivious routing usually employs, with
actual data, while still being oblivious to other actual routing decisions and thus still maintaining
the locality principle.\footnote{This is different from stochastic network design as these algorithms
are not oblivious. More details are given below.}

\subsubsection{Special Case: Bayesian Routing}
\label{section:bayesian-routing}
In the \emph{full information} variant of the \emph{routing} problem, we are
given a (directed or undirected) graph
$G = (V, E)$
and a set of $N$ agents, where each agent
$i \in [N]$
is associated with a node pair
$(u_{i}, v_{i}) \in V \times V$,
referred to as the (routing) \emph{request} of agent $i$.
This request should be satisfied by choosing some
$(u_{i}, v_{i})$-path
in $G$, referred to as the (feasible) \emph{action} of agent $i$, and the
collection of all such paths is denoted by $A_{i}$.

Let
$A = A_{1} \times \cdots \times A_{N}$
be the collection of all \emph{action profiles}.
The \emph{load} on edge
$e \in E$
with respect to action profile
$a \in A$,
denoted by $l_{e}^{a}$, is defined to be the number of agents whose actions
include $e$, that is,
$l_{e}^{a} = |\lbrace i \in [N] : e \in a_{i} \rbrace|$.
The cost incurred by load
$l_{e}^{a}$
on edge $e$ is determined by an (edge specific) \emph{superadditive} cost
function
$F_{e}: \mathbb{R}_{\geq 0} \mapsto \mathbb{R}_{\geq 0}$
such that for any $l \geq 0$,
\begin{equation}
F_{e}(l) \; = \; \xi_{e} \cdot l^{\alpha} \, ,
\label{formula_simple_edge_cost_function}
\end{equation}
where
$\xi_{e} > 0$
(a.k.a.\ the \emph{speed scaling factor}) is a parameter of edge $e$ and
$\alpha > 1$
(a.k.a.\ the \emph{load exponent}) is a global constant parameter.
Such a superadditive cost function captures, for example, the power
consumption of network devices employing the popular \emph{speed scaling}
technique
\cite{YaoDS1995, IraniP2005, BansalKP2007, Nedevschi2008RNE,
Christensen2010IEEE, Albers2010}
that allows the device to adapt its power level to its actual load.
In particular, for those network devices that employ the speed scaling
technique, the value of $\alpha$ generally satisfies
$1 < \alpha \leq 3$
\cite{Intel2004EIS, Wierman2009PAS}.
Another application of the cost function
\eqref{formula_simple_edge_cost_function} with
$\alpha = 2$
is to model the queuing delay of users in a TCP/IP communication networks
\cite{Englert2009oblivious}.
The goal in the (full information) routing problem is to construct an action
profile
$a \in A$
with the objective of minimizing the total cost
$C(a) = \sum_{e \in E} F_{e}(l_{e}^{a})$.

\paragraph{Extending to Partial Information}
In the current paper, we extend the full information routing problem to the
\emph{Bayesian routing} problem, where the request of agent
$i \in [N]$
is not fully known to all other agents.
In this problem variant, agent
$i \in [N]$
is associated with a set $T_{i}$ of \emph{types} so that each type
$t_{i} \in T_{i}$
specifies its own routing request
$(u_{i}^{t_{i}}, v_{i}^{t_{i}}) \in V \times V$.
Let $A_{i}^{t_{i}}$ be the set of all (feasible) actions for (the request of)
type $t_{i}$, namely, all
$(u_{i}^{t_{i}}, v_{i}^{t_{i}})$-paths
in $G$ and let
$A_{i} = \bigcup_{t_{i} \in T_{i}} A_{i}^{t_{i}}$.

Agent $i$ is also associated with a \emph{prior distribution} $p_{i}$
over the types in $T_{i}$ and the crux of the Bayesian routing problem is that
agent $i$ should decide on her action while knowing the realization of her own
prior distribution $p_{i}$ (that is, the routing request she should satisfy)
but without knowing the realizations of the prior distributions of the other
agents
$j \neq i$.
Formally, let
$T = T_{1} \times \cdots \times T_{N}$
be the collection of \emph{type profiles}
and
$A = A_{1} \times \cdots \times A_{N}$
be the collection of \emph{action profiles}.
The set of (feasible) action profiles for a type profile
$t \in T$
is denoted by
$A^{t} = A_{1}^{t_{1}} \times \cdots \times A_{N}^{t_{N}}$
and the prior distribution over the type profiles in $T$ is denoted by
$p$.
In this paper, $p$ is assumed to be a product distribution, i.e., the
probability of type profile
$t \in T$
is
$p(t) = \prod_{i = 1}^{N} p_{i}(t_{i})$.

The goal in the Bayesian routing problem is to construct for each agent
$i \in [N]$,
a \emph{strategy}
$s_{i} : T_{i} \mapsto A_{i}$
that maps agent $i$'s realized type
$t_{i} \in T_{i}$
to an action
$a_{i} \in A_{i}^{t_{i}}$.
We emphasize that the decision of agent $i$ is taken irrespective of the other
agents' realized types which are not (fully) known to agent $i$.
Intuitively, a strategy $s_{i}$ can be viewed as a \emph{lookup table}
constructed in the ``preprocessing stage'', and queried at real-time to
determine a (fixed) path for every
$(u_{i}, v_{i})$
pair associated with $i$ (cf.\ \emph{oblivious routing}
\cite{Harald2009SOR, Shi2015ROI}).

The set of strategies available for agent $i$ is denoted by $S_{i}$ and
$S = S_{1} \times \cdots \times S_{N}$
denotes the set of \emph{strategy profiles}.
For each type profile
$t \in T$,
the strategy profile
$s \in S$
determines an action profile
$a = s(t) \in A$
defined so that
$a_{i} = s_{i}(t_{i})$,
$i \in [N]$.
Using this notation, the objective in the Bayesian routing problem is to
construct a strategy profile
$s \in S$
that minimizes the total cost
\[
C(s)
\; = \;
\mathbb{E}_{t \sim p}
\left[ \sum_{e \in E} F_{e} \left( l_{e}^{s(t)} \right)
\right] \, .
\]

\paragraph{Bayesian Competitive Ratio}
Consider an algorithm $\mathcal{A}$ that given a Bayesian routing instance,
constructs a strategy profile $s$.
To evaluate the performance of $\mathcal{A}$, we compare the total cost $C(s)$
to
$\mathbb{E}_{t \sim p}[\mathtt{OPT}(t)]$,
where
\[
\mathtt{OPT}(t)
\; = \;
\min_{a \in A^{t}} \sum_{e \in E} F_{e} (l_{e}^{a})
\]
is the cost of an optimal action profile for the type profile
$t \in T$.
This can be regarded as the expectation, over the same prior distribution $p$,
of the total cost incurred by an omnipotent algorithm that has a global view
of the whole type profile $t$ and enjoys unlimited computational resources.
The \emph{Bayesian competitive ratio (BCR)} of algorithm $\mathcal{A}$ is the
smallest
$\beta \geq 1$
such that for every Bayesian routing instance, the strategy profile $s$
constructed by $\mathcal{A}$ satisfies
$C(s) \leq \beta \cdot \mathbb{E}_{t \sim p}[\mathtt{OPT}(t)]$.

\sloppy
Alon et al.~\cite{Alon2012BI} introduced the related criterion of
\emph{Bayesian ignorance} defined as
$\frac{C(s^{*})}{\mathbb{E}_{t \sim p}[\mathtt{OPT}(t)]}$,
where
$s^{*} = \operatorname{argmin}_{s \in S} C(s)$
is an optimal strategy profile for the given instance.
This criterion quantifies the implication of the agents' partial knowledge
regarding the global system configuration, irrespective of the computational
complexity of constructing this optimal strategy profile.
By definition, for any strategy profile
$s \in S$,
\[
C(s)
\; = \;
\mathbb{E}_{t \sim p}\Big[ \sum_{e \in E} F_{e}\big(l_{e}^{s(t)}\big) \Big]
\; \geq \;
\mathbb{E}_{t \sim p}\Big[ \min_{a \in A^{t}}\sum_{e \in E}F_{e}(l_{e}^{a}) \Big]
\]
which implies that the Bayesian ignorance is at least $1$.
Notice that the BCR is equivalent to the product of the
\emph{approximation ratio}
$\frac{C(s)}{C(s^{*})}$
and the Bayesian ignorance,
therefore it evaluates the loss caused by both algorithmic
(computational complexity) considerations and the absence of the global
information.
The first contribution of the current paper is cast in the following theorem.
\par\fussy

\begin{theorem} \label{theorem_simple_result}
For the Bayesian routing problem, there exists an algorithm whose BCR depends
only on the load exponent parameter $\alpha$.
This algorithm is fully combinatorial and runs in strongly polynomial time.
\end{theorem}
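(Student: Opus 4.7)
The plan is to reduce Bayesian routing to a deterministic unsplittable flow problem of the generalized network design (GND) form, for which the framework of Emek et al.\ (STOC 2018) supplies a strongly polynomial, combinatorial constant-factor approximation. For any strategy profile $s$ and edge $e$, independence of the types across agents makes $l_e^{s(t)}$ a sum of $N$ independent Bernoulli random variables with parameters $q_{i,e}(s) = \sum_{t_i : e \in s_i(t_i)} p_i(t_i)$, so by a standard Rosenthal-type moment inequality there exists a constant $c_\alpha$ with
\[
\mathbb{E}_{t}\bigl[(l_e^{s(t)})^\alpha\bigr] \;\le\; c_\alpha\bigl(\mu_e(s)^\alpha + \mu_e(s)\bigr),
\]
where $\mu_e(s) = \mathbb{E}_t[l_e^{s(t)}] = \sum_i q_{i,e}(s)$. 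Summing edge-wise, $C(s) \le c_\alpha \sum_e \xi_e\bigl(\mu_e(s)^\alpha + \mu_e(s)\bigr)$.

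Minimizing this upper bound is itself a deterministic integer optimization problem of GND type: introduce, for each agent-type pair $(i, t_i)$, a commodity of weight $p_i(t_i)$ that must be routed on a single $(u_i^{t_i}, v_i^{t_i})$-path, and equip each edge $e$ with the superadditive cost $\xi_e(x^\alpha + x)$, where $x$ denotes the total commodity weight crossing $e$. Invoking the combinatorial algorithm of Emek et al.\ then returns a strategy profile $s$ whose auxiliary cost is within a factor $\kappa_\alpha$ of the auxiliary optimum, in time polynomial in $\sum_i |T_i|$ and $|E|$.

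For the matching lower bound, I would plug the optimal ex-post action profiles $\{a^*(t)\}_t$ into the natural LP relaxation of the auxiliary problem by setting $f_{i, t_i, e} = \Pr_{t_{-i}}[e \in a_i^*(t_i, t_{-i})]$; since each $a_i^*(t)$ is a $(u_i^{t_i}, v_i^{t_i})$-path, this is a feasible unit flow for every $(i, t_i)$. A direct computation gives $\mu_e = \mathbb{E}_t[l_e^{a^*(t)}]$, so Jensen's inequality applied to $x \mapsto x^\alpha$, together with the integer bound $l \le l^\alpha$ for $l \in \mathbb{Z}_{\ge 0}$ and $\alpha \ge 1$, yields $\sum_e \xi_e(\mu_e^\alpha + \mu_e) \le 2\,\mathbb{E}_t[\mathtt{OPT}(t)]$. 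Chaining the three steps bounds $C(s)$ by $2 c_\alpha \kappa_\alpha \cdot \mathbb{E}_t[\mathtt{OPT}(t)]$.

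The main technical obstacle is the GND invocation itself: the framework must (i) handle the hybrid $x^\alpha + x$ per-edge cost rather than the pure $x^\alpha$ of the original routing problem, and (ii) compete against the LP relaxation value of the auxiliary problem rather than its integer optimum, since only the former is directly controlled by $\mathbb{E}_t[\mathtt{OPT}(t)]$. I expect this to be resolved either by observing that $x^\alpha + x$ is still superadditive with degree-$\alpha$ growth and feeding it into the GND analysis verbatim, or by running the design algorithm on $x^\alpha$ alone and absorbing the linear slack via a separate shortest-path correction; this is where the ``generic framework'' promised in the introduction must earn its keep.
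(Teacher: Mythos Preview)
Your reduction to a deterministic weighted GND instance is a natural idea and genuinely different from the paper's route, but the gap you yourself flag as ``the main technical obstacle'' is real and your two suggested fixes do not close it. The fractional flow $f_{i,t_i,e}=\Pr_{t_{-i}}[e\in a_i^*(t_i,t_{-i})]$ only certifies that the \emph{convex relaxation} of your auxiliary problem has value at most $2\,\mathbb{E}_t[\mathtt{OPT}(t)]$; you still need the integral optimum of the auxiliary problem (against which the combinatorial Emek et al.\ algorithm is stated) to be within a constant of that relaxation. Neither ``feed $x^\alpha+x$ into the analysis verbatim'' nor ``run on $x^\alpha$ and correct the linear term by shortest paths'' speaks to this integrality issue at all --- both remarks concern the \emph{shape} of the per-edge cost, not the \emph{comparison class} of the algorithm. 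To rescue your route you would have to prove separately that the ABRD of Emek et al.\ competes against the fractional optimum (plausible via smoothness combined with independent rounding of the path decomposition, but that is an additional nontrivial lemma, and a Rosenthal bound reappears inside it). A secondary gap: the GND framework as described here takes unit-weight agents with integer loads, whereas your commodities carry fractional weights $p_i(t_i)\in(0,1]$; the weighted extension is believable but the potential-function bound, which in the paper is proved via the integer sum $\sum_{l=1}^{l_e^a}l^{\alpha-1}$, needs to be redone.

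The paper avoids the reduction entirely. It runs an ABRD directly on the \emph{Bayesian} game: each player $i$ approximately best-responds to the current strategy profile $s_{-i}$ by computing, for every type $t_i$, a shortest path under edge weights $\widehat{\mathfrak f}_{i,e}(+;s_{-i})$, an efficiently computable estimate of the expected marginal cost $\mathbb{E}_{t_{-i}}[f_{i,e}(+,s_{-i}(t_{-i}))]$. The comparison to $\mathbb{E}_t[\mathtt{OPT}(t)]$ comes not from any LP but from Roughgarden's Bayesian smoothness: the pointwise full-information inequality $\sum_i C_i(t_i;\sigma_i^*(t),a_{-i})\le\lambda\,\mathtt{OPT}(t)+\mu\,C(a)$ is lifted, via a double expectation over an independent copy $t'_{-i}$ of the other players' types, to $\sum_i C_i(s_i',s_{-i})\le\lambda\,\mathbb{E}_t[\mathtt{OPT}(t)]+\mu\,C(s)$ for the exact best responses $s_i'$. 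Together with a bounded potential function and the cost-share estimates, this drives the ABRD to a constant-BCR profile in polynomially many rounds. The Rosenthal-type moment bound you invoke does appear in the paper, but in a different role: it controls the estimation error $(\underline\eta,\overline\eta)$ of $\widehat{\mathfrak f}_{i,e}$, not the objective itself.
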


We emphasize that the BCR of the algorithm promised in
Theorem~\ref{theorem_simple_result} is independent of the number of agents
$N$, the underlying graph $G$, the speed
scaling factors $\xi_{e}$,
$e \in E$,
and the probability distribution $p$.
Therefore, as $\alpha$ is assumed to be a constant, so is the BCR.

\subsubsection{Bayesian Generalized Network Design}
\label{section:bgnd}

\paragraph{Generalized Network Design}
The (full information) routing problem has recently been generalized by
Emek et al.~\cite{Emek2018AGN} to the wider family of \emph{generalized
network design (GND)} problems.
In its full information form (the form considered in \cite{Emek2018AGN}), a
GND instance is defined over $N$ agents and a set $E$ of \emph{resources}.
Each agent
$i \in [N]$
is associated with an abstract (not necessarily routing) request characterized
by a set
$A_{i} \subseteq 2^{E}$
of (feasible) actions out of which , some action
$a_{i} \in A_{i}$
should be selected.
As in the routing case, the action profile
$a = (a_{1}, \dots, a_{N})$
induces a load of
$l_{e}^{a} = |\lbrace i \in [N] : e \in a_{i} \rbrace|$
on each resource
$e \in E$
that subsequently incurs a cost of
$F_{e}(l_{e}^{a})$,
where
$F_{e}: \mathbb{R}_{\geq 0} \mapsto \mathbb{R}_{\geq 0}$
is a resource specific cost function.
The goal is to construct an action profile
$a \in A = A_{1} \times \cdots \times A_{N}$
with the objective of minimizing
the total cost
$C(a) = \sum_{e \in E} F_{e}(l_{e}^{a})$.

The request of agent 
$i \in [N]$
is said to be \emph{succinctly represented} \cite{Emek2018AGN} if its
corresponding action set $A_{i}$ can be encoded using
$\operatorname{poly}(|E|)$ bits.
Identifying the resource set $E$ with the edge set of an underlying graph
$G$, the routing requests defined in Section~\ref{section:bayesian-routing}
are clearly succinctly represented since each $A_{i}$ corresponds to the set
of $(u_{i}, v_{i})$-paths in $G$, hence $A_{i}$ can be encoded by specifying
$u_{i}$ and $v_{i}$ (and $G$).
Other examples for succinctly represented requests, where the
resource set $E$ is identified with the edge set of an underlying (directed or
undirected) graph
$G = (V, E)$, include:
\begin{DenseItemize}

\item
\emph{multi-routing} requests in directed or undirected graphs, where given a
collection
$D_{i} \subseteq V \times V$
of \emph{terminal} pairs, the action set $A_{i}$ consists of all edge subsets
$F \subseteq E$
such that the subgraph
$(V, F)$
admits a
$(u, v)$-path
for every
$(u, v) \in D_{i}$;
and

\item
\emph{set connectivity} (resp., \emph{set strong connectivity}) in undirected
(resp., directed) graphs, where given a set
$T_{i} \subseteq V$
of \emph{terminals}, the action set $A_{i}$ consists of all edge subsets
$F \subseteq E$
that induce on $G$ a connected (resp., strongly connected) subgraph that spans
$T_{i}$.

\end{DenseItemize}
All requests mentioned (implicitly or explicitly) hereafter are assumed to be
succinctly represented.

\paragraph{Bayesian GND}
In the current paper, we extend the (full information) GND setting to
\emph{Bayesian GND (BGND)}.
This extension is analogous to the extension of full information routing to
Bayesian routing as defined in Section~\ref{section:bayesian-routing}.
In particular, agent
$i \in [N]$
is now associated with a set $T_{i}$ of types, where each type
$t_{i} \in T_{i}$
corresponds to a request whose action set is denoted by
$A_{i}^{t_{i}}$, and a prior distribution $p_{i}$ over the types in $T_{i}$.
A strategy $s_{i}$ of agent $i$ is a function that maps the agent's
realized type
$t_{i} \in T_{i}$
to an action
$s_{i}(t_{i}) \in A_{i}^{t_{i}}$.

Similarly to the notation introduced in Section~\ref{section:bayesian-routing},
let
$T = T_{1} \times \cdots \times T_{N}$
be the set of type profiles.
Let
$A_{i} = \bigcup_{t_{i} \in T_{i}} A_{i}^{t_{i}}$
and let
$A = A_{1} \times \cdots \times A_{N}$
be the set of action profiles.
Let $S_{i}$ be the set of strategies available for agent $i$ and let
$S = S_{1} \times \cdots \times S_{N}$
be the set of strategy profiles.
Given a strategy profile
$s \in S$
and a type profile
$t \in T$,
let
$a = s(t) \in A$
be the action profile defined so that
$a_{i} = s_{i}(t_{i})$,
$i \in [N]$.
The goal in the BGND problem is to construct a strategy profile
$s \in S$
with the objective of minimizing the total cost
\begin{equation} \label{formula_social_cost_function_in_BGND}
C(s)
\; = \;
\mathbb{E}_{t \sim p}
\left[ \sum_{e \in E} F_{e} \left( l_{e}^{s(t)} \right)
\right] \, .
\end{equation}
The BCR of Algorithm $\mathcal{A}$ is the smallest
$\beta \geq 1$
such that for every BGND instance, the strategy profile
$s \in S$
constructed by $\mathcal{A}$ satisfies
$C(s) \leq \beta \cdot \mathbb{E}_{t \sim p}[\mathtt{OPT}(t)]$,
where
\[
\mathtt{OPT}(t)
\; = \;
\min_{a \in A^{t}} \sum_{e \in E} F_{e} (l_{e}^{a}) \, .
\]

\paragraph{Generalized Cost Functions}
In addition to the generalization of (full information) routing to GND,
\cite{Emek2018AGN} also generalizes the cost functions defined in
Eq.~\eqref{formula_simple_edge_cost_function} to cost functions of the form
\begin{equation}
F_{e}(l)
\; = \;
\sum_{j \in [q]} \xi_{e, j} \cdot l^{\alpha_{j}} \, ,
\label{formula_complex_edge_cost_function}
\end{equation}
where
$q$ is a positive integer,
$\xi_{e, j}$
is a positive real for every
$e \in E$
and
$j \in [q]$,
and
$\alpha_{j}$ is a constant real no smaller than $1$ for every
$j \in [q]$.%
\footnote{The cost functions considered in \cite{Emek2018AGN} have a fixed
additional term, capturing the resource's \emph{startup cost}, that makes them
even more general.
Due to technical difficulties, in the current paper we were not able to cope
with this additional term.}
We define
$\alpha_{\max} = \max_{j \in [q]} \alpha_{j}$
and assume hereafter that
$\alpha_{\max} > 1$.
As discussed in \cite{Emek2018AGN}, this generalization of
Eq.~\eqref{formula_complex_edge_cost_function} is not only interesting from a
theoretical perspective, but also makes the model more applicable to practical
network energy saving applications.
Indeed, in realistic communication networks, a link often consists of several
different devices (e.g., transmitter/receiver, amplifier, adapter), all of
which are operating when the link is in use.
As their energy consumption can vary in terms of the load exponents and speed
scaling factors \cite{Wierman2009PAS},
Eq.~\eqref{formula_complex_edge_cost_function} may often provide a more
accurate abstraction of the actual link's power consumption.

\paragraph{Action Oracles}
For a BGND problem $\mathcal{P}$, this paper develops a framework which
generates an algorithm with BCR
$O (\varrho^{\alpha_{\max}})$
when provided with an \emph{action $\varrho$-oracle} for $\mathcal{P}$.
An action $\varrho$-oracle with parameter
$\varrho \geq 1$
for BGND problem $\mathcal{P}$ (cf.\ the \emph{reply $\varrho$-oracles} of
\cite{Emek2018AGN}) is a procedure that given agent
$i \in [N]$,
type
$t_{i} \in T_{i}$,
and a \emph{weight} vector
$w \in \mathbb{R}_{\geq 0}^{E}$,
generates an action
$a_{i} \in A_{i}^{t_{i}}$
such that
$\sum_{e \in a_{i}} w(e) \leq \varrho \cdot \sum_{e \in a'_{i}} w(e)$
for any action
$a'_{i} \in A_{i}^{t_{i}}$.
An \emph{exact action oracle} is an action $\varrho$-oracle with parameter
$\varrho = 1$.

Notice that the optimization problem behind the action oracle is \emph{not} a
BGND problem:
It deals with a \emph{single} type of a \emph{single} agent and the role of
the resource cost functions is now taken by the weight vector.
These differences often make it possible to implement the action oracle with
known (approximation) algorithms.

For example, the Bayesian routing problem, which requires paths between the
given node pairs, admit an exact action oracle implemented using, e.g.,
Dijkstra's shortest path algorithm
\cite{Dijkstra1959PCG, FredmanT1987FHT}.
In contrast, the BGND problem with set connectivity requests in undirected
graphs (P1),
the BGND problem with set strong connectivity requests in directed graphs (P2),
the BGND problem with multi-routing requests in undirected graphs (P3),
and
the BGND problem with multi-routing requests in directed graphs (P4)
do not admit exact action oracles unless
$\mathrm{P} = \mathrm{NP}$
as these would imply exact (efficient) algorithms for the
\emph{Steiner tree},
\emph{strongly connected Steiner subgraph},
\emph{Steiner forest},
and
\emph{directed Steiner forest}
problems, respectively.
However, employing known approximation algorithms for the latter (Steiner)
problems, one concludes that
BGND problem (P1) admits an action $\varrho$-oracle for
$\varrho \leq 1.39$
\cite{Byrka2013STA};
BGND problem (P2) admits an action $\nu^{\epsilon}$-oracle, where $\nu$ is the
number of terminals \cite{Charikar1998AAD};
BGND problem (P3) admits an action $2$-oracle \cite{Agrawal1995TCA};
and
BGND problem (P4) admits an action
$k^{1 / 2 + \epsilon}$-oracle,
where $k$ is the number of terminal pairs \cite{Chekuri2011SCP}.
This means, in particular, that BGND problems (P1) and (P3) always admit an
action $\varrho$-oracle with a constant approximation ratio $\varrho$, whereas
BGND problems (P2) and (P4) admit such an oracle when $\nu$ and $k$ are
fixed
\cite{Agrawal1995TCA, Charikar1998AAD, Chekuri2011SCP, Byrka2013STA}.
The guarantees of our approximation framework are cast in the following
theorem.

\begin{theorem} \label{theorem_main_result}
Consider a BGND problem $\mathcal{P}$ with an action $\varrho$-oracle
$\mathcal{O}_{\mathcal{P}}$.
When provided access to $\mathcal{O}_{\mathcal{P}}$, the framework proposed in
this paper generates an algorithm $\mathcal{A}_{\mathcal{P}}$ whose BCR
depends only on the load exponent parameters
$\alpha_{1}, \dots, \alpha_{q}$
of Eq.~\eqref{formula_complex_edge_cost_function}.
This framework is fully combinatorial and runs in strongly polynomial time,
hence if $\mathcal{O}_{\mathcal{P}}$ can be implemented to run in strongly
polynomial time, then so can $\mathcal{A}_{\mathcal{P}}$.
\end{theorem}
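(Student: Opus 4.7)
The plan is to reduce the problem to computing an approximate Bayes--Nash equilibrium of an induced congestion-type game and then apply a smoothness argument to bound the BCR. First, I would view the BGND instance as a Bayesian atomic congestion game in which agent $i$ of type $t_i$, playing action $a_i$ against a profile $s_{-i}$, pays her (expected) marginal contribution to the total cost, namely $\mathbb{E}_{t_{-i}}\bigl[\sum_{e \in a_i}\bigl(F_e(l_e^{(a_i, s_{-i}(t_{-i}))}) - F_e(l_e^{(a_i, s_{-i}(t_{-i}))}-1)\bigr)\bigr]$. Because the prior is a product distribution, this expected marginal cost decomposes across resources: it equals $\sum_{e \in a_i} w_i^{t_i}(e)$ for a weight vector $w_i^{t_i}$ that depends only on the marginal distribution of $l_e^{s(t)}$ conditional on $t_i$, and which can be computed in strongly polynomial time by repeatedly using independence. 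This is precisely the minimization problem solved by the action $\varrho$-oracle, so calling $\mathcal{O}_{\mathcal{P}}(i, t_i, w_i^{t_i})$ returns a $\varrho$-approximate best response for the (agent, type) pair $(i, t_i)$.

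The algorithm $\mathcal{A}_{\mathcal{P}}$ would then run approximate best-response dynamics over (agent, type) pairs: starting from any strategy profile, it repeatedly selects a pair $(i, t_i)$ and overwrites the entry $s_i(t_i)$ using a single oracle call on $w_i^{t_i}$. For the convergence analysis, I would employ the expected Rosenthal potential $\Psi(s) = \mathbb{E}_{t \sim p}\bigl[\sum_e \sum_{k=1}^{l_e^{s(t)}} F_e(k)\bigr]$; a unilateral update of $s_i(t_i)$ changes $\Psi$ by exactly the $p_i(t_i)$-weighted change in that agent's expected marginal cost, so the dynamic is a potential dynamic up to factors depending only on $\alpha_{\max}$. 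To obtain a strongly polynomial iteration bound I would accept only $(1+\eta)$-improving moves, measured relative to the agent's own current expected marginal cost, and invoke the amortization scheme of Emek et al.~\cite{Emek2018AGN}: each such move can be charged against an agent-specific quantity, yielding an iteration count polynomial in $N$, $\sum_i |T_i|$ and $1/\eta$, and crucially independent of the $\xi_{e,j}$ and of the probabilities $p_i(t_i)$.

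The BCR guarantee would follow from the $(\lambda,\mu)$-smoothness framework. Polynomial congestion games with load exponents $\alpha_1,\dots,\alpha_q$ are smooth with constants depending only on $\alpha_{\max}$, and the Roughgarden--Syrgkanis--Tardos extension lifts smoothness to Bayes--Nash equilibria with no loss, provided one uses a per-type deviation profile; taking the deviation of agent $i$ of type $t_i$ to be her component of the (global) optimum for type profile $t$ produces exactly the numerator $C(s)$ and denominator $\mathbb{E}_{t \sim p}[\mathtt{OPT}(t)]$ appearing in the BCR. The two sources of slack, the oracle's $\varrho$ factor and the discretization parameter $\eta$, can both be absorbed into the smoothness inequality at a multiplicative cost of $O(\varrho^{\alpha_{\max}})$, which depends only on the $\alpha_j$ once $\eta$ is fixed as a function of $\alpha_{\max}$.

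The main obstacle I foresee is obtaining \emph{strongly} (rather than merely weakly) polynomial convergence. The naive potential argument gives an iteration count scaling with $\log(\Psi_{\max}/\Psi_{\min})$, and hence with the bit-complexity of the $\xi_{e,j}$ values and of the priors, which would only yield weak polynomiality. Circumventing this requires the amortized charging of Emek et al.~\cite{Emek2018AGN} to carry over to the Bayesian setting, where each ``move'' affects only the $p_i(t_i)$-fraction of the expected cost associated with that type; verifying that this amortization remains consistent when the best response is replaced by a $\varrho$-approximate one, without the smoothness factor blowing up beyond $O(\varrho^{\alpha_{\max}})$, is the technically delicate step that ties the whole proof together.
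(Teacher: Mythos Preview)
Your high-level architecture (view as a Bayesian congestion game, run approximate best-response dynamics governed by an expected Rosenthal potential, and close with a smoothness argument) matches the paper's. However, the crucial step---getting a \emph{strongly} polynomial iteration bound---has a genuine gap, and the paper resolves it differently from what you sketch.

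Your dynamics update a single entry $s_i(t_i)$ at a time. As you note, such an update changes the potential by $p_i(t_i)$ times the change in that type's expected cost. A $(1+\eta)$-improving move therefore decreases $\Psi$ by at least $\frac{\eta}{1+\eta}\, p_i(t_i)\, C_i(t_i;s)$, and this quantity has \emph{no} uniform lower bound relative to $\Psi(s)$: the contribution $p_i(t_i)\,C_i(t_i;s)$ can be an arbitrarily small fraction of $C(s)$, so the number of improving moves can depend on the priors and on the $\xi_{e,j}$. The ``amortization scheme of Emek et al.'' you invoke does not exist in the form you describe; that paper (like this one) obtains strongly polynomial convergence not by amortization but by a \emph{selection rule} that forces multiplicative potential decrease. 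Concretely, the paper updates one \emph{agent} per round (all of her types simultaneously), computes for every $i$ the gap $\Delta_i = \widehat{C}_i(s) - (\underline{\eta}\overline{\eta})\widehat{C}_i(\widehat{s}_i, s_{-i})$, and moves the agent with $\Delta_i \ge \frac{1}{N}\sum_j \Delta_j$. Combining this pigeonhole choice with smoothness shows that whenever $C(s)$ exceeds the target BCR, the potential drops by a fixed factor $1 - \Theta\bigl(\frac{1-\varrho(\underline{\eta}\overline{\eta})^2\mu}{(\underline{\eta}\overline{\eta})N}\bigr)$, so $O(N\log N)$ rounds suffice independently of $p$ and $\xi_{e,j}$. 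Without this (or an equivalent device at the $(i,t_i)$ level, e.g.\ picking the pair maximizing $p_i(t_i)\Delta_{i,t_i}$), your dynamics are only weakly polynomial.

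Two smaller points. First, with the marginal cost share you adopt, the Rosenthal potential is $\sum_e F_e(l_e)$, i.e.\ the social cost itself, not $\sum_e\sum_{k\le l_e} F_e(k)$; the latter corresponds to a per-player cost of $F_e(l)$, not $F_e(l)-F_e(l-1)$. Second, you implicitly assume the dynamics reach an approximate Bayes--Nash equilibrium and then apply smoothness there; the paper does \emph{not} assume convergence. Instead it shows that the potential is monotone along the trajectory, that some intermediate profile must be ``good'' (else the multiplicative decrease would force the cost below $\varrho\cdot\mathbb{E}[\mathtt{OPT}]$, a contradiction), and then uses $K$-boundedness of the potential to transfer the guarantee to the final profile $s^R$. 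This last step is what lets the algorithm simply output $s^R$ without being able to evaluate $C(\cdot)$.
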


Again, we emphasize that the BCR of the algorithm promised in
Theorem~\ref{theorem_main_result} is independent of the number of agents
$N$, the number of resources $|E|$, the speed
scaling factors $\xi_{e, j}$,
$j \in [q]$,
$e \in E$,
and the probability distribution $p$.
Therefore, as
$\alpha_{1}, \dots, \alpha_{q}$
are assumed to be constants, so is the BCR.
Since the Bayesian routing problem admits an exact action oracle,
Theorem~\ref{theorem_simple_result} follows trivially from
Theorem~\ref{theorem_main_result}.
Throughout the remainder of this paper, we focus on the BGND framework
promised in Theorem~\ref{theorem_main_result}.

\subsection{Related Works}
\label{section:related-work}
The technical framework that we use is inspired by~\cite{Emek2018AGN}.
Section~\ref{section:challenges-techniques} gives a detailed technical overview
including a full comparison.

In the full information case, network design problems with superadditive cost
functions as defined in Eq.~\eqref{formula_simple_edge_cost_function} have
been extensively studied with the motivation of improving the energy
efficiency of networks
\cite{Andrews2012RPM, Bampis2014EES, Makarychev2014SOP}.
To the best of our knowledge, none of these studies has been extended to the
Bayesian case.

In the research works on oblivious routing (e.g., \cite{Englert2009ORL, Shi2015ROI, Lawler2009MTE, Harsha2008MAL}), the absence of global information in
routing is modeled in an adversarial
(non-Bayesian) manner.
In particular, oblivious routing assumes that no knowledge
about $t_{-i}$ is available when determining every $a_{i}$, and the
performance of the algorithm is evaluated by means of its \emph{competitive
ratio}
$\max_{t \in T} \frac{\sum_{e \in E}F_{e}(l_{e}^{s(t)})}{\mathtt{OPT}(t)}$. 
For the cost function $F_{e}(l) = l^{\alpha}$ with $\alpha > 1$, Englert and R\"{a}cke \cite{Englert2009ORL} propose an $O(\log^{\alpha} |V|)$-competitive oblivious routing algorithm for the scenario where the traffic requests are allowed to be partitioned into fractional flows. 
Shi et al. \cite{Shi2015ROI} prove that for such a cost function,
there exists no oblivious routing algorithm with competitive ratio
$O \left( |E|^{\frac{\alpha - 1}{\alpha + 1}} \right)$ when it is required to choose an integral path for every request.

The Bayesian approach is often used in the game theoretic literature to model
the uncertainty a player experiences regarding the actions taken by the other
players.
Roughgarden \cite{Roughgarden2012POAIC} studies a \emph{routing game} (among
other things) in which the players share (equally) the cost of the edges they
use and proposes a theoretical tool called \emph{smoothness} to analyze the
\emph{price of anarchy (PoA)} of this game in a Bayesian setting, defined as
$\frac{\max_{s \in S^{\operatorname{BNE}}} C(s)}
{\mathbb{E}_{t \sim p}[\mathtt{OPT}(t)]}$,
where $S^{\operatorname{BNE}}$ denotes the set of \emph{Bayes-Nash
equilibria}.
In particular, he proves that with the cost function
$F_{e}(l) = \xi_{e, 1} \cdot l + \xi_{e, 2} \cdot l^{2}$,
the PoA is bounded by $\frac{5}{2}$.
We employ the smoothness toolbox in our algorithmic construction, as further
described in Section~\ref{section:smoothness} (see also the overview in
Section~\ref{section:challenges-techniques}).

Alon et al.~\cite{Alon2012BI} investigate the Bayesian routing game with a
constant cost function
$F_{e} = \xi_{e}$
and prove that the Bayesian ignorance
$\frac{C(s^{*})}{\mathbb{E}_{t \sim p}[ \mathtt{OPT}(t)]}$
is bounded by $O(N)$ (resp.,
$O(\log |E|)$)
in directed (resp., undirected) graphs $G = (V, E)$.
They also introduce game theoretic variants of the Bayesian ignorance notion
and analyze them in that game.

To deal with the inherent uncertainty of the demand in realistic networks,
many research works have been conducted on \emph{stochastic network design}
\cite{Gupta2004ETS, Crainic2011PHB, Rahmaniani2018ABD},
formulated as a \emph{two-stage stochastic optimization} problem:
in the first stage, each link in the network has a fixed cost and the
algorithm needs to make decisions to purchase links knowing the probability
distribution over the network demands;
in the second stage, the network demands are realized (according to the
aforementioned probability distribution) and should be satisfied, which may
require purchasing additional links, this time with an inflated cost.
The objective is to minimize the total cost of the two stages plus a load
dependent term, in expectation.

The BGND setting considered in the current paper is different from two-stage
stochastic optimization (particularly, stochastic network design) in several
aspects, the most significant one is that in BGND, an agent's strategy should
dictate her ``complete action'' (e.g., a path for routing requests) for every
possible type, obliviously of the realized types of the other agents.
In particular, one cannot ``update'' the agents' actions and purchase
additional resources at a later stage to satisfy the realized demands.
Moreover, the current paper evaluates the performance of a BGND algorithm by
means of its BCR that takes into consideration computational complexity
limitations as well as the lack of global information (see
Section~\ref{section:model}) whereas the literature on two-stage stochastic
optimization typically evaluates algorithms using standard approximation
guarantees that accounts only for computational complexity limitations.

In \cite{Garg2008SAO}, Garg et al.~investigate online combinatorial optimization problems where the requests arriving online are drawn independently and identically from a known distribution. As an example, Garg et al.~\cite{Garg2008SAO} study the online Steiner tree problem on an undirected graph $G = (V, E)$. In this problem, at each step the algorithm receives a terminal that is drawn independently from a distribution over $V$, and needs to maintain a subset of edges connecting all the terminals received so far.

Our work differs from \cite{Garg2008SAO} in following four aspects. 
First, 
in the stochastic online optimization problem studied in \cite{Garg2008SAO}, when each request $i$ arrives, the previous requests $\lbrace 1, \cdots, i - 1 \rbrace$ have been realized, and the realization is known. By contrast, in the BGND problem, every agent $i$ needs to be served without knowing the actual realization of the other agents. 
Second, 
the cost function studied in \cite{Garg2008SAO} maps each resource $e$ to a fixed toll, which is subaddtive in the number of requests using $e$, while our cost function is superaddtive. 
Third, 
in the BGND problem with the set connectivity requests, for each agent $i$, each type $t_{i}$ is a set of terminals rather than a single terminal, and each action in $A_{i}^{t_{i}}$ is a Steiner tree spanning over the set of terminals corresponding to $t_{i}$. 
Fourth, 
in the BGND problem, each prior distribution $p_{i}$ is over the types of agent $i$, while there is no distribution over the agents. 

\subsection{Paper Organization}
\label{section:organization}
The rest of this paper is organized as follows.
Section~\ref{section:preliminaries} introduces some of the concepts employed
in our approximation framework together with some notation and terminology.
The main challenges that we had to overcome when developing this framework and
some of the techniques used for that purpose are discussed in
Section~\ref{section:challenges-techniques}.
Section~\ref{section:algorithm} is dedicated to a detailed exposition of our
approximation framework.
Its performance is then analyzed in Section~\ref{section:bound-bcr} using
certain game theoretic properties which are investigated in Sections
\ref{section:smoothness}--\ref{section:efficient-estimation-cost-share}.

\section{Preliminaries}
\label{section:preliminaries}
We follow the common convention that for an $N$-tuple
$x = (x_{1}, \dots, x_{N})$
and for
$i \in [N]$,
the notation $x_{-i}$ denotes the
$(N - 1)$-tuple
$(x_{1}, \dots, x_{i - 1}, x_{i + 1}, \dots, x_{N})$.
Likewise, for a Cartesian product
$X = X_{1} \times \cdots \times X_{N}$
and for
$i \in [N]$,
the notation $X_{-i}$ denotes the Cartesian product
$X_{1} \times \cdots \times X_{i - 1} \times X_{i + 1} \times \cdots \times
X_{N}$.

\subsection{The BGND Game}
\label{section:bgnd-game}
Given an instance
$\mathcal{I} =
\left\langle
N,
E,
\lbrace T_{i}, p_{i} \rbrace_{i \in [N]},
\lbrace \xi_{e, j} \rbrace_{e \in E, j \in [q]},
\lbrace \alpha_{j} \rbrace_{j \in [q]}
\right\rangle$
of a BGND problem $\mathcal{P}$, we define a \emph{BGND game} by associating
every agent
$i \in [N]$
with a strategic player who decides on the strategy $s_{i}$ with the objective
of minimizing her own individual cost defined as follows.
Given an action profile
$a \in A$
and a resource
$e \in E$,
the corresponding cost $F_{e}(l_{e}^{a})$ is equally divided among the players
$i \in [N]$
satisfying
$e \in a_{i}$;
in other words, the \emph{cost share} of player $i$ in resource $e$ under
action profile $a$, denoted by $f_{i, e}(a)$, is defined to be
\[
f_{i, e}(a)
\, = \,
\begin{cases}
0 \, , & e \notin a_{i} \\
\frac{F_{e} \left( l_{e}^{a} \right)}{| i : e \in a_{i} |}
=
\sum_{j} \xi_{e, j} \left( l_{e}^{a} \right)^{\alpha_{j} - 1} \, , & \text{otherwise}
\end{cases}
 \, .
\]
Informally, the individual cost of player $i$ is the sum of her cost shares
over all resources.

For a more formal treatment of the BGND game, we occasionally need to
explicitly specify the type $t_{i}$ of player $i$ in the expressions involving
her cost share in which case we use the notation
$f_{i, e}(t_{i}; a)$,
following the convention that
$f_{i, e}(t_{i}; a) = f_{i, e}(a)$
if
$a_{i} \in A_{i}^{t_{i}}$;
and
$f_{i, e}(t_{i}; a) = \infty$
otherwise.
The individual cost of a player $i$ with respect to the type $t_{i}$ and a
fixed action profile $a$ is defined as
$C_{i}(t_{i}; a) = \sum_{e \in E} f_{i, e}(t_{i}; a)$.
Correspondingly, for each player
$i \in [N]$
and each type
$t_{i} \in T_{i}$,
we define the \emph{type-specified} expected individual cost
\[
C_{i}(t_{i}; s)
\; = \;
\mathbb{E}_{t_{-i} \sim p_{-i}} \left[ C_{i}(t_{i}; s(t_{i}, t_{-i})) \right]
\, .
\]
The objective function that player $i$ wishes to minimize is her
\emph{type-averaged} expected individual cost
\[
C_{i}(s)
\; = \;
\mathbb{E}_{t_{i} \sim p_{i}} \left[ C_{i}(t_{i}; s) \right] \, ,
\]
irrespective of the total cost $C(s)$, often referred to as the \emph{social
cost}.

\begin{observation}
The social cost satisfies
$C(s) = \sum_{i \in [N]} C_{i}(s)$
for every strategy profile
$s \in S$.
\end{observation}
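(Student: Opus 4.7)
The plan is to reduce the identity to a pointwise (per-type-profile) statement and then take expectations. The first step is to fix an arbitrary action profile $a \in A$ and verify the deterministic identity
\[
\sum_{i \in [N]} C_i(s(t)) \, = \, \sum_{i \in [N]} \sum_{e \in E} f_{i,e}(s(t)) \, = \, \sum_{e \in E} F_e\big(l_e^{s(t)}\big),
\]
where I evaluate $a = s(t)$ for a fixed type profile $t \in T$. The key algebraic observation, which I would extract as the main substep, is that for any action profile $a$ and any resource $e$ with $l_e^a \geq 1$, the number of indices $i$ with $e \in a_i$ equals $l_e^a$ by the very definition of the load; hence
\[
\sum_{i \in [N]} f_{i,e}(a) \, = \, \sum_{i : e \in a_i} \frac{F_e(l_e^a)}{l_e^a} \, = \, F_e(l_e^a),
\]
while the equation holds trivially when $l_e^a = 0$. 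Swapping the order of the summations over $i$ and $e$ then gives the pointwise identity.

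The second step is to pass from the pointwise identity to the expectation. Applying $\mathbb{E}_{t \sim p}$ to both sides and using linearity of expectation yields
\[
\mathbb{E}_{t \sim p}\Big[\sum_{i \in [N]} C_i(s(t))\Big] \, = \, \mathbb{E}_{t \sim p}\Big[\sum_{e \in E} F_e\big(l_e^{s(t)}\big)\Big] \, = \, C(s),
\]
so it remains to identify the left-hand side with $\sum_{i \in [N]} C_i(s)$. Here I would invoke the product structure of $p$, under which $\mathbb{E}_{t \sim p}[\,\cdot\,] = \mathbb{E}_{t_i \sim p_i}\mathbb{E}_{t_{-i} \sim p_{-i}}[\,\cdot\,]$ for every $i$; unfolding the definitions of $C_i(t_i; s)$ and $C_i(s)$ then produces exactly $\sum_{i \in [N]} C_i(s)$ on the left.

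I do not anticipate a real obstacle here: the identity is essentially a bookkeeping consequence of the equal-sharing rule defining $f_{i,e}$, together with the standard linearity of expectation. The only point requiring a moment of care is the handling of the notational convention $f_{i,e}(t_i; a) = \infty$ when $a_i \notin A_i^{t_i}$; but inside the expression $C_i(t_i; s(t_i, t_{-i}))$ the action $s_i(t_i)$ always lies in $A_i^{t_i}$ by the definition of a strategy, so this corner case never arises and $f_{i,e}(t_i; s(t)) = f_{i,e}(s(t))$ throughout.
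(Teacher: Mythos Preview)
Your proof is correct; the paper states this as an Observation without proof, so there is no approach to compare. Your argument is exactly the intended one: the equal-sharing rule gives $\sum_i f_{i,e}(a) = F_e(l_e^a)$ pointwise, and linearity of expectation together with the product form of $p$ (so that $\mathbb{E}_{t_i}\mathbb{E}_{t_{-i}} = \mathbb{E}_{t\sim p}$) finishes it.
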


Let
$\mathfrak{f}_{i, e}(a_{i}; s_{-i})
=
\mathbb{E}_{t_{-i} \sim p_{-i}}[f_{i, e}(a_{i}, s_{-i}(t_{-i}))]$
be the expected cost share of player
$i \in [N]$
on resource
$e \in E$
with respect to action
$a_{i} \in A_{i}$
and strategy profile
$s_{-i} \in S_{-i}$.
Fixing
$a_{-i} \in A_{-i}$
(resp.,
$s_{-i} \in S_{-i}$),
the cost share
$f_{i, e}(a_{i}, a_{-i})$
(resp., expected cost share
$\mathfrak{f}_{i, e}(a_{i}; s_{-i})$)
of player $i$ on resource $e$ is the same for every action
$a_{i} \in A_{i}$
such that
$e \in a_{i}$.
Therefore, it is often convenient to ignore the specifics of action $a_{i}$
and use the notations
$f_{i, e}(+, a_{-i})$
and
$\mathfrak{f}_{i, e}(+; s_{-i})$
instead of
$f_{i, e}(a_{i}, a_{-i})$
and
$\mathfrak{f}_{i, e}(a_{i}; s_{-i})$,
respectively,
given that
$e \in a_{i}$.%
\footnote{To avoid ambiguity concerning the definition of
$f_{i, e}(+, a_{-i})$
and
$\mathfrak{f}_{i, e}(+; s_{-i})$
for resources
$e \notin A_{i}$,
we assume (in the scope of using these notations) that
$A_{i} = E$
for all
$i \in [N]$.
This is without loss of generality as one can augment $T_{i}$ with a virtual
type
$\tilde{t}_{i}$
such that
$A_{i}^{\tilde{t}_{i}} = \{ E \}$
and
$p_{i}(\tilde{t}_{i})$
is arbitrarily small.}

\subsection{Definitions for the Algorithm Design and Analysis}
\label{section:definitions}
The following definitions play key roles in the design and analysis of
our approximation framework.

\begin{definition*}[Choice Function \cite{Roughgarden2012POAIC}]
A \emph{choice function}
$\sigma: T \mapsto A$
maps every type profile
$t \in T$
to an action profile
$a \in A^{t}$.
The action specified by $\sigma$ for player
$i \in [N]$
with respect to type profile $t$ is denoted by $\sigma_{i}(t)$.
In particular, the choice function that maps each type profile $t$ to an action
profile that realizes $\mathtt{OPT}(t)$ is denoted by $\sigma^{*}$.
\end{definition*}

\begin{definition*}[Smoothness \cite{Roughgarden2012POAIC}]
Given parameters
$\lambda > 0$
and
$0 < \mu < 1$,
a BGND game is said to be
\emph{$(\lambda, \mu)$-smooth}
if
\[
\sum_{i \in [N]} C_{i}(t_{i}; (\sigma_{i}^{*}(t), a_{-i}))
\; \leq \;
\lambda \cdot \mathtt{OPT}(t) + \mu \cdot \sum_{i \in [N]} C_{i}(t'_{i}, a)
\]
for every type profiles
$t, t' \in T$
and
action profile
$a \in A^{t'}$.
\end{definition*}

\begin{definition*}[Potential Function]
A function
$\Phi: S \mapsto \mathbb{R}_{\geq 0}$
is said to be a \emph{potential function} of the BGND game if
\[
\Phi(s) - \Phi(s_{i}', s_{-i})
\; = \;
C_{i}(s) - C_{i}(s_{i}', s_{-i})
\]
for every
strategy profile
$s \in S$,
player
$i \in [N]$,
and
strategy $s'_{i} \in S_{i}$.
The potential function $\Phi(\cdot)$ is said to be \emph{$K$-bounded} for a
parameter
$K \geq 1$
if
$\Phi(s)
\leq
C(s)
\leq
K \cdot \Phi(s)$
for every strategy profile
$s \in S$.
\end{definition*}

\sloppy
\begin{definition*}[$(\underline{\eta}, \overline{\eta})$-Estimation]
Given real parameters
$\underline{\eta}, \overline{\eta} \geq 1$,
a value $x$ is said to be an
\emph{$(\underline{\eta}, \overline{\eta})$-estimation}
of the expected cost share
$\mathfrak{f}_{i, e}(a_{i}; s_{-i})$
(resp.,
$\mathfrak{f}_{i, e}(+; s_{-i})$)
if it satisfies
$x / \underline{\eta} \; \leq \;
\mathfrak{f}_{i, e}(a_{i}; s_{-i})
\; \leq \;
x \cdot \overline{\eta}$
(resp.,
$x / \underline{\eta} \; \leq \;
\mathfrak{f}_{i, e}(+; s_{-i})
\; \leq \;
x \cdot \overline{\eta}$).
We typically denote this estimation $x$ by
$\widehat{\mathfrak{f}}_{i, e}(a_{i}; s_{-i})$
(resp.,
$\widehat{\mathfrak{f}}_{i, e}(+; s_{-i})$).
The BGND game is said to be \emph{poly-time
$(\underline{\eta}, \overline{\eta})$-estimable}
if for every player
$i \in [N]$
and strategy profile
$s_{-i} \in S_{-i}$,
there exists an algorithm which runs in time
$\operatorname{poly}(N, q, |T_{1}|, \cdots, |T_{N}|)$
and outputs an
$(\underline{\eta}, \overline{\eta})$-estimation of the expected cost share
$\mathfrak{f}_{i, e}(+; s_{-i})$.
The BGND game is said to be \emph{tractable} if it is poly-time
$(\underline{\eta}, \overline{\eta})$-estimable with
$\overline{\eta} = \underline{\eta} = 1$.
\par\fussy

Fix some player
$i \in [N]$,
type
$t_{i} \in T_{i}$,
and
$(\underline{\eta}, \overline{\eta})$-estimations
$\widehat{\mathfrak{f}}_{i, e}(s_{i}(t_{i}); s_{-i})$,
$e \in E$.
With respect to these variables, let
$\widehat{C}_{i}(t_{i}; s)
=
\sum_{e \in E} \widehat{\mathfrak{f}}_{i, e}(s_{i}(t_{i}); s_{-i})$
and
$\widehat{C}_{i}(s)
=
\mathbb{E}_{t_{i} \sim p_{i}}[\widehat{C}_{i}(t_{i}; s)]$.
By the linearity of expectation, we know that
\[
\widehat{C}_{i}(t_{i}; s) / \underline{\eta}
\, \leq \,
C_{i}(t_{i}; s)
\, \leq \,
\widehat{C}_{i}(t_{i}; s) \cdot \overline{\eta}
\quad \text{and} \quad
\widehat{C}_{i}( s) / \underline{\eta}
\, \leq \,
C_{i}(s)
\, \leq \,
\widehat{C}_{i}(s) \cdot \overline{\eta} \, .
\]
Consequently, we refer to
$\widehat{C}_{i}(t_{i}; s)$
and $\widehat{C}_{i}(s)$
as
\emph{$(\underline{\eta}, \overline{\eta})$-estimations}
of
$C_{i}(t_{i}; s)$
and $C_{i}(s)$, respectively.
\end{definition*}

\begin{definition*}[Approximate Best Response]
For strategy profile
$s \in S$
and player
$i \in [N]$,
strategy
$s_{i} \in S_{i}$
is said to be an \emph{approximate best response (ABR)} of $i$ with
approximation parameter
$\chi \geq 1$
if
$C_{i}(s_{i}, s_{-i}) \leq \chi \cdot C_{i}(s_{i}', s_{-i})$
holds for any
$s'_{i} \in S_{i}$.
We may omit the explicit mention of the approximation parameter $\chi$ when it
is clear from the context.
A \emph{best response (BR)} is an ABR with approximation parameter
$\chi = 1$.
\end{definition*}

\begin{definition*}[Approximate Best Response Dynamics]
An \emph{approximate best response dynamic (ABRD)} is a procedure that starts
from a predetermined strategy profile
$s^{0} \in S$
and generates a series of strategy profiles
$s^{1}, \cdots, s^{R}$
such that for every
$1 \leq r \leq R$,
there exists some player
$i \in [N]$
satisfying
(1)
$s_{-i}^{r} = s_{-i}^{r - 1}$;
and
(2)
$s_{i}^{r}$ is an ABR of $i$ to $s_{-i}^{r - 1}$.
\end{definition*}

\section{Overview of the Main Challenges and Techniques}
\label{section:challenges-techniques}
The approximation framework presented in Section \ref{section:algorithm} for BGND problems
is inspired by the framework designed in \cite{Emek2018AGN} for full information
GND problems only in the conceptual sense that both algorithms employ approximate best
response dynamics. In a high-level, for a certain number $R$ of rounds that will be carefully
chosen in order to achieve the approximation promise, and starting from some properly
chosen initial strategy profile $s^{0}$, for each round $1 \leq r \leq R$ the strategy profile
$s^{r}$ is generated from $s^{r-1}$ in the following manner:

\begin{enumerate}

\item \label{item:overview:compute-estimations}
For every player
$i \in [N]$
and resource
$e \in E$,
compute an
$(\underline{\eta}, \overline{\eta})$-estimation
$\widehat{\mathfrak{f}}_{i, e}(+; s_{-i}^{r - 1})$
of the expected cost share
$\mathfrak{f}_{i, e}(+; s_{-i}^{r - 1})$.

\item \label{item:overview:apply-oracle}
For every player
$i \in [N]$,
construct the strategy $s'_{i}$ by mapping each type
$t_{i} \in T_{i}$
to the action
$a_{i} \in A_{i}^{t_{i}}$
computed by invoking the action $\varrho$-oracle with weight vector $w$
defined by setting
$w(e) = \widehat{\mathfrak{f}}_{i, e}(+; s_{-i}^{r - 1})$.

\item \label{item:overview:choose-player}
Choose player
$i \in [N]$
according to the game theoretic criterion presented in
Section~\ref{section:algorithm} regarding the estimations
$\widehat{C}_{i}(s^{r - 1})$
and
$\widehat{C}_{i}(s'_{i}, s_{-i}^{r - 1})$
of the type-averaged expected individual costs.
Construct $s^{r}$ by updating the strategy of the chosen player $i$ to
$s'_{i}$.

\end{enumerate} 

However, beyond the similar high-level structure, the technical construction in this paper is
entirely different from \cite{Emek2018AGN} since the incomplete information assumption
of the BGND setting exhibits new algorithmic challenges that require novel techniques.
Specifically, the main challenges that our technical analysis in this paper handles are as follows.

A first obstacle here is the difficulty in computing the estimation
$\widehat{\mathfrak{f}}_{i, e}(+; s_{-i}^{r - 1})
=
\mathbb{E}_{t_{-i} \sim p_{-i}}[f_{i, e}(+, s_{-i}(t_{-i}))]$
in step~\ref{item:overview:compute-estimations} since there are
exponentially (in $N$) many possibilities for $t_{-i}$.
Another source of difficulty in this regard is that the function
$f_{i, e}(+, s_{-i}(t_{-i}))$
is nonlinear in
$l_{e}^{s_{-i}(t_{-i})}$.
One may hope that Jensen's inequality \cite{Jensen1906SLF} can resolve this
issue, however, as we explain in the technical sections, it is not enough for
obtaining proper bounds on both $\underline{\eta}$ and $\overline{\eta}$.
This obstacle is addressed in
Section~\ref{section:efficient-estimation-cost-share} where we employ
probabilistic tools from \cite{Berend2010IBB} and using Cantelli's inequality
\cite{Savage1961PIT} to obtain the required estimation of the expression
$\mathbb{E}_{t_{-i} \sim p_{-i}}[f_{i, e}(+, s_{-i}(t_{-i}))]$.

A second obstacle is that the ABRD-based approximation framework expresses its approximation guarantees
in terms of smoothness parameters and bounded potential functions. However, neither
the smoothness parameters nor the existence of a bounded potential function are known
for the BGND game that we have defined here. We provide a new analysis for these two
issues in Sections \ref{section:smoothness} and \ref{section:bounded-potential}, respectively.


A third obstacle involves the stopping condition of the best response dynamics. A stopping
condition for the full information case, via the smoothness framework, was developed by
\cite{Roughgarden2015IRP} (showing that if the current outcome in a best response dynamics
is far from optimal there must exist a player whose best response significantly improves his
own utility). For the Bayesian case, to the best of our knowledge, no such general stopping
condition was known prior to the current paper. In fact, the smoothness framework for the
Bayesian case which was developed in \cite{Roughgarden2012POAIC} did not include any
results on best response dynamics. One specific technical difficulty is that Bayesian smoothness
is defined in \cite{Roughgarden2012POAIC} w.r.t.~a deviation to the optimal choice function
rather than to a best response. This obstacle is resolved in Section \ref{section:bound-bcr}
where we provide such a stopping condition by proving that if the outcome of the current
step of the ABRD in the Bayesian case is far from optimal, there must exist a player whose
approximate best response must significantly improve her utility.


A fourth obstacle regards the output of the algorithm, once the ABRD terminates. Although we
prove that there exists at least one strategy profile $s^{r}$, $1 \leq r \leq R$, with a sufficiently
small social cost $C(s^{r})$, we do not know how to find it. In particular, we wish to emphasize
that we cannot simply evaluate the social cost function $C(\cdot)$
(see Eq.~\eqref{formula_social_cost_function_in_BGND}) due to the exponential number of type
profiles. This obstacle does not exist in \cite{Emek2018AGN} where they can explicitly go over all
steps of the full information ABRD and find the exact step whose outcome has minimal cost. To
resolve this issue, we output the last strategy profile $s^{R}$ generated in the ABRD and bound
its loss. This is described in Section~\ref{section:bound-bcr}.

Our technical constructions and our analysis employ various techniques from algorithmic game theory, demonstrating once again (as in~\cite{Emek2018AGN}) the usefulness of this literature as a toolbox for algorithmic constructions that, on the face of it, have nothing to do with selfish agents. In particular, in this paper (and as assumed in the literature on oblivious routing \cite{gupta2006oblivious, Englert2009ORL, Shi2015ROI}), we construct an algorithm that receives a correct input and outputs routing tables that the agents are going to follow without issues of selfish deviations.

\section{The Algorithm}
\label{section:algorithm}
In this part, we present an algorithm, which is referred to as \texttt{Bayes-ABRD}, for a given BGND problem $\mathcal{P}$. The algorithm is assumed to have free access to an action $\varrho$-oracle for $\mathcal{P}$, which is denoted by $\mathcal{O}_{\mathcal{P}}$.

With an input instance
$\mathcal{I} =
\left\langle
N,
E,
\lbrace T_{i}, p_{i} \rbrace_{i \in [N]},
\lbrace \xi_{e, j} \rbrace_{e \in E, j \in [q]},
\lbrace \alpha_{j} \rbrace_{j \in [q]}
\right\rangle$,
the first step of the algorithm is to (conceptually) construct a BGND game,
and choose a tuple of parameters
$(\lambda, \mu, K, \underline{\eta}, \overline{\eta})$
such that the BGND game
\begin{enumerate}
\item is $(\lambda, \mu)$-smooth with $\varrho (\underline{\eta}\overline{\eta})^{2} \mu < 1$, 
\item has a potential function $\Phi$ that is $K$-bounded,
\item is poly-time $(\underline{\eta}, \overline{\eta})$-estimable.
\end{enumerate}
The existence and exact values of the parameters in this tuple are presented in the following sections. In particular, the smoothness parameters $(\lambda, \mu)$ are analyzed in Section \ref{section:smoothness}, the potential function is established in Section \ref{section:bounded-potential}, and the estimation parameters $(\underline{\eta}, \overline{\eta})$ are specified in Section \ref{section:efficient-estimation-cost-share}.

\begin{lemma} \label{lemma:efficient-approximation-bar}
For any $i \in [N]$ and any $s_{-i} \in S_{-i}$,
there exists a $\operatorname{poly}(|E|, N, q, \lbrace |T_{i}| \rbrace_{i \in [N]})$-time procedure which generates a strategy
$s_{i} \in S_{i}$
and the corresponding $(\underline{\eta}, \overline{\eta})$-estimation $\widehat{C}_{i}(s_{i}, s_{-i})$ 
of the individual costs
such that
$\widehat{C}_{i}(s_{i}, s_{-i})
\leq
\varrho \cdot \underline{\eta}\cdot C_{i}(s_{i}', s_{-i})$
for any
$s_{i}' \in S_{i}$.
This means in particular that $s_{i}$ is an ABR
of $i$ to $s_{-i}$ with approximation parameter $\varrho \cdot
\underline{\eta}\overline{\eta}$.\footnote{%
All subsequent occurrences of the term ABR (and
ABRD) share the same approximation parameter $\varrho \underline{\eta}\overline{\eta}$, hence we
may refrain from mentioning this parameter explicitly.
}
\end{lemma}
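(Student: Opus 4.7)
The plan is to combine the poly-time $(\underline{\eta}, \overline{\eta})$-estimability of the BGND game with the action $\varrho$-oracle $\mathcal{O}_{\mathcal{P}}$ in a direct way, so that the oracle is used to minimize an explicit weighted cost whose weights come from the cost-share estimations. Concretely, first I would compute, for every resource $e \in E$, an $(\underline{\eta}, \overline{\eta})$-estimation $\widehat{\mathfrak{f}}_{i,e}(+; s_{-i})$ of the expected cost share $\mathfrak{f}_{i,e}(+; s_{-i})$. By the tractability assumption this takes $\operatorname{poly}(N, q, \{|T_{j}|\}_{j \in [N]})$ time per resource, hence $\operatorname{poly}(|E|, N, q, \{|T_{j}|\}_{j \in [N]})$ time in total. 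These estimations define a weight vector $w \in \mathbb{R}_{\geq 0}^{E}$ with $w(e) = \widehat{\mathfrak{f}}_{i,e}(+; s_{-i})$.

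Next, I would construct the strategy $s_{i} \in S_{i}$ by invoking $\mathcal{O}_{\mathcal{P}}$ once for every type $t_{i} \in T_{i}$ with weight vector $w$, setting $s_{i}(t_{i})$ to be the returned action in $A_{i}^{t_{i}}$. This adds a factor of $|T_{i}|$ oracle calls, so the total running time remains polynomial in the required parameters (assuming a polynomial-time implementation of $\mathcal{O}_{\mathcal{P}}$). Finally, I would output the estimation $\widehat{C}_{i}(s_{i}, s_{-i}) = \mathbb{E}_{t_{i} \sim p_{i}}\bigl[\sum_{e \in s_{i}(t_{i})} \widehat{\mathfrak{f}}_{i,e}(+; s_{-i})\bigr]$, which is computable in polynomial time directly from the $\widehat{\mathfrak{f}}_{i,e}(+; s_{-i})$ values and from $s_{i}$.

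The advertised bound then follows from a short chain of inequalities. Fix any competitor strategy $s_{i}' \in S_{i}$. For every type $t_{i} \in T_{i}$, the $\varrho$-oracle guarantee, applied to $s_{i}'(t_{i}) \in A_{i}^{t_{i}}$, gives $\sum_{e \in s_{i}(t_{i})} w(e) \leq \varrho \cdot \sum_{e \in s_{i}'(t_{i})} w(e)$, which in our notation reads $\widehat{C}_{i}(t_{i}; (s_{i}, s_{-i})) \leq \varrho \cdot \widehat{C}_{i}(t_{i}; (s_{i}', s_{-i}))$. Combining this with the estimation property $\widehat{C}_{i}(t_{i}; (s_{i}', s_{-i})) \leq \underline{\eta} \cdot C_{i}(t_{i}; (s_{i}', s_{-i}))$ and taking expectation over $t_{i} \sim p_{i}$ delivers the stated inequality $\widehat{C}_{i}(s_{i}, s_{-i}) \leq \varrho \cdot \underline{\eta} \cdot C_{i}(s_{i}', s_{-i})$. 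The ABR claim then follows from the complementary direction $C_{i}(s_{i}, s_{-i}) \leq \overline{\eta} \cdot \widehat{C}_{i}(s_{i}, s_{-i})$, giving $C_{i}(s_{i}, s_{-i}) \leq \varrho \cdot \underline{\eta} \overline{\eta} \cdot C_{i}(s_{i}', s_{-i})$.

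I do not anticipate any deep obstacle here, since the heavy lifting has already been done in the estimability and oracle assumptions; the lemma essentially amounts to plugging the cost-share estimations into the oracle as weights and tracking how the two multiplicative approximations compose. The one subtlety worth being explicit about is the direction of each estimation inequality, since the bound $\widehat{\cdot} \leq \underline{\eta} \cdot (\cdot)$ is needed to translate the oracle's guarantee against $s_{i}'(t_{i})$ into a statement about the true expected cost, whereas the bound $(\cdot) \leq \overline{\eta} \cdot \widehat{\cdot}$ is needed only in the last step that reinterprets $s_{i}$ as an ABR with parameter $\varrho \underline{\eta} \overline{\eta}$.
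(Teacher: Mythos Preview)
Your proposal is correct and follows essentially the same approach as the paper: compute the cost-share estimations to form a weight vector, invoke the action $\varrho$-oracle per type to build $s_{i}$, and then chain the oracle's guarantee with the $(\underline{\eta},\overline{\eta})$-estimation inequalities (in the appropriate directions) to obtain both the stated bound on $\widehat{C}_{i}(s_{i},s_{-i})$ and the ABR conclusion. The paper's proof is virtually identical in structure and detail.
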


\begin{proof}
For each player $i \in [N]$, construct the weight vector
$w_{i, s_{-i}} : E \rightarrow \mathbb{R}_{\geq 0}$
by setting $w_{i, s_{-i}}(e)$ to be the $(\underline{\eta}, \overline{\eta})$-estimation $\widehat{\mathfrak{f}}_{i, e}(+;  s_{-i})$ of the expected share. 
This weight vector can be obtained in time $\operatorname{poly}(|E|, N, q, \lbrace |T_{i}| \rbrace_{i \in [N]})$ since the BGND game is poly-time $(\underline{\eta}, \overline{\eta})$-estimable. 
By definition, for any action $a_{i}' \in A_{i}$ satisfying $e \in a_{i}'$, it holds that $\mathfrak{f}_{i, e}(a_{i}'; s_{-i}) = \mathfrak{f}_{i, e}(+; s_{-i})$. 
It implies that $w_{i, s_{-i}}(e)$ can be taken as an $(\underline{\eta}, \overline{\eta})$-estimation $\widehat{\mathfrak{f}}_{i, e}(a_{i}'; s_{-i})$ of the expected share $\mathfrak{f}_{i, e}(a_{i}'; s_{-i})$.

Then, through accessing the action $\varrho$-oracle $\mathcal{O}_{\mathcal{P}}$ for each type $t_{i} \in T_{i}$, a strategy $s_{i}$ can be found such that for any strategy $s_{i}' \in S_{i}$, 
\begin{MathMaybe}
\sum_{e \in E}\widehat{\mathfrak{f}}_{i, e}(s_{i}(t_{i}); s_{-i}) \, = \, \sum_{e \in s_{i}(t_{i})} w_{i, s_{-i}}(e) \, \leq \, \varrho \cdot \sum_{e \in s_{i}'(t_{i})} w_{i, s_{-i}}(e) \, \leq \, \varrho \cdot \underline{\eta} \sum_{e \in E}\mathfrak{f}_{i, e}(s_{i}'(t_{i}); s_{-i}) \, ,
\end{MathMaybe}
which means that $\widehat{C}_{i}(t_{i}; (s_{i}, s_{-i})) \leq \varrho \cdot \underline{\eta} \cdot C_{i}(t_{i}; (s_{i}', s_{-i}))$.

By the linearity of the expectation, $\sum_{t_{i} \in T_{i}}p_{i}(t_{i})\sum_{e \in s_{i}(t_{i})} w_{i, s_{-i}}(e)$ gives the desired $(\underline{\eta}, \overline{\eta})$-estimation $\widehat{C}_{i}(s_{i}, s_{-i})$, and for any $s_{i}' \in S_{i}$, it holds that 
\begin{MathMaybe}
\widehat{C}_{i}(s_{i}, s_{-i}) \; = \; \sum_{t_{i} \in T_{i}}p_{i}(t_{i}) \cdot \widehat{C}_{i}(t_{i}; (s_{i}, s_{-i})) \; \leq \; \varrho \underline{\eta} \sum_{t_{i} \in T_{i}}p_{i}(t_{i}) \cdot C_{i}(t_{i}; (s_{i}', s_{-i})) \; \leq \; \varrho \underline{\eta} C_{i}(s_{i}', s_{-i}) \, .
\end{MathMaybe}
\end{proof}

Employing the procedure promised by
Lemma \ref{lemma:efficient-approximation-bar}, \texttt{Bayes-ABRD} simulates an
ABRD of at most $R$ rounds
$s^{0}, s^{1}, \dots$
for the BGND game induced by $\mathcal{I}$. 
Here $R$ is a positive integer depending on the tuple $(\lambda, \mu, K, \underline{\eta}, \overline{\eta})$, and its exact value is also deferred to the following parts (Section \ref{section:bound-bcr}). 
The ABRD simulated in our algorithm is done as follows.

Each player $i$ chooses her initial strategy $s_{i}^{0}$ by taking each $s_{i}^{0}(t_{i})$ to be the action generated by $\mathcal{O}_{\mathcal{P}}$ for type $t_{i}$ with respect to the weight vector $w^{0}$ defined by setting
$w^{0}(e) = \sum_{j \in [q]}\xi_{e, j}$, that is, as if $i$ is playing alone.
The obtained strategy $s_{i}^{0}$ is broadcast by player $i$ to all the other players such that the full strategy profile $s^{0}$ is known by every player. 
Assuming that $s^{r - 1}$,
$1 \leq r \leq R$,
was already constructed and known by all the players, $s^{r}$ is obtained as follows.
Every player $i \in [N]$ employs the procedure promised by
Lemma \ref{lemma:efficient-approximation-bar} to generate an ABR $\widehat{s}_{i}^{\, r - 1}$ to
$s_{-i}^{r - 1}$, 
and computes
$\Delta_{i}^{r}
=
\widehat{C}_{i}(s^{r - 1}) -
(\underline{\eta}\overline{\eta}) \cdot \widehat{C}_{i}(\widehat{s}_{i}^{\, r - 1}, s_{-i}^{r - 1})$. 
Both the strategy $\widehat{s}_{i}^{\, r - 1}$ and the value $\Delta_{i}^{r}$ are broadcast to all the other players.
If
$\Delta_{i}^{r} \leq 0$
for all
$i \in [N]$,
then the ABRD stops, and every player $i$ sets
$s_{i}^{r} = s_{i}^{r - 1}$;
in this case, we say that the ABRD \emph{converges}.
Otherwise, fix
$\Delta^{r} = \sum_{i \in [N]} \Delta_{i}^{r}$
and choose some player
$i' \in [N]$
so that
\begin{equation}
\label{formula_condition_of_being_selected_for_updating_the_strategy}
\Delta_{i'}^{r} > 0
\quad \text{and} \quad
\Delta_{i'}^{r} \geq \frac{1}{N}\Delta^{r}
\end{equation}
to update her strategy, setting
$s^{r} = (\widehat{s}_{i'}^{\, r - 1}, s_{-i'}^{r - 1})$
(the existence of such a player is guaranteed by the pigeonhole principle, and ties are always broken by choosing the player with the smallest index). Such an update can be performed by each player in a distributed manner, as every player has the knowledge of the full vectors $\lbrace s_{i}^{r} \rbrace_{i \in [N]}$ and $\lbrace \Delta_{i}^{r} \rbrace_{i \in [N]}$.

When the ABRD terminates (either because it has reached round
$r = R$
or because it converges), \texttt{Bayes-ABRD} outputs the strategy generated in the last round.

\begin{remark}
Note that \texttt{Bayes-ABRD} is designed for computing the strategy profile, not for invoking the strategies to decide the actions in real-time. All the operations of \texttt{Bayes-ABRD}, including broadcasting the strategy $\widehat{s}_{i}^{\, r - 1}$ and the value $\Delta_{i}^{r}$ for every player $i$ in every round $r \in [R]$, are carried out in a ``precomputing stage'' without seeing the realized type profile. The decision making that happens in real-time does not involve any further communication.
\end{remark}

\section{Bounding the BCR with Game Theoretic Parameters}
\label{section:bound-bcr}
\begin{lemma}\label{lemma_goodness_of_best_response_wrp_each_specific_type}
For every player $i$ and every strategy profile $s$, if $s_{i}'$ is the BR of $i$ to $s$, then 
\begin{MathMaybe}
C_{i}\big(t_{i}; (s_{i}', s_{-i}) \big) \, \leq \, \mathbb{E}_{t_{-i} \sim p_{-i}}\Big[C_{i}\big(t_{i}; (a_{i}, s_{-i}(t_{-i})) \big) \Big]
\end{MathMaybe}
holds for every type $t_{i}$ and every action $a_{i} \in A_{i}^{t_{i}}$.
\end{lemma}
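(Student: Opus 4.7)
The plan is to use a simple one-type swap argument that exploits the fact that strategies in $S_{i}$ are functions from $T_{i}$ to feasible actions and can therefore be modified pointwise at any single type. Fix an arbitrary type $t_{i}^{\star} \in T_{i}$ (with $p_{i}(t_{i}^{\star}) > 0$; types of zero probability can be handled trivially by a side argument) and an action $a_{i} \in A_{i}^{t_{i}^{\star}}$. First I would construct an auxiliary strategy $\tilde{s}_{i} \in S_{i}$ defined by $\tilde{s}_{i}(t_{i}) = s_{i}'(t_{i})$ for every $t_{i} \neq t_{i}^{\star}$ and $\tilde{s}_{i}(t_{i}^{\star}) = a_{i}$. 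This is feasible precisely because $a_{i} \in A_{i}^{t_{i}^{\star}}$ and because strategies may be prescribed independently on each type.

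Next I would evaluate the difference $C_{i}(\tilde{s}_{i}, s_{-i}) - C_{i}(s_{i}', s_{-i})$ by unfolding the definition $C_{i}(s_{i}, s_{-i}) = \mathbb{E}_{t_{i} \sim p_{i}}[C_{i}(t_{i}; (s_{i}, s_{-i}))]$ and observing that the two strategies coincide on every $t_{i} \neq t_{i}^{\star}$, so the corresponding summands cancel, leaving
\[
C_{i}(\tilde{s}_{i}, s_{-i}) - C_{i}(s_{i}', s_{-i})
\; = \;
p_{i}(t_{i}^{\star}) \cdot \Bigl( \mathbb{E}_{t_{-i} \sim p_{-i}}\bigl[ C_{i}(t_{i}^{\star}; (a_{i}, s_{-i}(t_{-i}))) \bigr] - C_{i}(t_{i}^{\star}; (s_{i}', s_{-i})) \Bigr).
\]
Since $s_{i}'$ is a best response to $s_{-i}$, the left-hand side is non-negative; dividing by $p_{i}(t_{i}^{\star}) > 0$ yields the claimed inequality.

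I do not anticipate a genuine obstacle. The only conceptual subtlety is that the best-response property is \emph{a priori} a statement about the type-averaged expected cost $C_{i}(s_{i}, s_{-i})$ rather than the type-specified cost $C_{i}(t_{i}; (s_{i}, s_{-i}))$; the swap argument is exactly what upgrades it into a pointwise, per-type minimization against any fixed feasible action, which is what subsequent stopping-condition arguments in Section~\ref{section:bound-bcr} will need.
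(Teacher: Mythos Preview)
Your proposal is correct and takes essentially the same approach as the paper: both argue by swapping $s_{i}'$ at a single type with the given action $a_{i}$ and invoking the best-response property on the resulting strategy. The paper phrases it as a proof by contradiction while you phrase it directly, but the underlying mechanism is identical.
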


\begin{proof}
Suppose that there exists a type $t_{i}'$ and an action $a_{i}' \in A_{i}^{t_{i}'}$ such that $C_{i}\big(t_{i}; (s_{i}', s_{-i}) \big) > \mathbb{E}_{t_{-i} \sim p_{-i}}\Big[C_{i}\big(t_{i}; (a_{i}, s_{-i}(t_{-i})) \big) \Big]$. Now construct a new strategy $s_{i}''$ of player $i$ which maps every type $t_{i} \neq t_{i}'$ to the same action as $s_{i}'$, and maps $t_{i}'$ to $a_{i}'$. Then
\begin{align*}
C_{i}(s_{i}'', s_{-i}) \; 
 =& \; \mathbb{E}_{t_{i} \sim p_{i}}\Big[C_{i}\big(t_{i}; (s_{i}'', s_{-i}) \big) \Big] \\
 =& \; \sum_{t_{i} \neq t_{i}'}p_{i}(t_{i})C_{i}\big(t_{i}; (s_{i}'', s_{-i}) \big) + p_{i}(t_{i}')\mathbb{E}_{t_{-i} \sim p_{-i}}\Big[C_{i}\big(t_{i}; (a_{i}, s_{-i}(t_{-i})) \big) \Big] \\
 <& \; \sum_{t_{i} \neq t_{i}'}p_{i}(t_{i})C_{i}\big(t_{i}; (s_{i}'', s_{-i}) \big) + p_{i}(t_{i}')C_{i}(t_{i}', (s_{i}', s_{-i})) \; = \; C_{i}(s_{i}', s_{-i}) \, ,
\end{align*}
which conflicts with the assumption that $s_{i}'$ is the BR of $i$ to $s$.
\end{proof}

\begin{lemma}\label{lemma_no_gain_by_deviating_mixed_strategy_from_pure_strategy}
For a BGND game that is $(\lambda, \mu)$-smooth with $\lambda > 0$ and $0 < \mu < \frac{1}{\varrho(\underline{\eta}\overline{\eta})^{2}}$ and every strategy profile $s$, let $s_{i}'$ be the BR of each player $i$ to $s$, then 
\begin{MathMaybe}
\sum_{i \in [N]}C_{i}(s_{i}', s_{-i}) \; \leq \; \lambda \cdot \mathbb{E}_{t \in T}\left[\mathtt{OPT}(t)\right] + \mu \cdot C(s) \, .
\end{MathMaybe}
\end{lemma}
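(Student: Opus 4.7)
The plan is to combine the best-response guarantee of Lemma~\ref{lemma_goodness_of_best_response_wrp_each_specific_type} with the $(\lambda,\mu)$-smoothness inequality via a ``decoupling'' trick that introduces an independent copy of the type profile.

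First I would fix a player $i$, a type $t_i\in T_i$, and a fresh type profile $\tilde t=(t_i,\tilde t_{-i})$ whose $-i$ component is drawn from $p_{-i}$ independently of everything else. Since $\sigma_i^*(\tilde t)\in A_i^{\tilde t_i}=A_i^{t_i}$, it is a legal choice of the action $a_i$ in Lemma~\ref{lemma_goodness_of_best_response_wrp_each_specific_type}, which gives
$$C_i(t_i;(s_i',s_{-i})) \;\le\; \mathbb{E}_{t_{-i}\sim p_{-i}}\bigl[C_i\bigl(t_i;(\sigma_i^*(\tilde t),s_{-i}(t_{-i}))\bigr)\bigr].$$
The left-hand side does not depend on $\tilde t_{-i}$, so I can average over $\tilde t_{-i}\sim p_{-i}$ and then over $t_i\sim p_i$, turning the LHS into $C_i(s_i',s_{-i})$ and bundling $(t_i,\tilde t_{-i})$ into a single $\tilde t\sim p$:
$$C_i(s_i',s_{-i}) \;\le\; \mathbb{E}_{\tilde t\sim p}\,\mathbb{E}_{t_{-i}\sim p_{-i}}\bigl[C_i\bigl(\tilde t_i;(\sigma_i^*(\tilde t),s_{-i}(t_{-i}))\bigr)\bigr].$$

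Next I would sum over $i\in[N]$. The $(N{-}1)$-tuple dummy $t_{-i}$ in each summand has distribution $p_{-i}$, so I can couple all of them by drawing a single $t\sim p$ (independent of $\tilde t$) and projecting to $t_{-i}$; marginals are preserved, so by linearity of expectation
$$\sum_i C_i(s_i',s_{-i}) \;\le\; \mathbb{E}_{\tilde t,t\sim p}\biggl[\sum_i C_i\bigl(\tilde t_i;(\sigma_i^*(\tilde t),s_{-i}(t_{-i}))\bigr)\biggr].$$
For every fixed pair $(\tilde t,t)$, the action profile $a:=s(t)=(s_1(t_1),\ldots,s_N(t_N))$ lies in $A^t$, so the inner sum matches the LHS of the smoothness inequality instantiated with $\tilde t$ in the role of the ``target'' $t$ and $t$ in the role of $t'$. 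Applying $(\lambda,\mu)$-smoothness pointwise and then taking expectation, while using the identity $\sum_i C_i(t_i;s(t))=\sum_{e\in E}F_e(l_e^{s(t)})$ and the fact that $\tilde t$ and $t$ share the marginal $p$ (so $\mathbb{E}_{\tilde t}[\mathtt{OPT}(\tilde t)]=\mathbb{E}_t[\mathtt{OPT}(t)]$), delivers the claimed inequality. The hypothesis $\mu<1/(\varrho(\underline\eta\overline\eta)^2)$ is not invoked within this proof; it is imposed only because the lemma will subsequently be chained with ABR, rather than exact BR, guarantees.

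The main conceptual obstacle is exactly this decoupling: Lemma~\ref{lemma_goodness_of_best_response_wrp_each_specific_type} only allows deviations to a fixed action $a_i$, whereas smoothness insists that the deviation be $\sigma_i^*(\tilde t)$, which depends on a global type profile. Introducing the independent copy $\tilde t$ and exploiting that the best-response bound is preserved under averaging $a_i$ over any distribution independent of $t_{-i}$ is precisely what bridges the two statements; once this is set up, the remainder is careful bookkeeping of the expectations.
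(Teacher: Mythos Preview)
Your proposal is correct and follows essentially the same route as the paper's own proof: both introduce an independent copy of the other players' types (your $\tilde t_{-i}$, the paper's $t_{-i}'$), invoke Lemma~\ref{lemma_goodness_of_best_response_wrp_each_specific_type} with $a_i=\sigma_i^*(t_i,\tilde t_{-i})$, average over that copy and over $t_i$, and then apply $(\lambda,\mu)$-smoothness. The only presentational difference is that the paper outsources the final smoothness-in-expectation step to \cite{Roughgarden2012POAIC}, whereas you unpack it explicitly by coupling all the dummies $t_{-i}$ into a single $t\sim p$ and applying smoothness pointwise; this is exactly the content of the cited result, so the two arguments coincide.
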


\begin{proof}
For every fixed $t_{-i}' \in T_{-i}$, Lemma \ref{lemma_goodness_of_best_response_wrp_each_specific_type} indicates that for every $i$, every $t_{i}$, and every $t_{-i}' \in T_{-i}$,
\begin{equation*}
C_{i}(t_{i}; (s_{i}', s_{-i})) \; \leq \; \mathbb{E}_{t_{-i} \sim p_{-i}}\Big[C_{i}\big(t_{i}; (\sigma_{i}^{*}(t_{i}, t_{-i}'), s_{-i}(t_{-i})) \big) \Big] \, ,
\end{equation*}
because the action $\sigma_{i}^{*}(t_{i}, t_{-i}')$ does not depend on $t_{-i}$. Taking the expectation over $t_{i}$, we get
\begin{align*}
\sum_{i \in [N]}C_{i}(s_{i}', s_{-i}) \; =& \; \sum_{i \in [N]}\mathbb{E}_{t_{i} \sim p_{i}}\Big[ C_{i}(t_{i}; (s_{i}', s_{-i})) \Big] \\
\leq& \; \sum_{i \in [N]} \mathbb{E}_{t_{i} \sim p_{i}}\Big[ \mathbb{E}_{t_{-i} \sim p_{-i}}\Big[C_{i}\big(t_{i}; (\sigma_{i}^{*}(t_{i}, t_{-i}'), s_{-i}(t_{-i})) \big) \Big] \Big] \\
=& \; \sum_{i \in [N]} \mathbb{E}_{t \sim p}\Big[ C_{i}\big(t_{i}; (\sigma_{i}^{*}(t_{i}, t_{-i}'), s_{-i}(t_{-i})) \big) \Big] \, .
\end{align*}
The last transition holds because the prior distribution $p$ is assumed to be a product distribution. Since the formula above holds for every $t_{-i}' \in T_{-i}$, it can be derived from the definition of expectation that
\begin{align*}
\sum_{i \in [N]}C_{i}(s_{i}', s_{-i}) \; \leq& \; \mathbb{E}_{t_{-i}' \sim p_{-i}}\bigg[ \sum_{i \in [N]} \mathbb{E}_{t \sim p}\Big[ C_{i}\big(t_{i}; (\sigma_{i}^{*}(t_{i}, t_{-i}'), s_{-i}(t_{-i})) \big) \Big] \bigg] \\
=& \; \sum_{i \in [N]}\mathbb{E}_{t \sim p}\left[ \mathbb{E}_{t_{-i}' \sim p_{-i}} \left[C_{i}\Big(t_{i}; \big(\sigma_{i}^{*}(t_{i}, t_{-i}'), s_{-i}(t_{-i}) \big) \Big) \right] \right] \, .
\end{align*}
The last transition holds because $t_{-i}'$ is independent of $t$. In \cite{Roughgarden2012POAIC}, it is proved that in a BGND game that is $(\lambda, \mu)$-smooth, it holds for any strategy profile $s$ that
\begin{equation*}
\sum_{i \in [N]}\mathbb{E}_{t \sim p}\left[ \mathbb{E}_{t_{-i}' \sim p_{-i}} \left[C_{i}\Big(t_{i}; \big(\sigma_{i}^{*}(t_{i}, t_{-i}'), s_{-i}(t_{-i}) \big) \Big) \right] \right] \; \leq \; \lambda \cdot \mathbb{E}_{t \sim p}\left[\mathtt{OPT}(t)\right] + \mu \cdot C(s) \, .
\end{equation*}
Since $\mu < \frac{1}{\varrho(\underline{\eta}\overline{\eta})^{2}} \leq 1$, this proposition follows.
\end{proof}

\begin{lemma}\label{lemma:ABRD-converge-approx-NE}
If the ABRD simulated in \texttt{Bayes-ABRD} converges at round $r$ for any
$r \in [R]$,
then the last strategy profile $s^{r}$ satisfies
\begin{equation*}
C(s^{r})
\leq
\frac{\varrho (\underline{\eta}\overline{\eta})^{2}\lambda}{1 - \varrho (\underline{\eta}\overline{\eta})^{2} \mu}
\cdot \mathbb{E}_{t \sim T}\Big[\mathtt{OPT}(t)\Big] \, .
\end{equation*}
\end{lemma}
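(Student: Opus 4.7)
The plan is to unfold what convergence means in terms of the estimated costs, translate that into an approximate best response condition on the true individual costs, and then apply the smoothness lemma (Lemma~\ref{lemma_no_gain_by_deviating_mixed_strategy_from_pure_strategy}) to close the loop and solve for $C(s^r)$.

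Concretely, I would first observe that convergence at round $r$ means $\Delta_i^r \leq 0$ for every player $i$, and that by convention $s^r = s^{r-1}$. Rewriting $\Delta_i^r \leq 0$ gives $\widehat{C}_i(s^r) \leq \underline{\eta}\overline{\eta} \cdot \widehat{C}_i(\widehat{s}_i^{r-1}, s_{-i}^r)$. Next I would pass from estimated costs back to true costs using the sandwich inequality for $(\underline{\eta},\overline{\eta})$-estimations stated in the preliminaries: on the left side $C_i(s^r) \leq \overline{\eta}\, \widehat{C}_i(s^r)$, and on the right I would invoke Lemma~\ref{lemma:efficient-approximation-bar}, which guarantees $\widehat{C}_i(\widehat{s}_i^{r-1}, s_{-i}^r) \leq \varrho\underline{\eta}\cdot C_i(s'_i, s_{-i}^r)$ for every $s'_i \in S_i$ and, in particular, for the true best response $s_i^\star$ of $i$ to $s_{-i}^r$. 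Chaining these three inequalities yields $C_i(s^r) \leq \varrho(\underline{\eta}\overline{\eta})^2 \cdot C_i(s_i^\star, s_{-i}^r)$ for every player $i$.

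Summing over $i \in [N]$, using the additivity $C(s^r) = \sum_i C_i(s^r)$, and applying Lemma~\ref{lemma_no_gain_by_deviating_mixed_strategy_from_pure_strategy} to the right-hand side, I obtain
\begin{equation*}
C(s^r) \;\leq\; \varrho(\underline{\eta}\overline{\eta})^2 \Bigl(\lambda \cdot \mathbb{E}_{t \sim p}[\mathtt{OPT}(t)] + \mu \cdot C(s^r)\Bigr).
\end{equation*}
Since the parameter tuple is chosen so that $\varrho(\underline{\eta}\overline{\eta})^2 \mu < 1$, I can rearrange to isolate $C(s^r)$ and divide through by $1 - \varrho(\underline{\eta}\overline{\eta})^2 \mu > 0$, which delivers exactly the claimed bound.

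The main thing to be careful about is the accounting of the estimation factors $\underline{\eta}$ and $\overline{\eta}$: one factor of $\overline{\eta}$ comes from converting $\widehat{C}_i(s^r)$ up to $C_i(s^r)$, one factor of $\underline{\eta}\overline{\eta}$ comes from the convergence condition, and the factor of $\varrho\underline{\eta}$ comes from the oracle-based ABR guarantee of Lemma~\ref{lemma:efficient-approximation-bar}, so that the three factors multiply to $\varrho(\underline{\eta}\overline{\eta})^2$. Beyond tracking these constants and verifying that the invocation of Lemma~\ref{lemma_no_gain_by_deviating_mixed_strategy_from_pure_strategy} is legitimate (which it is, since its hypothesis $\mu < 1/(\varrho(\underline{\eta}\overline{\eta})^2)$ is exactly the regime enforced by \texttt{Bayes-ABRD}), the argument is a short algebraic rearrangement with no further obstacles.
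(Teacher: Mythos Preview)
Your proposal is correct and follows essentially the same argument as the paper: convert the convergence condition on estimated costs into a per-player bound via the estimation sandwich and Lemma~\ref{lemma:efficient-approximation-bar}, sum over players, apply Lemma~\ref{lemma_no_gain_by_deviating_mixed_strategy_from_pure_strategy}, and rearrange. The only cosmetic difference is that the paper chains the inequalities after summing over $i$ rather than per player, and writes $\widehat{s}_i^{\,r}$ (the ABR to $s^r$) where you write $\widehat{s}_i^{\,r-1}$; since $s^r = s^{r-1}$ at convergence, these coincide.
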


\begin{proof}
Recalling that we use $s_{i}'$ and $\widehat{s}_{i}^{\, r}$ to respectively represent the BR and ABR of player $i$ to
$s^{r}$, we observe that
\begin{align*}
C(s^{r})
\; = \;
\sum_{i}C_{i}(s^{r}) \;
\leq & \;
\overline{\eta} \sum_{i}\widehat{C}_{i}(s^{r}) \\
\leq & \;
\overline{\eta} (\underline{\eta}\overline{\eta}) \sum_{i} \widehat{C}_{i}(\widehat{s}_{i}^{\, r},
s_{-i}^{r}) \\
\leq & \;
\varrho (\underline{\eta}\overline{\eta})^2 \cdot \sum_{i} C_{i}(s_{i}', s_{-i}^{r}) \\
\leq & \;
\varrho (\underline{\eta}\overline{\eta})^2 (\lambda \cdot \mathbb{E}_{t \sim T}\Big[\mathtt{OPT}(t)\Big] + \mu \cdot C(s^{r})) \, ,
\end{align*}
where
the second transitions follow from the definition of the $(\underline{\eta}, \overline{\eta})$-estimation of the individual cost, 
the third transition holds since the ABRD converges at round $r$,
the fourth transition holds following Lemma
\ref{lemma:efficient-approximation-bar},
and
the fifth transition follows from Lemma \ref{lemma_no_gain_by_deviating_mixed_strategy_from_pure_strategy}.
\end{proof}

\begin{lemma}\label{lemma:optimal-reply-guarantees-approx-ratio}
The initial strategy profile $s^{0}$ of \texttt{Bayes-ABRD} satisfies
$C(s^{0}) \leq \varrho \cdot N^{\alpha_{\max} - 1} \cdot \mathbb{E}_{t \sim T}\Big[\mathtt{OPT}(t)\Big]$.
\end{lemma}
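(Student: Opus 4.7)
The plan is to exploit the fact that $s^{0}$ is built by running the $\varrho$-oracle with the weight vector $w^{0}(e)=\sum_{j\in[q]}\xi_{e,j}=F_{e}(1)$, i.e.\ each player acts as if playing alone. I would fix an arbitrary realized type profile $t\in T$ and prove the bound $\sum_{e\in E}F_{e}(l_{e}^{s^{0}(t)})\leq \varrho\cdot N^{\alpha_{\max}-1}\cdot \mathtt{OPT}(t)$ pointwise in $t$, after which the lemma follows immediately by taking expectation with respect to $t\sim p$.

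The argument hinges on two elementary inequalities linking the actual cost function $F_{e}(l)=\sum_{j\in[q]}\xi_{e,j}\cdot l^{\alpha_{j}}$ to the linear surrogate $l\mapsto l\cdot w^{0}(e)$ used by the oracle. First, since loads are integers bounded by $N$ and each $\alpha_{j}\geq 1$, for every integer $1\leq l\leq N$ we have $l^{\alpha_{j}}\leq N^{\alpha_{\max}-1}\cdot l$, so $F_{e}(l)\leq N^{\alpha_{\max}-1}\cdot l\cdot w^{0}(e)$ (and the inequality trivially holds for $l=0$). Second, for every integer $l\geq 1$ and every $j$, $l\leq l^{\alpha_{j}}$, hence $l\cdot w^{0}(e)\leq F_{e}(l)$ (again trivially for $l=0$). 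These two inequalities sandwich the linear quantity $l\cdot w^{0}(e)$ between $F_{e}(l)/N^{\alpha_{\max}-1}$ and $F_{e}(l)$.

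With these in hand I would proceed as follows. Let $\sigma^{*}(t)\in A^{t}$ be an optimal action profile for $t$, so $\mathtt{OPT}(t)=\sum_{e}F_{e}(l_{e}^{\sigma^{*}(t)})$. Applying the upper sandwich to $s^{0}(t)$ and then rewriting the sum by players,
\begin{equation*}
\sum_{e\in E}F_{e}\bigl(l_{e}^{s^{0}(t)}\bigr)\;\leq\;N^{\alpha_{\max}-1}\sum_{e\in E}l_{e}^{s^{0}(t)}\cdot w^{0}(e)\;=\;N^{\alpha_{\max}-1}\sum_{i\in[N]}\sum_{e\in s_{i}^{0}(t_{i})}w^{0}(e).
\end{equation*}
Since $\sigma_{i}^{*}(t)\in A_{i}^{t_{i}}$ is a feasible action for type $t_{i}$ and $s_{i}^{0}(t_{i})$ is produced by the $\varrho$-oracle on weights $w^{0}$, the oracle guarantee gives $\sum_{e\in s_{i}^{0}(t_{i})}w^{0}(e)\leq \varrho\cdot \sum_{e\in\sigma_{i}^{*}(t)}w^{0}(e)$ for every $i$. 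Summing over $i$ converts the right-hand side back to $\varrho\sum_{e}l_{e}^{\sigma^{*}(t)}\cdot w^{0}(e)$, and then the lower sandwich inequality $l\cdot w^{0}(e)\leq F_{e}(l)$ bounds this by $\varrho\cdot \mathtt{OPT}(t)$. Chaining the three estimates yields the pointwise bound, and taking expectation over $t\sim p$ concludes the proof.

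I do not anticipate a real obstacle here; the only mild subtlety is remembering that two inequalities, in opposite directions, are needed between $F_{e}(l)$ and $l\cdot w^{0}(e)$ — the loose one (losing the $N^{\alpha_{\max}-1}$ factor) to upper-bound the cost of $s^{0}$, and the tight one to recover $\mathtt{OPT}(t)$ from the optimum's linear weight. Both hold because loads are integers in $\{0,1,\dots,N\}$ and each $\alpha_{j}\geq 1$, so the argument is entirely routine once this observation is in place.
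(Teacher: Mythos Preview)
Your proposal is correct and follows essentially the same approach as the paper: both prove the bound pointwise in $t$ by (i) upper-bounding $F_{e}(l)$ by $N^{\alpha_{\max}-1}\cdot l\cdot w^{0}(e)$ to pass from $C(s^{0})$ to the linear weight, (ii) applying the $\varrho$-oracle guarantee per player against $\sigma_{i}^{*}(t)$, and (iii) using $l\cdot w^{0}(e)\le F_{e}(l)$ to recover $\mathtt{OPT}(t)$. Your ``sandwich'' framing is a tidy repackaging of exactly the inequalities the paper writes out line by line.
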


\begin{proof}
The construction of $s^{0}$ guarantees that
\[
\sum_{e \in s_{i}^{0}(t_{i})} \sum_{j \in [q]} \xi_{e, j}
\, \leq \,
\varrho \cdot \sum_{e \in \sigma_{i}^{*}(t_{i}, t_{-i})} \sum_{j \in [q]} \xi_{e, j} 
\]
holds for any $i$, any $t_{i}$, and any $t_{-i}$. 
It implies that,
\begin{align*}
\sum_{i \in [N] } \sum_{e \in s_{i}^{0}(t_{i})} \sum_{j \in [q]}\xi_{e, j} 
\, \leq & \,
\varrho \cdot \sum_{i \in [N]} \sum_{e \in \sigma_{i}^{*}(t_{i}, t_{-i})} \sum_{j \in [q]} \xi_{e, j}\\
\leq & \,
\varrho \cdot \sum_{i \in [N]} \sum_{e \in \sigma_{i}^{*}(t_{i}, t_{-i})} \sum_{j \in [q]} \xi_{e, j} \Big(l_{e}^{\sigma^{*}(t_{i}, t_{-i})}\Big)^{\alpha_{j} - 1}\\
= & \,
\varrho \cdot \mathtt{OPT}(t_{i}, t_{-i}) \, ,
\end{align*}
where the second transition holds because $l_{e}^{\sigma^{*}(t)} \in \mathbb{Z}_{\geq 1}$ for any $e \in \sigma_{i}^{*}(t)$, and $\alpha_{j} - 1 \geq 0$.
Then,
\begin{align*}
C(s^{0})
\, =& \,
\mathbb{E}_{t \sim T}\bigg[ \sum_{e \in E} \sum_{j \in [q]} \xi_{e, j} \Big(l_{e}^{s^{0}(t)}\Big)^{\alpha_{j}} \bigg]\\
= & \,
\mathbb{E}_{t \sim T}\bigg[ \sum_{e \in E} \sum_{i: e \in s_{i}^{0}(t_{i})} \sum_{j \in [q]} \xi_{e, j}\Big(l_{e}^{s^{0}(t)}\Big)^{\alpha_{j} - 1} \bigg]\\
\leq & \,
\mathbb{E}_{t \sim T}\Big[ \sum_{e \in E} \sum_{i: e \in s_{i}^{0}(t_{i})} \sum_{j \in [q]} \xi_{e, j}\cdot N^{\alpha_{j} - 1} \Big]\\
\leq & \,
N^{\alpha_{\max} - 1} \mathbb{E}_{t \sim T}\Big[ \sum_{e \in E} \sum_{i: e \in s_{i}^{0}(t_{i})} \sum_{j \in [q]} \xi_{e, j} \Big] \\
= & \,
N^{\alpha_{\max} - 1} \mathbb{E}_{t \sim T}\Big[ \sum_{i \in [N]}\sum_{e \in s_{i}^{0}(t_{i})} \sum_{j \in [q]} \xi_{e, j} \Big]  \\
\leq & \,
\varrho N^{\alpha_{\max} - 1} \cdot \mathbb{E}_{t \sim T}\Big[\mathtt{OPT}(t)\Big] \, .
\end{align*}
The assertion follows.
\end{proof}

\begin{lemma}\label{lemma_potential_function_decreases_in_each_step_of_ABRD}
For any round $r < R$ such that the ABRD does not converge at round $r + 1$, as long as the player selected for strategy update satisfies Eq.~(\ref{formula_condition_of_being_selected_for_updating_the_strategy}), we have $\Phi(s^{r}) - \Phi(s^{r + 1}) > 0$.
\end{lemma}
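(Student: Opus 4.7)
The plan is to directly exploit the defining property of the potential function, which translates a drop in $\Phi$ across a single-player strategy update into a drop in that player's individual cost. Since $s^{r+1}$ differs from $s^r$ only in the coordinate of the selected player $i'$ (with $s^{r+1} = (\widehat{s}_{i'}^{\,r}, s_{-i'}^r)$), the identity
\[
\Phi(s^r) - \Phi(s^{r+1}) \; = \; C_{i'}(s^r) - C_{i'}(\widehat{s}_{i'}^{\,r}, s_{-i'}^r)
\]
holds, so it suffices to prove strict positivity of the right-hand side.

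First I would unpack the selection condition in Eq.~(\ref{formula_condition_of_being_selected_for_updating_the_strategy}). The relevant half of it is simply $\Delta_{i'}^{r+1} > 0$, which by the definition of $\Delta_{i'}^{r+1}$ used inside \texttt{Bayes-ABRD} reads
\[
\widehat{C}_{i'}(s^r) \; > \; (\underline{\eta}\,\overline{\eta}) \cdot \widehat{C}_{i'}(\widehat{s}_{i'}^{\,r}, s_{-i'}^r).
\]
(The second half, $\Delta_{i'}^{r+1} \geq \Delta^{r+1}/N$, is not needed for mere positivity; presumably it will be used in a subsequent lemma to quantify the rate of decrease.)

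Next I would convert the inequality on the estimates into an inequality on the true individual costs via the two-sided bound
$\widehat{C}_i(s)/\underline{\eta} \leq C_i(s) \leq \widehat{C}_i(s)\cdot\overline{\eta}$
from the definition of $(\underline{\eta},\overline{\eta})$-estimation. Applying the upper bound $\widehat{C}_{i'}(s^r) \leq C_{i'}(s^r)\cdot\underline{\eta}$ on the left and the lower bound $\widehat{C}_{i'}(\widehat{s}_{i'}^{\,r}, s_{-i'}^r) \geq C_{i'}(\widehat{s}_{i'}^{\,r}, s_{-i'}^r)/\overline{\eta}$ on the right, I obtain
\[
\underline{\eta}\cdot C_{i'}(s^r) \; \geq \; \widehat{C}_{i'}(s^r) \; > \; (\underline{\eta}\,\overline{\eta})\cdot \widehat{C}_{i'}(\widehat{s}_{i'}^{\,r}, s_{-i'}^r) \; \geq \; \underline{\eta} \cdot C_{i'}(\widehat{s}_{i'}^{\,r}, s_{-i'}^r),
\]
and dividing through by $\underline{\eta} > 0$ yields $C_{i'}(s^r) > C_{i'}(\widehat{s}_{i'}^{\,r}, s_{-i'}^r)$. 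Combining with the potential-function identity above gives $\Phi(s^r) - \Phi(s^{r+1}) > 0$.

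I do not anticipate a real obstacle here: the whole argument is a short chaining of inequalities, and the only subtle point is to apply the estimation bound in the \emph{right} direction on each side (upper bound on the larger term, lower bound on the smaller term) so that the estimation slack $\underline{\eta}\,\overline{\eta}$ is exactly absorbed by the factor $\underline{\eta}\,\overline{\eta}$ built into the definition of $\Delta_{i'}^{r+1}$. This is precisely why the algorithm multiplies $\widehat{C}_{i'}(\widehat{s}_{i'}^{\,r}, s_{-i'}^r)$ by $\underline{\eta}\,\overline{\eta}$ in its acceptance criterion rather than comparing the two estimates raw.
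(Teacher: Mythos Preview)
Your proof is correct and follows essentially the same approach as the paper: both use the potential-function identity to reduce to showing $C_{i'}(s^r) > C_{i'}(\widehat{s}_{i'}^{\,r}, s_{-i'}^r)$, and both establish this by combining the selection condition $\widehat{C}_{i'}(s^r) > (\underline{\eta}\overline{\eta})\,\widehat{C}_{i'}(\widehat{s}_{i'}^{\,r}, s_{-i'}^r)$ with the two-sided estimation bounds so that the factor $\underline{\eta}\overline{\eta}$ exactly cancels. The only difference is the order in which the inequalities are chained, which is purely cosmetic.
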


\begin{proof}
Since the ABRD does not converge at round $r$, there exists a player $i^{r}$ who is selected to update her strategy. By the definition of the potential function, 
\begin{align*}
\Phi(s^{r}) - \Phi(s^{r + 1}) \; =& \; C_{i^{r}}(s^{r}) -
C_{i^{r}}(\widehat{s}_{i^{r}}^{\, r}, s_{-i^{r}}^{r}) \\
	\geq& \; \frac{1}{\underline{\eta}}\widehat{C}_{i^{r}}(s^{r}) - \overline{\eta}\widehat{C}_{i^{r}}(\widehat{s}_{i^{r}}^{\, r}, s_{-i^{r}}^{r}) \\
	>& \; \frac{1}{\underline{\eta}} (\underline{\eta}\overline{\eta}) \widehat{C}_{i^{r}}(\widehat{s}_{i^{r}}^{\, r}, s_{-i^{r}}^{r}) - \overline{\eta}\widehat{C}_{i^{r}}(\widehat{s}_{i^{\, r}}^{r}, s_{-i^{r}}^{r}) \\
	=& \; 0 \, .
\end{align*}
The second formula follows from the definition of the $\epsilon$-individual cost. The third one follows from Eq.~(\ref{formula_condition_of_being_selected_for_updating_the_strategy}).
\end{proof}


\begin{theorem}
\label{theorem:approx-ratio-Alg-ABRD}
Let
$Q = \frac{2 (\underline{\eta}\overline{\eta}) N}{1 - \varrho (\underline{\eta}\overline{\eta})^{2} \mu}$.
If
$R =
\left\lceil
Q \cdot \ln \left( K N^{\alpha_{\max} - 1} \right)
\right\rceil$,
then the output $s^{\texttt{out}}$ of \texttt{Bayes-ABRD} satisfies
\begin{equation*}
C(s^{\texttt{out}})
\leq
\frac{2 K \varrho (\underline{\eta}\overline{\eta})^{2} \lambda}{1 - \varrho (\underline{\eta}\overline{\eta})^{2} \mu}
\cdot \mathbb{E}_{t \sim T}\Big[\mathtt{OPT}(t)\Big] \, .
\end{equation*}
\end{theorem}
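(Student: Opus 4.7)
The plan is to partition into two exhaustive cases depending on how \texttt{Bayes-ABRD} terminates. If the ABRD converges at some round $r \leq R$, then the output equals $s^{r}$ and Lemma~\ref{lemma:ABRD-converge-approx-NE} directly gives $C(s^{r}) \leq \frac{\varrho(\underline{\eta}\overline{\eta})^{2}\lambda}{1 - \varrho(\underline{\eta}\overline{\eta})^{2}\mu}\mathbb{E}_{t \sim T}[\mathtt{OPT}(t)]$, which is within the claimed bound since $K \geq 1$. The substantive case is when the ABRD exhausts all $R$ rounds; there, I would aim to show that the potential has decayed to at most $\tau := \frac{2\varrho(\underline{\eta}\overline{\eta})^{2}\lambda}{1 - \varrho(\underline{\eta}\overline{\eta})^{2}\mu}\mathbb{E}_{t \sim T}[\mathtt{OPT}(t)]$, after which the $K$-boundedness of $\Phi$ immediately yields $C(s^{R}) \leq K\Phi(s^{R}) \leq K\tau$, matching the claim exactly.

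The crux of the argument is a geometric contraction of $\Phi$ above the threshold $\tau$: whenever $\Phi(s^{r-1}) > \tau$, it must hold that $\Phi(s^{r}) < (1 - 1/Q)\Phi(s^{r-1})$. To prove this, I would first sharpen the proof of Lemma~\ref{lemma_potential_function_decreases_in_each_step_of_ABRD} to extract the quantitative drop $\Phi(s^{r-1}) - \Phi(s^{r}) \geq \Delta_{i^{r}}^{r}/\underline{\eta}$, which, combined with the selection rule of Eq.~\eqref{formula_condition_of_being_selected_for_updating_the_strategy}, yields $\Phi(s^{r-1}) - \Phi(s^{r}) \geq \Delta^{r}/(N\underline{\eta})$. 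Second, I would lower bound $\Delta^{r} = \sum_{i}\widehat{C}_{i}(s^{r-1}) - (\underline{\eta}\overline{\eta})\sum_{i}\widehat{C}_{i}(\widehat{s}_{i}^{\,r-1}, s_{-i}^{r-1})$ by chaining the bound $\widehat{C}_{i}(\widehat{s}_{i}^{\,r-1}, s_{-i}^{r-1}) \leq \varrho\underline{\eta}\,C_{i}(s_{i}^{\mathrm{BR}}, s_{-i}^{r-1})$ of Lemma~\ref{lemma:efficient-approximation-bar} with the smoothness inequality of Lemma~\ref{lemma_no_gain_by_deviating_mixed_strategy_from_pure_strategy} and the estimation relation $C(s^{r-1}) \leq \overline{\eta}\sum_{i}\widehat{C}_{i}(s^{r-1})$. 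This produces $\Delta^{r} \geq (1 - \varrho(\underline{\eta}\overline{\eta})^{2}\mu)\sum_{i}\widehat{C}_{i}(s^{r-1}) - \varrho(\underline{\eta}\overline{\eta})\underline{\eta}\lambda\mathbb{E}_{t \sim T}[\mathtt{OPT}(t)]$. Substituting $\sum_{i}\widehat{C}_{i}(s^{r-1}) \geq \Phi(s^{r-1})/\overline{\eta}$ and using the hypothesis $\Phi(s^{r-1}) > \tau$ so that the negative term is at most half of the positive term, the contraction factor $1/Q = \frac{1 - \varrho(\underline{\eta}\overline{\eta})^{2}\mu}{2N\underline{\eta}\overline{\eta}}$ then drops out.

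To close the argument I would invoke the monotonicity of $\Phi$ provided by Lemma~\ref{lemma_potential_function_decreases_in_each_step_of_ABRD}: if $\Phi(s^{r-1}) \leq \tau$ occurs at any $r \leq R$, then $\Phi(s^{R}) \leq \tau$ and we are done; otherwise the contraction applies in every round, giving $\Phi(s^{R}) \leq \Phi(s^{0})(1 - 1/Q)^{R} \leq \Phi(s^{0})e^{-R/Q}$. Coupling Lemma~\ref{lemma:optimal-reply-guarantees-approx-ratio}, which supplies $\Phi(s^{0}) \leq C(s^{0}) \leq \varrho N^{\alpha_{\max} - 1}\mathbb{E}_{t \sim T}[\mathtt{OPT}(t)]$, with the choice $R \geq Q\ln(KN^{\alpha_{\max} - 1})$ forces $\Phi(s^{R}) \leq \varrho\mathbb{E}_{t \sim T}[\mathtt{OPT}(t)]/K$, which under the standing parameter regime ($K, \underline{\eta}\overline{\eta} \geq 1$, and $\lambda$ of constant order coming from the smoothness analysis) is itself at most $\tau$, completing the case analysis.

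The main obstacle will be the careful accounting of the $\underline{\eta}$ and $\overline{\eta}$ factors through the chain of conversions among $\widehat{C}$, $C$, and $\Phi$: each conversion trades one such factor for another, and the exponents of $\underline{\eta}\overline{\eta}$ that arise in the lower bound on $\Delta^{r}$ must align precisely with the $(\underline{\eta}\overline{\eta})^{2}$ exponents built into the definitions of $Q$ and $\tau$ in order for the contraction step to close with exactly the constant $1/Q$. A secondary delicate point is ensuring that the standing assumption $\varrho(\underline{\eta}\overline{\eta})^{2}\mu < 1$ suffices to keep $A := 1 - \varrho(\underline{\eta}\overline{\eta})^{2}\mu$ bounded away from $0$ so that $Q$ is well defined and positive.
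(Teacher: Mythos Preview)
Your proposal is correct and shares the paper's overall architecture: handle convergence via Lemma~\ref{lemma:ABRD-converge-approx-NE}, and in the non-converging case combine a geometric contraction of the potential above a threshold with the initial bound from Lemma~\ref{lemma:optimal-reply-guarantees-approx-ratio} and the $K$-boundedness of $\Phi$. The quantitative chain you outline (extracting $\Phi(s^{r-1})-\Phi(s^{r})\geq \Delta_{i^{r}}^{r}/\underline{\eta}$, then $\Delta^{r}/(N\underline{\eta})$, then lower bounding $\Delta^{r}$ via Lemma~\ref{lemma:efficient-approximation-bar} and Lemma~\ref{lemma_no_gain_by_deviating_mixed_strategy_from_pure_strategy}) matches exactly what the paper does inside its Claim~\ref{proposition_potential_function_decreases_significantly_in_the_steps_with_bad_states}.

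The one genuine variation is your choice of threshold: you trigger the contraction on $\Phi(s^{r-1})>\tau$, whereas the paper triggers on $C(s^{r})>\tau$ (its notion of a ``bad'' profile). Since $C\geq\Phi$, your hypothesis is stronger, and this lets you argue directly that $\Phi(s^{R})\leq\tau$ in all cases and conclude $C(s^{R})\leq K\tau$ without the paper's contradiction detour (their Claims~\ref{proposition_decrease_in_potential_function_when_all_states_are_bad} and~\ref{proposition_general_lower_bound_on_smoothness_parameters} are used to show ``all profiles bad'' is impossible, then they locate a not-bad $s^{r^{*}}$ and push forward via monotonicity). Your route is slightly cleaner. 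One small point to tighten: the final inequality $\varrho\,\mathbb{E}[\mathtt{OPT}]/K\leq\tau$ is not just ``$\lambda$ of constant order'' but requires precisely the smoothness fact $\lambda\geq 1-\mu$ (equivalently $\lambda/(1-\mu)\geq 1$), which the paper cites explicitly from \cite{Roughgarden2012POAIC}; you should state this rather than leave it implicit.
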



\begin{proof}
Lemma \ref{lemma:ABRD-converge-approx-NE} ensures that
the assertion holds if the ABRD simulated in \texttt{Bayes-ABRD} converges in any round 
$r \leq R$,
so it is left to analyze the case where the ABRD does not converge.
We say a profile $s^{r}$ involved in the ABRD is \emph{bad} if
\[
C(s^{r})
\, > \,
\frac{2 \varrho (\underline{\eta}\overline{\eta})^{2} \lambda}{1 - \varrho (\underline{\eta}\overline{\eta})^{2} \mu}
\cdot \mathbb{E}_{t \sim T}\Big[\mathtt{OPT}(t)\Big] \, .
\]

\begin{claim}
\label{proposition_potential_function_decreases_significantly_in_the_steps_with_bad_states}
For any
$r < R$,
if $s^{r}$ is bad, then
$\Phi(s^{r + 1}) < (1 - 1 / Q) \cdot \Phi(s^{r})$.
\end{claim}
\begin{subproof}
\qedlabel{proposition_potential_function_decreases_significantly_in_the_steps_with_bad_states}
Fix
\begin{equation}
\label{formula_difference_definition_of_nabla_tau}
d^{r}
\; = \;
\overline{\eta} \Big[
\sum_{i \in [N]} \widehat{C}_{i}(s^{r}) -
(\underline{\eta}\overline{\eta})  \sum_{i \in [N]} \widehat{C}_{i}(\widehat{s}_{i}^{\, r}, s_{-i}^{r})
\Big] \, .
\end{equation}
This means that
\begin{align*}
C(s^{r})
\; = \;
\sum_{i \in [N]} C_{i}(s^{r})
\; \leq & \;
\overline{\eta} \sum_{i \in [N]} \widehat{C}_{i}(s^{r}) \\
= & \;
\overline{\eta}(\underline{\eta}\overline{\eta})  \sum_{i \in [N]} \widehat{C}_{i}(\widehat{s}_{i}^{\, r}, s_{-i}^{r}) + d^{r} \\
\leq & \;
\varrho (\underline{\eta}\overline{\eta})^{2} \sum_{i \in [N]} C_{i}(s_{i}', s_{-i}^{r}) + d^{r} \\
\leq & \;
\varrho (\underline{\eta}\overline{\eta})^{2} \Big(\lambda \cdot \mathbb{E}_{t \sim T}\Big[\mathtt{OPT}(t)\Big] + \mu C(s^{r}) \Big) +
d^{r} \, .
\end{align*}
Therefore,
$d^{r}
\geq
\left[ 1 - \varrho (\underline{\eta}\overline{\eta})^{2} \mu \right] C(s^{r}) -
\varrho (\underline{\eta}\overline{\eta})^{2} \lambda \cdot \mathbb{E}_{t \sim T}\Big[\mathtt{OPT}(t)\Big]$,
hence, if $s^{r}$ is bad, then $d^{r}$ satisfies
\begin{equation}
\label{formula_lower_bound_on_nabla_tau}
d^{r}
\, > \,
\left[ 1 - \varrho (\underline{\eta}\overline{\eta})^{2} \mu \right] C(s^{r}) -
\frac{1 - \varrho (\underline{\eta}\overline{\eta})^{2} \mu}{2} C(s^{r})
\, = \,
\frac{1 - \varrho (\underline{\eta}\overline{\eta})^{2} \mu}{2} C(s^{r}) \, .
\end{equation}

Since the ABRD does not converge at round $r$, there exists a player $i^{r}$
being selected to update her strategy.
Recalling that the ABR of player $i$ to $s^{r}$ is
denoted by $\widehat{s}_{i}^{\, r}$, we observe that
\begin{align*}
\Phi(s^{r}) - \Phi(s^{r + 1})
\; = & \;
C_{i^{r}}(s^{r}) - C_{i^{r}}(\widehat{s}_{i^{r}}^{\, r}, s_{-i^{r}}^{r}) \\
\geq & \;
\frac{1}{\underline{\eta}} \widehat{C}_{i^{r}}(s^{r}) - \overline{\eta}
\cdot \widehat{C}_{i^{r}}(\widehat{s}_{i^{r}}^{\, r}, s_{-i^{r}}^{r}) \\
= & \;
\frac{1}{\underline{\eta}} \left[ \widehat{C}_{i^{r}}(s^{r}) - (\underline{\eta}\overline{\eta})
\widehat{C}_{i^{r}}(\widehat{s}_{i^{r}}^{\, r}, s_{-i^{r}}^{r}) \right] \\
\geq & \;
\frac{1}{\underline{\eta}} \cdot \frac{1}{N} \sum_{i \in [N]} \left[
\widehat{C}_{i}(s^{r}) - (\underline{\eta}\overline{\eta})
\widehat{C}_{i}(\widehat{s}_{i}^{\, r}, s_{-i}^{r}) \right] \\
= & \;
\frac{1}{\underline{\eta}\overline{\eta}} \cdot \frac{d^{r}}{N} \\
> & \;
\frac{1}{\underline{\eta}\overline{\eta}} \cdot \frac{1}{2N} \left[ 1 - \varrho (\underline{\eta}\overline{\eta})^{2} \mu \right] C(s^{r}) \\
\geq & \;
\frac{1}{\underline{\eta}\overline{\eta}} \cdot \frac{1}{2N} \left[ 1 - \varrho (\underline{\eta}\overline{\eta})^{2} \mu
\right] \Phi(s^{r}) \, ,
\end{align*}
where
the fourth transition follows from
Eq.~\eqref{formula_condition_of_being_selected_for_updating_the_strategy},
the fifth and sixth transitions follow from
Eq.~(\ref{formula_difference_definition_of_nabla_tau}) and
Eq.~(\ref{formula_lower_bound_on_nabla_tau}), respectively,
and
the last transition holds because the potential function is assumed to be $K$-bounded.
Therefore,
\[
\Phi(s^{r + 1})
\, < \,
\Phi(s^{r}) \left( 1 - \frac{1 - \varrho (\underline{\eta}\overline{\eta})^{2} \mu}{2 (\underline{\eta}\overline{\eta}) N} \right)
\, = \,
(1 - 1 / Q) \cdot \Phi(s^{r})
\]
as promised.
\end{subproof}

\begin{claim}
\label{proposition_decrease_in_potential_function_when_all_states_are_bad}
Assuming that all the
$R + 1$
strategy profiles in the ABRD are bad, we have
$C(s^{R}) < \varrho \cdot \mathbb{E}_{t \sim T}\Big[\mathtt{OPT}(t)\Big]$.
\end{claim}
\begin{subproof}
\qedlabel{proposition_decrease_in_potential_function_when_all_states_are_bad}
Claim~\ref{proposition_potential_function_decreases_significantly_in_the_steps_with_bad_states}
implies that if all the $R + 1$ profiles involved in the ABRD are bad, then
\[
\Phi(s^{R})
\, < \,
\left( 1 - \frac{1}{Q} \right)^{R} \Phi(s^{0})
\, = \,
\left( 1 - \frac{1}{Q} \right)^{\left\lceil Q \cdot \ln \Big( K
N^{\alpha_{\max} - 1} \Big) \right\rceil} \Phi(s^{0})
\, \leq \,
\frac{1}{K N^{\alpha_{\max} - 1}} \Phi(s^{0}) \, .
\]
By the definition of the bounded potential function and by
Lemma \ref{lemma:optimal-reply-guarantees-approx-ratio},
we have
\begin{equation*}
C(s^{R})
\, \leq \, 
K \cdot \Phi(s^{R})
\, < \, 
\frac{K \Phi(s^{0})}{K N^{\alpha_{\max} - 1}} 
\, \leq \, 
\frac{C(s^{0})}{N^{\alpha_{\max} - 1}} 
\, \leq\, 
\frac{\varrho N^{\alpha_{\max} - 1} \mathbb{E}_{t \sim T}\Big[\mathtt{OPT}(t)\Big]}{N^{\alpha_{\max} - 1}} \, ,
\end{equation*}
which completes the proof.
\end{subproof}

\begin{claim}
\label{proposition_general_lower_bound_on_smoothness_parameters}
$\varrho
<
\frac{2\varrho (\underline{\eta}\overline{\eta})^{2} \lambda}
{1 - \varrho (\underline{\eta}\overline{\eta})^{2}\mu}$.
\end{claim}
\begin{subproof}
\qedlabel{proposition_general_lower_bound_on_smoothness_parameters}
It can be inferred from \cite{Roughgarden2012POAIC} that the parameters $(\lambda, \mu)$ should satisfy $\frac{\lambda}{1 - \mu} \geq 1$ if the game is $(\lambda, \mu)$-smooth.
Therefore,
$\frac{2 \varrho (\underline{\eta}\overline{\eta})^{2} \lambda}{1 - \varrho (\underline{\eta}\overline{\eta})^{2} \mu}
>
\frac{2\varrho \lambda}{1 - \mu}
>
\varrho$.
\end{subproof}

By Claim \ref{proposition_general_lower_bound_on_smoothness_parameters} and the definition of bad strategy profiles, $C(s^{R}) < \varrho \cdot \mathbb{E}_{t \sim T}\Big[\mathtt{OPT}(t)\Big]$ implies that $s^{R}$ is not bad, which conflicts with the assumption of Claim \ref{proposition_decrease_in_potential_function_when_all_states_are_bad} that all the $R + 1$ strategy profiles are bad. This means that there exists at least one round $r^{*}$ whose corresponding strategy profile $s^{r^{*}}$ is not bad. 
Therefore,
\begin{align*}
C(s^{R}) \; \leq \; K \cdot \Phi(s^{R}) \; \leq \; K \cdot \Phi(s^{r^{*}}) \; \leq \; K \cdot C(s^{r^{*}}) \; \leq \; K \cdot \frac{2 \varrho (\underline{\eta}\overline{\eta})^{2} \lambda}{1 - \varrho (\underline{\eta}\overline{\eta})^{2} \mu}
\cdot \mathbb{E}_{t \sim T}\Big[\mathtt{OPT}(t)\Big] \, .
\end{align*}
The first transition and the third one holds because the potential function is $K$-bounded. The second transition follows from Lemma \ref{lemma_potential_function_decreases_in_each_step_of_ABRD}. The last transition holds because $s^{r^{*}}$ is not bad. This completes the proof.
\end{proof}

\section{Smoothness Parameters}
\label{section:smoothness}
In this section, we consider the case where the parameters $\varrho$, $\underline{\eta}$ and $\overline{\eta}$ are fixed, and focus on finding proper parameters $(\lambda, \mu)$ such that the BGND game is $(\lambda, \mu)$-smooth, and $\mu < 1 / [\varrho (\underline{\eta} \overline{\eta})^{2}]$.


\begin{lemma}\label{lemma_condition_for_smoothness_parameters}
For any pair of parameters $\lambda' > 0$ and $0 < \mu' < 1 \Big/ \Big[\varrho \big(\underline{\eta} \overline{\eta} \big)^{2}\Big]$, if
\begin{equation}
y \cdot (x + y)^{\alpha_{j} - 1} \; \leq \; \lambda' \cdot y^{\alpha_{j}} + \mu' \cdot x^{\alpha_{j}} \label{formula_condition_for_smoothness_parameters}
\end{equation}
holds for any $x, y \in \mathbb{R}_{\geq 0}$ and every $j \in [q]$, then the BGND game is $(\lambda', \mu')$-smooth.
\end{lemma}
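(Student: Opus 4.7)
The plan is to verify the smoothness inequality directly by pointwise application of the hypothesis, bucketing by resource and by load exponent. Fix type profiles $t,t' \in T$ and an action profile $a \in A^{t'}$, and set $b = \sigma^{*}(t)$, so that $b \in A^{t}$ and $\sum_{e \in E} F_{e}(l_{e}^{b}) = \mathtt{OPT}(t)$. The goal is to show
\begin{equation*}
\sum_{i \in [N]} C_{i}\bigl(t_{i};\,(b_{i}, a_{-i})\bigr) \;\leq\; \lambda' \cdot \mathtt{OPT}(t) + \mu' \cdot \sum_{i \in [N]} C_{i}(t'_{i}; a),
\end{equation*}
which is exactly $(\lambda',\mu')$-smoothness with the choice function $\sigma^{*}$.

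First I would expand the left-hand side using the definition of the cost share. Since $b_{i} \in A_{i}^{t_{i}}$, we have $C_{i}(t_{i};(b_{i},a_{-i})) = \sum_{e \in b_{i}} \sum_{j \in [q]} \xi_{e,j}\,\bigl(l_{e}^{(b_{i},a_{-i})}\bigr)^{\alpha_{j}-1}$. Swapping the order of summation across $i$ and $e$, and observing that for each fixed $e$ the set $\{i : e \in b_{i}\}$ has cardinality $l_{e}^{b}$, yields
\begin{equation*}
\sum_{i \in [N]} C_{i}\bigl(t_{i};\,(b_{i},a_{-i})\bigr) \;=\; \sum_{j \in [q]} \sum_{e \in E} \xi_{e,j} \sum_{i:\,e \in b_{i}} \bigl(l_{e}^{(b_{i},a_{-i})}\bigr)^{\alpha_{j}-1}.
\end{equation*}

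Next I would bound the innermost term pointwise. For any $i$ with $e \in b_{i}$, the load satisfies $l_{e}^{(b_{i},a_{-i})} = 1 + |\{j \neq i : e \in a_{j}\}| \leq 1 + l_{e}^{a} \leq l_{e}^{b} + l_{e}^{a}$, where the last step uses $l_{e}^{b} \geq 1$ which holds because $e \in b_{i}$. Hence the inner sum is at most $l_{e}^{b}\cdot(l_{e}^{b}+l_{e}^{a})^{\alpha_{j}-1}$. Now I would invoke the hypothesis of the lemma with $y = l_{e}^{b}$ and $x = l_{e}^{a}$ (both nonnegative) for each $j \in [q]$ to obtain
\begin{equation*}
l_{e}^{b}\cdot(l_{e}^{b}+l_{e}^{a})^{\alpha_{j}-1} \;\leq\; \lambda' \cdot (l_{e}^{b})^{\alpha_{j}} + \mu' \cdot (l_{e}^{a})^{\alpha_{j}}.
\end{equation*}

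Finally, substituting this bound back and recalling the form of $F_{e}$ from Eq.~\eqref{formula_complex_edge_cost_function}, the double sum collapses into
\begin{equation*}
\sum_{i \in [N]} C_{i}\bigl(t_{i};\,(b_{i},a_{-i})\bigr) \;\leq\; \lambda' \sum_{e \in E} F_{e}(l_{e}^{b}) + \mu' \sum_{e \in E} F_{e}(l_{e}^{a}) \;=\; \lambda' \cdot \mathtt{OPT}(t) + \mu' \cdot C(a).
\end{equation*}
Applying the observation $C(a) = \sum_{i} C_{i}(t'_{i};a)$ (valid since $a \in A^{t'}$) concludes the proof. I do not expect any real obstacle here: the only subtle point is the counting step that converts the sum over players $i$ with $e \in b_{i}$ into a factor of $l_{e}^{b}$ and, in parallel, produces the bound $l_{e}^{(b_{i},a_{-i})} \leq l_{e}^{b}+l_{e}^{a}$ in the exact form needed to apply the hypothesis. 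Everything else is algebra and reindexing.
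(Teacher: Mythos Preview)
Your proof is correct and follows essentially the same route as the paper's: expand the left-hand side resource by resource, bound each $l_{e}^{(b_{i},a_{-i})}$ by $l_{e}^{b}+l_{e}^{a}$, apply the hypothesis \eqref{formula_condition_for_smoothness_parameters} with $y=l_{e}^{b}$ and $x=l_{e}^{a}$, and collapse to $\lambda'\,\mathtt{OPT}(t)+\mu'\,C(a)$. The only cosmetic difference is that the paper explicitly restricts attention to resources $e$ used by $\sigma^{*}(t)$ before extending to all of $E$, whereas you handle that implicitly via the empty inner sum when $l_{e}^{b}=0$.
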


\begin{proof}
This proposition can be proved in a similar way as \cite{Roughgarden2012POAIC}. For any resource $e$ and any type $t$, we say $e \in \sigma^{*}(t)$ if there exists some player $i$ such that $e \in \sigma_{i}^{*}(t)$. 
For any type profiles $t, t'$, every action profile $a \in A^{t'}$, and every resource $e \in \sigma^{*}(t)$,
\begin{align*}
\sum_{i \in [N]} f_{i, e}(t_{i}; (\sigma_{i}^{*}(t), a_{-i})) \; =& \; \sum_{i \in [N]: e \in \sigma_{i}^{*}(t)}\sum_{j \in [q]}\xi_{e, j}\Big(l_{e}^{(\sigma_{i}^{*}(t), a_{-i})}\Big)^{\alpha_{j} - 1} \\
\leq& \; |l_{e}^{\sigma^{*}(t)}|\sum_{j \in [q]}\xi_{e, j}\Big(l_{e}^{\sigma^{*}(t)} + l_{e}^{a}\Big)^{\alpha_{j} - 1} \\
\leq& \; \sum_{j \in [q]}\xi_{e, j}\Big[\lambda' \cdot \Big(l_{e}^{\sigma^{*}(t)} \Big)^{\alpha_{j}} + \mu' \cdot \Big(l_{e}^{a} \Big)^{\alpha_{j}} \Big] \, ,
\end{align*}
where the third transition follows from Eq.~\eqref{formula_condition_for_smoothness_parameters}. Then
\begin{align*}
\sum_{i \in [N]} C_{i}(t_{i}; (\sigma_{i}^{*}(t), a_{-i})) \; =& \; \sum_{i \in [N]}\sum_{e \in E} f_{i, e}(t_{i}; (\sigma_{i}^{*}(t), a_{-i})) \\
=& \; \sum_{e \in \sigma^{*}(t)}\sum_{i \in [N]} f_{i, e}(t_{i}; (\sigma_{i}^{*}(t), a_{-i})) \\
\leq& \; \lambda' \cdot \sum_{e \in \sigma^{*}(t)}\sum_{j \in [q]}\xi_{e, j}\Big(l_{e}^{\sigma^{*}(t)} \Big)^{\alpha_{j}} + \mu' \cdot \sum_{e \in \sigma^{*}(t)} \sum_{j \in [q]}\xi_{e, j}\Big(l_{e}^{a} \Big)^{\alpha_{j}} \\
\leq& \; \lambda' \cdot \mathtt{OPT}(t) + \mu' \cdot \sum_{i \in [N]}C_{i}(t_{i}', a) \, .
\end{align*}
The second transition above holds because for any $e \notin \sigma^{*}(t)$, $f_{i, e}(t_{i}; (\sigma_{i}^{*}(t), a_{-i})) = 0$ for every player $i$. The last transition holds because $l_{e}^{a} = 0$ for any $e \notin a$, which implies that $\sum_{e \in \sigma^{*}(t)} \sum_{j \in [q]}\xi_{e, j}\Big(l_{e}^{a} \Big)^{\alpha_{j}} = \sum_{e \in \sigma^{*}(t) \cap a} \sum_{j \in [q]}\xi_{e, j}\Big(l_{e}^{a} \Big)^{\alpha_{j}}$.
\end{proof}

\begin{lemma}[\cite{Aland2006EPA}]\label{lemma_condition_on_smoothness_parameter_lambda}
For any $\mu' \in (0, \frac{1}{\varrho(\underline{\eta}\overline{\eta})^{2}} )$, setting $\lambda' = \max_{x \in \mathbb{R}_{> 0}}(x + 1)^{\alpha_{\max} - 1} - \mu' \cdot x^{\alpha_{\max}}$ satisfies Eq.~\eqref{formula_condition_for_smoothness_parameters}.
\end{lemma}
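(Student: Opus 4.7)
The plan is to verify the inequality $(\ast)$: $y(x+y)^{\alpha_j - 1} \leq \lambda' y^{\alpha_j} + \mu' x^{\alpha_j}$ for every $j \in [q]$ and all $x, y \in \mathbb{R}_{\geq 0}$. First I would handle $y = 0$ trivially (the left-hand side vanishes while the right-hand side is non-negative). For $y > 0$, I would divide both sides by $y^{\alpha_j}$ and substitute $t = x/y \in \mathbb{R}_{\geq 0}$, which reduces $(\ast)$ to the single-variable inequality $(1+t)^{\alpha_j - 1} - \mu' t^{\alpha_j} \leq \lambda'$. Defining $F(\alpha) := \sup_{t \geq 0}\bigl[(1+t)^{\alpha - 1} - \mu' t^{\alpha}\bigr]$, the task reduces to showing $F(\alpha_j) \leq F(\alpha_{\max}) = \lambda'$ for every $j \in [q]$.

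Next, I would confirm that $F(\alpha_{\max})$ is finite and attained. Since $\alpha_{\max} > 1$ and $\mu' > 0$, the continuous function $t \mapsto (1+t)^{\alpha_{\max} - 1} - \mu' t^{\alpha_{\max}}$ equals $1$ at $t=0$ and tends to $-\infty$ as $t \to \infty$ because $\mu' t^{\alpha_{\max}}$ asymptotically dominates $(1+t)^{\alpha_{\max} - 1}$. Hence the supremum is achieved at some $t^{*} \in [0, \infty)$, and in particular $\lambda' \geq 1$, which already handles the boundary case $\alpha_j = 1$ since $F(1) = \sup_{t \geq 0}(1 - \mu' t) = 1$.

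The main step is to prove that $F$ is non-decreasing on $(1, \infty)$, so that $\alpha_j \leq \alpha_{\max}$ implies $F(\alpha_j) \leq F(\alpha_{\max})$. I would fix $\alpha_j \in (1, \alpha_{\max})$, let $t^{*}_j$ attain $F(\alpha_j)$, and note that $F(\alpha_{\max}) \geq (1+t^{*}_j)^{\alpha_{\max} - 1} - \mu'(t^{*}_j)^{\alpha_{\max}}$; hence it suffices to establish
\[
(1+t^{*}_j)^{\alpha_{\max} - 1} - (1+t^{*}_j)^{\alpha_j - 1} \, \geq \, \mu'\bigl[(t^{*}_j)^{\alpha_{\max}} - (t^{*}_j)^{\alpha_j}\bigr]\,.
\]
When $t^{*}_j \leq 1$, the left-hand side is non-negative while the right-hand side is non-positive, so the bound is immediate.

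The main obstacle is the case $t^{*}_j > 1$, where both sides are strictly positive. Here I would exploit the first-order optimality condition $(\alpha_j - 1)(1 + t^{*}_j)^{\alpha_j - 2} = \mu' \alpha_j (t^{*}_j)^{\alpha_j - 1}$ to eliminate $\mu'$, reducing the target inequality (after routine algebraic rearrangement) to the ratio bound
\[
\frac{\alpha_j (1 + t^{*}_j)}{(\alpha_j - 1)\, t^{*}_j} \cdot \frac{(1 + t^{*}_j)^{\alpha_{\max} - \alpha_j} - 1}{(t^{*}_j)^{\alpha_{\max} - \alpha_j} - 1} \, \geq \, 1\,.
\]
Each factor strictly exceeds $1$: the first because $\alpha_j/(\alpha_j - 1) > 1$ and $(1 + t^{*}_j)/t^{*}_j > 1$, the second because $1 + t^{*}_j > t^{*}_j > 1$ raised to the positive power $\alpha_{\max} - \alpha_j$ yields a larger positive increment in the numerator than in the denominator. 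This is essentially the calculus used in \cite{Aland2006EPA} for polynomial congestion games, now applied in the BGND setting to allow a single $\lambda'$ calibrated to $\alpha_{\max}$ to dominate the smoothness requirement across all load exponents $\alpha_j$.
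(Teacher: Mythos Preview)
Your argument is correct. The paper does not supply its own proof of this lemma; it is stated with a citation to \cite{Aland2006EPA}, and the subsequent remark only notes that the original argument (given there for integer exponents $\alpha_{\max} \geq 2$) carries over to real $\alpha_{\max} > 1$. Your substitution $t = x/y$ reducing the two-variable inequality to the one-variable bound $(1+t)^{\alpha_j - 1} - \mu' t^{\alpha_j} \leq \lambda'$, followed by the monotonicity of $F(\alpha) = \sup_{t \geq 0}\bigl[(1+t)^{\alpha - 1} - \mu' t^{\alpha}\bigr]$ in $\alpha$, is exactly the route taken in the cited reference; your case split $t_j^{*} \leq 1$ versus $t_j^{*} > 1$ together with the first-order condition to eliminate $\mu'$ makes the argument self-contained and handles all exponents $\alpha_j \leq \alpha_{\max}$ at once, which is what the BGND setting with multiple load exponents requires.
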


For any $\mu' \in (0, \frac{1}{\varrho(\underline{\eta}\overline{\eta})^{2}} )$, define $g_{\mu'}(x) = (x + 1)^{\alpha_{\max} - 1} - \mu' \cdot x^{\alpha_{\max}}$ and $h(x) = \Big[ (\alpha_{\max} - 1)(x + 1)^{\alpha_{\max} - 2} \Big]\Big/\Big[ \alpha_{\max} \cdot x^{\alpha_{\max} - 1} \Big]$.  Then:

\begin{lemma}[\cite{Aland2006EPA}]\label{lemma_global_maximum_point_for_smoothness_parameter_lambda}
For any $\mu' \in (0, \frac{1}{\varrho(\underline{\eta}\overline{\eta})^{2}} )$, there exists a unique real positive number $x_{\mu'}$ that maximizes $g_{\mu'}(x)$ over $x \in (0, +\infty)$, and this number satisfies $h_{\alpha_{\max}}(x_{\mu'}) = \mu'$.
\end{lemma}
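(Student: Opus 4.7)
The plan is to reduce the optimization problem on $g_{\mu'}$ to a monotonicity analysis of $h$, and then establish that $h$ is strictly monotone with the correct range on $(0, \infty)$. Throughout, the assumption $\mu' > 0$ (which is implied by the stated range) is all that is actually used.

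First, I would compute $g'_{\mu'}(x) = (\alpha_{\max}-1)(x+1)^{\alpha_{\max}-2} - \alpha_{\max} \mu' \, x^{\alpha_{\max}-1}$ and observe that, by dividing through by the strictly positive factor $\alpha_{\max} x^{\alpha_{\max}-1}$ (valid for $x > 0$), the first-order condition $g'_{\mu'}(x) = 0$ is algebraically equivalent to $h(x) = \mu'$. So characterizing the maximizers of $g_{\mu'}$ reduces to solving this single scalar equation.

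Second, I would argue that $g_{\mu'}$ attains its supremum on $(0, \infty)$ at an interior critical point. The relevant facts are that $g_{\mu'}(0) = 1$, that $g'_{\mu'}(0^+) = \alpha_{\max} - 1 > 0$ so $g_{\mu'}$ is initially increasing, and that $g_{\mu'}(x) \to -\infty$ as $x \to \infty$ because the leading term $-\mu' x^{\alpha_{\max}}$ dominates $(x+1)^{\alpha_{\max}-1}$ when $\mu' > 0$. Continuity then produces some maximizer $x_{\mu'} \in (0, \infty)$ at which $g'_{\mu'}$ vanishes and hence $h(x_{\mu'}) = \mu'$.

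The core technical step is uniqueness: I need to show that $h(x) = \mu'$ has only one positive solution. For this I would use logarithmic differentiation, writing $\ln h(x) = \mathrm{const} + (\alpha_{\max}-2)\ln(x+1) - (\alpha_{\max}-1)\ln(x)$ and differentiating to obtain
\begin{equation*}
\frac{h'(x)}{h(x)} \; = \; \frac{\alpha_{\max}-2}{x+1} - \frac{\alpha_{\max}-1}{x} \; = \; \frac{-x - (\alpha_{\max}-1)}{x(x+1)}.
\end{equation*}
Since $\alpha_{\max} > 1$, the numerator is strictly negative for every $x > 0$, so $h$ is strictly decreasing on $(0, \infty)$. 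Combined with the boundary behavior $\lim_{x \to 0^+} h(x) = +\infty$ (driven by the $x^{\alpha_{\max}-1}$ term in the denominator) and $\lim_{x \to \infty} h(x) = 0$, this shows that $h$ is a bijection from $(0, \infty)$ onto $(0, \infty)$, so the preimage of $\mu'$ is a single point, which must coincide with the unique critical point and hence with the unique global maximizer $x_{\mu'}$.

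The main (modest) obstacle lies in making the sign analysis of $h'(x)/h(x)$ uniform across the full range $\alpha_{\max} > 1$: in the regime $\alpha_{\max} \in (1, 2)$ the factor $\alpha_{\max} - 2$ is negative and the two terms of $h'/h$ have the same sign, whereas for $\alpha_{\max} > 2$ they have opposite signs. The simplification above bundles both regimes into a single expression whose numerator is manifestly negative, avoiding a case split and completing the argument.
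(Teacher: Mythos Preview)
Your argument is correct and complete: the reduction of $g'_{\mu'}(x)=0$ to $h(x)=\mu'$, the existence of an interior maximizer via the boundary behaviour of $g_{\mu'}$, and the uniqueness via the strict monotonicity of $h$ (established cleanly by the logarithmic derivative identity $h'/h = \bigl(-x-(\alpha_{\max}-1)\bigr)/\bigl(x(x+1)\bigr)$) all go through for every $\alpha_{\max}>1$ and every $\mu'>0$.

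As for the comparison: the paper does not give its own proof of this lemma. It is stated as a citation of \cite{Aland2006EPA}, with the accompanying remark that the argument there, originally formulated for integer $\alpha_{\max}\ge 2$, carries over to real $\alpha_{\max}\ge 1$. Your write-up therefore supplies what the paper omits, and in fact does so in a way that is uniform in $\alpha_{\max}$ without any case split between $\alpha_{\max}\in(1,2)$ and $\alpha_{\max}\ge 2$; this is arguably tidier than tracking the sign of $\alpha_{\max}-2$ separately, which is the sort of case distinction one might expect from an argument first written for integers. One small cosmetic point: the lemma statement writes $h_{\alpha_{\max}}$ while the surrounding text defines the function simply as $h$; you (correctly) use $h$ throughout.
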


\begin{remark}
Note that in \cite{Aland2006EPA}, the propositions corresponding to Lemma \ref{lemma_condition_on_smoothness_parameter_lambda} and Lemma \ref{lemma_global_maximum_point_for_smoothness_parameter_lambda} are proved for the case $\alpha_{\max} \in \mathbb{Z}_{\geq 2}$, but their proofs directly hold for the case where $\alpha_{\max} \in \mathbb{R}_{\geq 1}$.
\end{remark}

It can be inferred from the derivative that $(x + 1)^{\alpha_{\max} - 1} = x^{\alpha_{\max}}$ has a unique positive root, which is denoted by $\gamma_{\alpha_{\max}}$. In \cite{Aland2006EPA}, it is proved that if $\alpha_{\max} \geq 3$, $\gamma_{\alpha_{\max}}$ is bounded by $O\Big(\frac{\alpha_{\max} - 1}{\ln(\alpha_{\max} - 1)}\Big)$. Furthermore, it can be verified that $\gamma_{\alpha_{\max}} < 2$ when $\alpha_{\max} \in (1, 2)$, and $\gamma_{\alpha_{\max}} < 3$ when $\alpha_{\max} \in [2, 3)$.

Let $\mu_{\alpha} = h\Big( \varrho(\underline{\eta}\overline{\eta})^{2} \cdot \gamma_{\alpha_{\max}} \Big)$, and $\lambda_{\alpha} = g_{\mu_{\alpha}}\Big( \varrho(\underline{\eta}\overline{\eta})^{2} \cdot \gamma_{\alpha_{\max}} \Big)$. Then we have:

\begin{theorem}\label{theorem_desired_smoothness_parameters_for_BGND_game}
The BGND game is $(\lambda_{\alpha}, \mu_{\alpha})$-smooth, and $\varrho(\underline{\eta}\overline{\eta})^{2}\mu_{\alpha} < 1 - 1 / \alpha_{\max}$.
\end{theorem}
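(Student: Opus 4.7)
My plan is to show the two conclusions in order, with the first being almost immediate from the machinery already established and the second requiring the only nontrivial calculation.

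For the smoothness claim, I will apply Lemma \ref{lemma_condition_on_smoothness_parameter_lambda} with $\mu' = \mu_{\alpha}$. Writing $c = \varrho(\underline{\eta}\overline{\eta})^{2}$ and $\gamma = \gamma_{\alpha_{\max}}$ for brevity, $\mu_{\alpha} = h(c\gamma)$ by definition. Lemma \ref{lemma_global_maximum_point_for_smoothness_parameter_lambda} asserts that the unique positive maximizer $x_{\mu_{\alpha}}$ of $g_{\mu_{\alpha}}$ is the unique positive solution of $h(x) = \mu_{\alpha}$; by uniqueness this solution must be $x_{\mu_{\alpha}} = c\gamma$. Hence $\lambda_{\alpha} = g_{\mu_{\alpha}}(c\gamma)$ equals $\max_{x > 0} g_{\mu_{\alpha}}(x)$, so Lemma \ref{lemma_condition_on_smoothness_parameter_lambda} tells us that $(\lambda_{\alpha},\mu_{\alpha})$ satisfies Eq.~\eqref{formula_condition_for_smoothness_parameters}, and Lemma \ref{lemma_condition_for_smoothness_parameters} then yields $(\lambda_{\alpha},\mu_{\alpha})$-smoothness. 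To invoke Lemma \ref{lemma_condition_on_smoothness_parameter_lambda} I need $\mu_{\alpha} < 1/c$; this is automatically implied by the stronger bound $c\mu_{\alpha} < 1 - 1/\alpha_{\max}$ that I prove next, so I will simply cite that bound forward.

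For the bound $c \mu_{\alpha} < 1 - 1/\alpha_{\max}$, I substitute the definition of $h$ to get
\[
c\mu_{\alpha} \;=\; \frac{\alpha_{\max}-1}{\alpha_{\max}} \cdot \frac{(c\gamma+1)^{\alpha_{\max}-2}}{c^{\alpha_{\max}-2}\,\gamma^{\alpha_{\max}-1}},
\]
so it suffices to show that the second factor $\phi(c) := (c\gamma+1)^{\alpha_{\max}-2} \big/ [c^{\alpha_{\max}-2}\gamma^{\alpha_{\max}-1}]$ is strictly less than $1$ for every $c \geq 1$. I will evaluate $\phi$ at $c = 1$ using the defining identity $(\gamma+1)^{\alpha_{\max}-1} = \gamma^{\alpha_{\max}}$, which gives $(\gamma+1)^{\alpha_{\max}-2} = \gamma^{\alpha_{\max}}/(\gamma+1)$ and hence $\phi(1) = \gamma/(\gamma+1) < 1$. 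A direct computation of the logarithmic derivative yields
\[
\frac{d}{dc}\log\phi(c) \;=\; (\alpha_{\max}-2)\!\left[\frac{\gamma}{c\gamma+1} - \frac{1}{c}\right] \;=\; \frac{-(\alpha_{\max}-2)}{c(c\gamma+1)}.
\]
The sign of this derivative splits into cases on $\alpha_{\max}$ versus $2$. When $\alpha_{\max} \geq 2$, $\phi$ is nonincreasing in $c$, so $\phi(c) \leq \phi(1) < 1$ for all $c \geq 1$. When $1 < \alpha_{\max} < 2$, $\phi$ is increasing in $c$ on $[1,\infty)$, but $\lim_{c \to \infty}\phi(c) = 1/\gamma < 1$ since $\gamma > 1$ (a separate easy check: at $x=1$, $(x+1)^{\alpha_{\max}-1} = 2^{\alpha_{\max}-1} > 1 = x^{\alpha_{\max}}$), so again $\phi(c) < 1$ on $[1,\infty)$.

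The hard part is the second step, but even there the argument is mostly manipulation; the only conceptual trick is recognizing that the defining identity of $\gamma_{\alpha_{\max}}$ yields the clean value $\phi(1) = \gamma/(\gamma+1)$, so that the inequality reduces to monotonicity of $\phi$, which is controlled by the sign of $\alpha_{\max}-2$. Both cases must be treated, and in the $\alpha_{\max} < 2$ case I additionally use $\gamma_{\alpha_{\max}} > 1$ to bound the limit.
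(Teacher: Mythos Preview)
Your proof is correct. The smoothness half is handled exactly as in the paper: establish $\mu_{\alpha} < 1/\big(\varrho(\underline{\eta}\overline{\eta})^{2}\big)$, invoke Lemma~\ref{lemma_global_maximum_point_for_smoothness_parameter_lambda} to identify $c\gamma$ as the maximizer of $g_{\mu_{\alpha}}$, then feed this into Lemmas~\ref{lemma_condition_on_smoothness_parameter_lambda} and~\ref{lemma_condition_for_smoothness_parameters}. The only difference is presentational (you cite the bound forward, the paper proves it first).

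For the bound $c\mu_{\alpha} < 1 - 1/\alpha_{\max}$ your route diverges from the paper's. The paper multiplies numerator and denominator by $(c\gamma+1)(c\gamma)$ and then applies the defining identity $(\gamma+1)^{\alpha_{\max}-1} = \gamma^{\alpha_{\max}}$ to rewrite $(c\gamma)^{\alpha_{\max}}$ as $c\cdot[c(\gamma+1)]^{\alpha_{\max}-1}$, after which the ratio factors as $\frac{c\gamma}{c\gamma+1}\cdot\big(\frac{c\gamma+1}{c(\gamma+1)}\big)^{\alpha_{\max}-1}$ and is seen to be $<1$ directly from $c\geq 1$ and $\alpha_{\max}>1$, with no case split. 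You instead isolate the factor $\phi(c)$, evaluate $\phi(1) = \gamma/(\gamma+1)$ via the same defining identity, and then control $\phi$ on $[1,\infty)$ through its logarithmic derivative, which forces the case distinction on $\alpha_{\max}\gtrless 2$; in the $\alpha_{\max}<2$ case you additionally need $\gamma_{\alpha_{\max}}>1$ to bound the limit. The paper's manipulation is cleaner since it avoids the case analysis and the auxiliary fact $\gamma>1$, but your argument is entirely sound and arguably more systematic (reduce to monotonicity of an explicit one-variable function).
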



\begin{proof} 
\begin{align*}
h\Big( \varrho(\underline{\eta}\overline{\eta})^{2} \cdot \gamma_{\alpha_{\max}} \Big) \; =& \; \frac{(\alpha_{\max} - 1) \cdot \Big( \varrho(\underline{\eta}\overline{\eta})^{2} \cdot \gamma_{\alpha_{\max}} + 1\Big)^{\alpha_{\max} - 2}}{\alpha_{\max} \cdot \Big( \varrho(\underline{\eta}\overline{\eta})^{2} \cdot \gamma_{\alpha_{\max}}\Big)^{\alpha_{\max} - 1}} \\
=& \; \frac{(\alpha_{\max} - 1) \cdot \Big( \varrho(\underline{\eta}\overline{\eta})^{2} \cdot \gamma_{\alpha_{\max}} + 1\Big)^{\alpha_{\max} - 1}  \Big( \varrho(\underline{\eta}\overline{\eta})^{2} \cdot \gamma_{\alpha_{\max}}\Big)}{\alpha_{\max} \cdot \Big( \varrho(\underline{\eta}\overline{\eta})^{2} \cdot \gamma_{\alpha_{\max}}\Big)^{\alpha_{\max}} \Big( \varrho(\underline{\eta}\overline{\eta})^{2} \cdot \gamma_{\alpha_{\max}} + 1\Big)} \\
=& \; \frac{(\alpha_{\max} - 1)  \Big( \varrho(\underline{\eta}\overline{\eta})^{2} \cdot \gamma_{\alpha_{\max}} + 1\Big)^{\alpha_{\max} - 1}  \Big( \varrho(\underline{\eta}\overline{\eta})^{2} \cdot \gamma_{\alpha_{\max}}\Big)}{\alpha_{\max}  \varrho(\underline{\eta}\overline{\eta})^{2} \Big[  \varrho(\underline{\eta}\overline{\eta})^{2} \cdot \Big( \gamma_{\alpha_{\max}} + 1\Big)\Big]^{\alpha_{\max} - 1} \Big( \varrho(\underline{\eta}\overline{\eta})^{2} \cdot \gamma_{\alpha_{\max}} + 1\Big)}\\
<& \; \frac{(\alpha_{\max} - 1)}{\alpha_{\max} \varrho(\underline{\eta}\overline{\eta})^{2}} \, ,
\end{align*}
which implies that $\varrho(\underline{\eta}\overline{\eta})^{2} \cdot \mu_{\alpha} < (\alpha_{\max} - 1)/\alpha_{\max}$. The third transition holds because $\gamma_{\alpha_{\max}}$ is the positive root of $(x + 1)^{\alpha_{\max} - 1} - x^{\alpha_{\max}} = 0$. The fourth transition holds because $\varrho(\underline{\eta}\overline{\eta})^{2} \geq 1$, $\gamma_{\alpha_{\max}} > 0$ and $\alpha_{\max} - 1 > 0$. 

Since $\mu_{\alpha} < \frac{\alpha_{\max} - 1}{\alpha_{\max}} \cdot \frac{1}{\varrho(\underline{\eta}\overline{\eta})^{2}} < \frac{1}{\varrho(\underline{\eta}\overline{\eta})^{2}}$, it can be inferred from Lemma \ref{lemma_global_maximum_point_for_smoothness_parameter_lambda} that $g_{ \mu_{\alpha}}\Big( \varrho \cdot (\underline{\eta}\overline{\eta})^{2} \cdot \gamma_{\alpha_{\max}} \Big) = \max_{x > 0}(x + 1)^{\alpha_{\max} - 1} - \mu_{\alpha} \cdot x^{\alpha_{\max}}$. It implies that $\lambda_{\alpha}$ satisfies the condition given in Lemma \ref{lemma_condition_on_smoothness_parameter_lambda}. By Lemma \ref{lemma_condition_for_smoothness_parameters}, this theorem holds.
\end{proof}


\section{Bounded Potential Function}
\label{section:bounded-potential}
In this section, it is proved that the BGND game admits a potential function that is $K$-bounded with $K = \lceil \alpha_{\max} \rceil$.

\begin{lemma}[\cite{Alon2012BI}]
If for every type profile $t$, there exists a function $\Phi_{t}: A \mapsto \mathbb{R}_{\geq 0}$ such that for every action profile $a \in A^{t}$, every $i \in [N]$ and every $a_{i}' \in A_{i}^{t_{i}}$, 
\begin{equation}
\Phi_{t}(a) - \Phi_{t}(a_{i}', a_{-i}) = C_{i}(t_{i}; a) - C_{i}(t_{i}; (a_{i}', a_{-i}) ) \, , \label{formula_type_specified_potential_function}
\end{equation}
then $\Phi(s) = \sum_{t \in T}p(t)\Phi_{t}(s(t))$ is a potential function of the BGND game. 
\end{lemma}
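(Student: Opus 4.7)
The plan is to verify the defining identity
\[
\Phi(s) - \Phi(s_i', s_{-i}) \;=\; C_i(s) - C_i(s_i', s_{-i})
\]
by pushing both sides into sums over full type profiles and then applying the type-specific potential identity pointwise. The proof is essentially bookkeeping, and the only substantive ingredient is the product-distribution assumption on $p$.

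First, I would rewrite the type-averaged expected individual cost as a single expectation over $T$. Since $p(t) = p_i(t_i) \cdot p_{-i}(t_{-i})$, the tower rule yields
\[
C_i(s) \;=\; \mathbb{E}_{t_i \sim p_i}\bigl[\mathbb{E}_{t_{-i} \sim p_{-i}}[C_i(t_i;\, s(t_i, t_{-i}))]\bigr] \;=\; \sum_{t \in T} p(t)\, C_i(t_i;\, s(t)),
\]
and the analogous identity holds for $C_i(s_i', s_{-i})$. On the potential side, the definition $\Phi(s) = \sum_{t \in T} p(t)\, \Phi_t(s(t))$ is already in this form, so the two sides of the desired equation are now comparable term by term, weighted by the same $p(t)$.

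Second, I would exploit the fact that changing $s_i$ to $s_i'$ only affects the $i$-th component of the realized action profile: for every $t \in T$, we have $s(t) = (s_i(t_i), s_{-i}(t_{-i}))$ while $(s_i', s_{-i})(t) = (s_i'(t_i), s_{-i}(t_{-i}))$, so the two action profiles differ only in coordinate $i$. Applying the hypothesis of the lemma with $a = s(t)$ and with the alternative action $s_i'(t_i) \in A_i^{t_i}$ gives, for every fixed $t$,
\[
\Phi_t(s(t)) - \Phi_t\bigl(s_i'(t_i),\, s_{-i}(t_{-i})\bigr) \;=\; C_i(t_i;\, s(t)) - C_i\bigl(t_i;\, (s_i'(t_i),\, s_{-i}(t_{-i}))\bigr).
\]

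Finally, I would multiply this pointwise identity by $p(t)$ and sum over $t \in T$. The left-hand side collapses to $\Phi(s) - \Phi(s_i', s_{-i})$ by the definition of $\Phi$, while the right-hand side collapses to $C_i(s) - C_i(s_i', s_{-i})$ by the expansion from the first step. I expect no real obstacle in this argument; the one place to be careful is ensuring that the hypothesis on $\Phi_t$ applies to the modified profile, which requires $s_i'(t_i) \in A_i^{t_i}$ — but this is guaranteed by the definition of a strategy $s_i' \in S_i$. The product structure of $p$ is what makes the $C_i$-expansion match the $\Phi$-expansion in a common indexing over $T$; without it, the two sides would not align term by term.
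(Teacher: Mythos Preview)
Your proof is correct. The paper does not actually supply a proof of this lemma; it is stated with a citation to \cite{Alon2012BI} and used as a black box in the proof of Theorem~\ref{theorem_existence_of_bounded_potential}, so there is no in-paper argument to compare against. Your verification---expanding both $\Phi$ and $C_i$ as $p$-weighted sums over full type profiles, applying the pointwise hypothesis for each fixed $t$, and summing---is exactly the natural one, and your check that $s(t)\in A^{t}$ and $s_i'(t_i)\in A_i^{t_i}$ (so the hypothesis on $\Phi_t$ applies) is the only nontrivial point.
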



\begin{theorem}\label{theorem_existence_of_bounded_potential}
For the BGND game, there exists a potential function $\Phi(s)$ that is $\lceil \alpha_{\max}\rceil$-bounded.
\end{theorem}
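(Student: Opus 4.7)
My plan is to exhibit a concrete potential function via the canonical Rosenthal-style construction for fair cost sharing, then verify the two boundedness inequalities edge by edge.

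First I would define, for each type profile $t\in T$, the candidate
\[
\Phi_t(a) \;=\; \sum_{e\in E}\sum_{k=1}^{l_e^{a}} \frac{F_e(k)}{k}
\;=\; \sum_{e\in E}\sum_{k=1}^{l_e^{a}}\sum_{j\in[q]}\xi_{e,j}\,k^{\alpha_j-1},
\]
for every $a\in A^t$. To verify Eq.~\eqref{formula_type_specified_potential_function}, I would fix $i\in[N]$, $a\in A^t$, and $a_i'\in A_i^{t_i}$, and compute the change in $\Phi_t$ when player $i$ deviates from $a_i$ to $a_i'$. Each resource $e\in a_i\setminus a_i'$ contributes $-F_e(l_e^a)/l_e^a$ to the change, each $e\in a_i'\setminus a_i$ contributes $+F_e(l_e^a+1)/(l_e^a+1)$, and resources in $a_i\cap a_i'$ or outside $a_i\cup a_i'$ contribute $0$. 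The individual-cost change $C_i(t_i;a)-C_i(t_i;(a_i',a_{-i}))$ matches exactly, because $i$'s share on $e\in a_i$ equals $F_e(l_e^a)/l_e^a$ and her share on $e\in a_i'$ after the move equals $F_e(l_e^a+1)/(l_e^a+1)$ (for $e\notin a_i$). Hence the previous lemma applies and $\Phi(s)=\sum_{t\in T}p(t)\,\Phi_t(s(t))$ is a potential function of the BGND game.

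Next I would establish $K$-boundedness with $K=\lceil\alpha_{\max}\rceil$. Since $\Phi(s)=\mathbb{E}_{t\sim p}[\Phi_t(s(t))]$ and $C(s)=\mathbb{E}_{t\sim p}\bigl[\sum_e F_e(l_e^{s(t)})\bigr]$, it suffices to show
\[
\Phi_t(a) \;\leq\; \sum_{e\in E} F_e(l_e^{a}) \;\leq\; \lceil\alpha_{\max}\rceil\cdot \Phi_t(a)
\]
for every $t$ and $a\in A^t$, and then take expectations. By linearity in $e$ and $j$, it is enough to prove, for every integer $l\geq 0$ and every $j\in[q]$,
\[
\sum_{k=1}^{l} k^{\alpha_j-1} \;\leq\; l^{\alpha_j} \;\leq\; \lceil\alpha_{\max}\rceil\sum_{k=1}^{l} k^{\alpha_j-1}.
\]
The left inequality holds because $\alpha_j\geq 1$ makes $k\mapsto k^{\alpha_j-1}$ non-decreasing, so each of the $l$ terms is at most $l^{\alpha_j-1}$. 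For the right inequality I would compare the sum with the integral: monotonicity gives $\sum_{k=1}^{l}k^{\alpha_j-1}\geq \int_0^{l} x^{\alpha_j-1}\,dx = l^{\alpha_j}/\alpha_j$, hence $l^{\alpha_j}\leq \alpha_j\sum_{k=1}^{l}k^{\alpha_j-1}\leq \lceil\alpha_{\max}\rceil\sum_{k=1}^{l}k^{\alpha_j-1}$. Summing these bounds (with weights $\xi_{e,j}$) over $j$ and $e$ yields the desired sandwich, and taking expectation over $t\sim p$ finishes the proof.

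The main (minor) obstacle is the upper bound $C(s)\leq \lceil\alpha_{\max}\rceil\cdot \Phi(s)$: it must hold uniformly across a mixture of load exponents $\alpha_1,\dots,\alpha_q$ that need not be integers, which is precisely why the integral comparison — yielding a per-$j$ constant $\alpha_j$ that is uniformly dominated by $\lceil\alpha_{\max}\rceil$ — is the right tool, rather than a purely combinatorial identity on $\sum_{k=1}^{l} k^{\alpha_j-1}$.
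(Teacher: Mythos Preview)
Your proof is correct and follows essentially the same approach as the paper: the same Rosenthal-style potential $\Phi_t(a)=\sum_e\sum_{k=1}^{l_e^a}\sum_j\xi_{e,j}k^{\alpha_j-1}$, the same termwise upper bound, and then a lower bound on $\sum_{k=1}^{l}k^{\alpha_j-1}$ to get $K=\lceil\alpha_{\max}\rceil$. The one genuine difference is how the lower bound is obtained: the paper first passes to the integer exponent via $k^{\alpha_j-1}\geq (k/l)^{\lceil\alpha_j\rceil-\alpha_j}k^{\lceil\alpha_j\rceil-1}$ and then invokes the known estimate $\sum_{k=1}^{l}k^{m-1}\geq l^{m}/m$ for integer $m$, giving a per-$j$ constant $\lceil\alpha_j\rceil$; you instead use the direct integral comparison $\sum_{k=1}^{l}k^{\alpha_j-1}\geq\int_0^{l}x^{\alpha_j-1}\,dx=l^{\alpha_j}/\alpha_j$, which is shorter and yields the slightly sharper per-$j$ constant $\alpha_j$ before passing to $\lceil\alpha_{\max}\rceil$.
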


\begin{proof}
For every type profile $t$ and every action profile $a \in A^{t}$, define the function
\begin{equation*}
\Phi_{t}(a) \; = \; \sum_{e \in E}\sum_{l = 1}^{l_{e}^{a}}\sum_{j \in [q]}\xi_{e, j} \cdot l^{\alpha_{j} - 1} \, .
\end{equation*}
In \cite{Monderer1996PG}, it is proved that such a function satisfies Eq.~\eqref{formula_type_specified_potential_function}, which implies that the BGND game admits a potential function $\Phi(s) = \sum_{t \in T}p(t)\Phi_{t}(s(t))$. Furthermore, for every $e \in E$ and every $j \in [q]$, $\sum_{l = 1}^{l_{e}^{a}}l^{\alpha_{j} - 1} \leq (l_{e}^{a})^{\alpha_{j}}$ trivially holds, and
\begin{equation*}
\sum_{l = 1}^{l_{e}^{a}}l^{\alpha_{j} - 1} \; \geq \; \frac{1}{(l_{e}^{a})^{\lceil \alpha_{j} \rceil - \alpha_{j}}}\sum_{l = 1}^{l_{e}^{a}}l^{\lceil \alpha_{j} \rceil - 1} \; \geq \; \frac{1}{(l_{e}^{a})^{\lceil \alpha_{j} \rceil - \alpha_{j}}} \cdot \frac{1}{\lceil \alpha_{j} \rceil} (l_{e}^{a})^{\lceil \alpha_{j} \rceil} \; = \; \frac{1}{\lceil \alpha_{j} \rceil} (l_{e}^{a})^{\alpha_{j}} \, ,
\end{equation*}
where the second transition follows from \cite{Anshelevich2008PSN}. Therefore, it can be obtained that $\Phi_{t}(a) \leq \sum_{e \in E}F_{e}(l_{e}^{a}) \leq \lceil \alpha_{\max} \rceil \Phi_{t}(a)$. By the linearity of expectation, $\Phi(s)$ is $\lceil \alpha_{\max} \rceil$-bounded.
\end{proof}

\section{Efficient Estimation of the Cost Share}
\label{section:efficient-estimation-cost-share}
This section focuses on the $(\underline{\eta}, \overline{\eta})$-estimation of the expected cost shares. 
For any
$z \in (0, 1)$
and
$z' \geq 1$,
define
$b_{z} = \Big((\beta^{\circ})^{2} + 1\Big)\Big(1 - \frac{1}{\beta^{\circ}}
\Big)^{-z}$
with $\beta^{\circ}$ being the unique root of
$2\beta^{3} - (z + 2)\beta^{2} - 2 = 0$
in the interval
$(1, +\infty)$,
$B_{z'}$ to be the fractional Bell number with the parameter $z'$
\cite{Bampis2014EES, Makarychev2014SOP},
and $\gamma_{z'}$ to be the unique positive root of
$(x + 1)^{z' - 1} = x^{z'}$
\cite{Aland2006EPA}. For any $i \in [N]$, $e \in E$ and any $s_{-i}$, it is shown that there exists a $(\max\lbrace 1, \max_{\alpha_{j} \in (1, 2)}b_{\alpha_{j} - 1}\rbrace,\, \max\lbrace \max_{\alpha_{j} \geq 2} B_{\alpha_{j} - 1}, 1 \rbrace)$-estimation $\widehat{\mathfrak{f}}_{i, e}(+; s_{-i})$ of $\mathfrak{f}_{i, e}(+; s_{-i})$ that can be obtained in $\operatorname{poly}(q, N, \lbrace |T_{i}| \rbrace_{i \in [N]})$-time. Particularly, consider the special case of the BGND game where $q = 1$ and $\alpha_{1} = 2$, which means that the cost function associated with each resource $e \in E$ can be written as
\begin{equation}
F_{e}(l) = \xi_{e} \cdot l^{2} \, . \label{formula_quadratic_edge_cost_function}
\end{equation}
It is proved that the BGND game with such a quadratic cost function is tractable.

\begin{lemma}[\cite{Berend2010IBB}]\label{lemma_upper_bound_on_the_expectation_of_power_of_sum_of_random_variables}
Let $\lbrace X_{1}, X_{2}, \cdots, X_{k}, \cdots \rbrace$ be a finite set of mutually independent random variables following the Bernoulli distribution supported on $\lbrace 0, 1 \rbrace$. Then for any $z \geq 1$,
\begin{equation}
\mathbb{E}\Big[ \big(\sum_{k} X_{k} \big)^{z} \Big] \; \leq \; B_{z} \cdot \max\Big\lbrace \mathbb{E}\big[ \sum_{k} X_{k}\big],\, \Big( \mathbb{E}\big[ \sum_{k} X_{k} \big]\Big)^{z} \Big\rbrace \, .\nonumber
\end{equation}
\end{lemma}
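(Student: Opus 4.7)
The plan is to reduce the bound to integer moments via Stirling numbers of the second kind, and then extend to non-integer $z$ by log-convexity of moments. Let $S = \sum_{k} X_{k}$, $p_{k} = \Pr[X_{k} = 1]$, and $\mu = \mathbb{E}[S] = \sum_{k} p_{k}$.

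First I would handle the case $z = n \in \mathbb{Z}_{\geq 1}$. The classical Stirling identity $x^{n} = \sum_{m=0}^{n} S(n, m)\, (x)_{m}$, where $S(n,m)$ is the Stirling number of the second kind and $(x)_{m} = x(x-1)\cdots(x-m+1)$ is the falling factorial, yields $\mathbb{E}[S^{n}] = \sum_{m=1}^{n} S(n, m)\, \mathbb{E}[(S)_{m}]$ (the $m=0$ term vanishes for $n \geq 1$). Independence of the $X_{k}$'s gives $\mathbb{E}[(S)_{m}] = \sum_{(i_{1}, \ldots, i_{m})} p_{i_{1}} \cdots p_{i_{m}}$, summed over ordered tuples of \emph{distinct} indices, which is at most $\bigl(\sum_{k} p_{k}\bigr)^{m} = \mu^{m}$. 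Hence $\mathbb{E}[S^{n}] \leq \sum_{m=1}^{n} S(n,m)\, \mu^{m}$. If $\mu \leq 1$, each $\mu^{m} \leq \mu$ and the sum is at most $B_{n}\, \mu$; if $\mu \geq 1$, each $\mu^{m} \leq \mu^{n}$ and the sum is at most $B_{n}\, \mu^{n}$, where $B_{n} = \sum_{m=0}^{n} S(n,m)$ is the $n$-th Bell number. Either way, $\mathbb{E}[S^{n}] \leq B_{n} \cdot \max\{\mu, \mu^{n}\}$.

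Next, for non-integer $z > 1$, I would appeal to Lyapunov's inequality, which asserts that $r \mapsto \log \mathbb{E}[S^{r}]$ is convex on $r \geq 0$. Writing $z = (1-\theta)\lfloor z \rfloor + \theta \lceil z \rceil$ with $\theta \in (0,1)$, this convexity gives
\[
\mathbb{E}[S^{z}] \; \leq \; \mathbb{E}[S^{\lfloor z \rfloor}]^{1-\theta}\, \mathbb{E}[S^{\lceil z \rceil}]^{\theta}.
\]
Plugging in the integer bound from the previous step and using $\mu^{(1-\theta)\lfloor z \rfloor + \theta \lceil z \rceil} = \mu^{z}$, both the $\mu \leq 1$ regime (giving a factor of $\mu^{1-\theta}\mu^{\theta} = \mu$) and the $\mu \geq 1$ regime (giving $\mu^{z}$) collapse to the desired form, with $B_{z} := B_{\lfloor z \rfloor}^{1-\theta}\, B_{\lceil z \rceil}^{\theta}$ --- the log-linear interpolation of integer Bell numbers.

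The principal obstacle I anticipate is reconciling this log-linear $B_{z}$ with the precise definition of the \emph{fractional Bell number} used in \cite{Bampis2014EES, Makarychev2014SOP}: depending on the chosen extension of the Bell sequence one might need to replace $B_{z}$ by any upper bound on the log-linear interpolation, or to argue that the specific fractional Bell number from those references dominates it pointwise. A secondary, minor technical point is to verify that Lyapunov's inequality is applicable in the form used, but this is immediate since $S$ is bounded by the (finite) number of Bernoulli summands and therefore all moments are finite.
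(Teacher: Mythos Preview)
The paper does not supply a proof of this lemma; it is quoted verbatim from \cite{Berend2010IBB} (Berend and Tassa), so there is no in-paper argument to compare against. Your reconstruction is therefore not competing with anything in the manuscript.

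That said, your argument is essentially the standard one and is correct for integer exponents: the identity $x^{n}=\sum_{m} S(n,m)(x)_{m}$, the factorial-moment bound $\mathbb{E}[(S)_{m}]\le \mu^{m}$ for a Poisson-binomial $S$, and the trivial majorization of $\mu^{m}$ by $\mu$ or $\mu^{n}$ according to whether $\mu\lessgtr 1$ are exactly the ingredients in the Berend--Tassa proof. The extension to non-integer $z$ via Lyapunov log-convexity is also sound as written.

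The one genuine loose end is the one you already flag: the constant you obtain, $B_{\lfloor z\rfloor}^{1-\theta}B_{\lceil z\rceil}^{\theta}$, is a log-linear interpolation of the integer Bell numbers, whereas the lemma as stated (and as used later in the paper with the references \cite{Bampis2014EES, Makarychev2014SOP}) invokes a specific ``fractional Bell number'' $B_{z}$. In Berend--Tassa this fractional Bell number is defined precisely so that the inequality holds, and it is not literally the log-linear interpolation. For the purposes of this paper the discrepancy is harmless---only the fact that $B_{z}$ is a finite constant depending on $z$ alone is ever used downstream---but if you want the statement \emph{as written} you would need either to look up the exact definition in \cite{Berend2010IBB} and check it dominates your interpolant, or simply redefine $B_{z}$ to be your interpolant and note that nothing in Sections~\ref{section:efficient-estimation-cost-share} and~\ref{section:bound-bcr} depends on the particular extension chosen.
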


\begin{lemma}\label{lemma_bounds_on_expectation_of_concave_monomial}
Let $\lbrace X_{1}, X_{2}, \cdots, X_{k}, \cdots \rbrace$ be a finite set of Bernoulli random variables that are mutually independent. For any $z' \in (0, 1)$ and $\beta > 1$:
\begin{equation*}
\frac{1}{\beta^{2} + 1}\Big(1 - \frac{1}{\beta} \Big)^{z'} \Big( \mathbb{E}\Big[1 + \sum_{k}X_{k}\Big] \Big)^{z'} \; \leq \; \mathbb{E}\Big[\Big(1 + \sum_{k}X_{k}\Big)^{z'}\Big] \; \leq \; \Big(\mathbb{E}\Big[1 + \sum_{k}X_{k}\Big]\Big)^{z'} \, .
\end{equation*}
\end{lemma}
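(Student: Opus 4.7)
The plan is to prove the two inequalities by separate one-line arguments, using concavity on the upper side and a concentration bound on the lower side. Throughout, write $S = 1 + \sum_k X_k$ with $\mu = \mathbb{E}[S] = 1 + \sum_k p_k$ where $p_k = \Pr(X_k = 1)$, and note that $S \geq 1$ almost surely so $\mu \geq 1$.

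The upper bound $\mathbb{E}[S^{z'}] \leq \mu^{z'}$ is immediate from Jensen's inequality applied to the concave map $x \mapsto x^{z'}$ on $[0,\infty)$ (valid because $z' \in (0,1)$), and I would dispatch it in a single line.

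For the lower bound I would invoke Cantelli's inequality, the probabilistic tool already flagged in Section~\ref{section:challenges-techniques}. The variance satisfies $\sigma^2 := \mathrm{Var}(S) = \sum_k p_k(1 - p_k) \leq \sum_k p_k = \mu - 1 \leq \mu$, where the first inequality crucially uses that the $X_k$ are Bernoulli. Applying Cantelli with deviation $a = \mu/\beta$ gives
$$\Pr\!\big(S \leq \mu(1 - 1/\beta)\big) \;\leq\; \frac{\sigma^2}{\sigma^2 + \mu^2/\beta^2} \;\leq\; \frac{\mu}{\mu + \mu^2/\beta^2} \;=\; \frac{\beta^2}{\beta^2 + \mu},$$
so that $\Pr(S > \mu(1 - 1/\beta)) \geq \mu/(\beta^2 + \mu) \geq 1/(\beta^2 + 1)$, the last step being the elementary reduction that uses $\mu \geq 1$. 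Because $x^{z'}$ is nondecreasing in $x \geq 0$, restricting the expectation to this event yields $\mathbb{E}[S^{z'}] \geq \big(\mu(1 - 1/\beta)\big)^{z'}/(\beta^2 + 1)$, which is exactly the claimed lower bound.

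The only genuinely delicate step is the choice of the deviation parameter $a = \mu/\beta$, which must balance two competing losses: taking $a$ larger loosens the truncation factor $(\mu - a)^{z'}$, while taking $a$ smaller loosens Cantelli's numerator $\sigma^2 + a^2$. The value $a = \mu/\beta$ is precisely the one that matches the two target factors $(1 - 1/\beta)^{z'}$ and $1/(\beta^2 + 1)$ in the statement. Apart from this calibration, the whole argument rides on the two free consequences of the ``$+1$'' shift in the definition of $S$, namely $\mu \geq 1$ and $\sigma^2 \leq \mu - 1$; without this shift, the variance could not be absorbed into a $\mu$-free constant and the bound as stated would fail for small $\mu$.
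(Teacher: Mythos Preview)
Your proof is correct and follows the same overall architecture as the paper: Jensen for the upper bound, Cantelli for the lower bound. The one substantive difference is the variance estimate. The paper bounds $\operatorname{Var}[S] \le (\mathbb{E}[S])^{2}=\mu^{2}$ by invoking Lemma~\ref{lemma_upper_bound_on_the_expectation_of_power_of_sum_of_random_variables} with $z=2$ (so $B_{2}=2$), and then Cantelli immediately gives $\Pr[S<\mu(1-1/\beta)]\le\beta^{2}/(\beta^{2}+1)$. You instead compute $\operatorname{Var}[S]=\sum_{k}p_{k}(1-p_{k})\le\mu-1\le\mu$ directly from the Bernoulli structure, obtain the sharper intermediate bound $\Pr[S\le\mu(1-1/\beta)]\le\beta^{2}/(\beta^{2}+\mu)$, and only then invoke $\mu\ge1$ to land on $1/(\beta^{2}+1)$ for the complementary probability. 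Your route is more elementary and self-contained (it avoids citing the Berend--Tassa result altogether), while the paper's route has the virtue of reusing a lemma it has already stated; both arrive at the identical final constant.
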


\begin{proof}
The expression $\mathbb{E}\Big[\Big(1 + \sum_{k}X_{k}\Big)^{z'}\Big] \leq \Big( \mathbb{E}\Big[1 + \sum_{k}X_{k}\Big] \Big)^{z'}$ follows from Jensen's inequality \cite{Jensen1906SLF}, since the function $\varphi(x) = x^{z'}$ is concave when $z' \in (0, 1)$. Now consider the lower bound on $\mathbb{E}\Big[\Big(1 + \sum_{k}X_{k}\Big)^{z'}\Big]$. Let $\operatorname{Var}\Big[ 1 + \sum_{k}X_{k} \Big]$ be the variance of the random variable $1 + \sum_{k}X_{k}$. Then we have
\begin{align*}
\operatorname{Var}\Big[ 1 + \sum_{k}X_{k} \Big] \; =& \; \mathbb{E}\bigg[ \Big(1 + \sum_{k}X_{k}\Big)^2 \bigg] - \Big(\mathbb{E}\Big[ 1 + \sum_{k}X_{k} \Big]\Big)^{2} \\
\leq& \; B_{2} \cdot \max\bigg\lbrace \mathbb{E}\Big[ 1 + \sum_{k}X_{k} \Big],\, \Big(\mathbb{E}\Big[ 1 + \sum_{k}X_{k} \Big]\Big)^{2} \bigg\rbrace - \Big(\mathbb{E}\Big[ 1 + \sum_{k}X_{k} \Big]\Big)^{2} \\
=& \; 2 \cdot \Big(\mathbb{E}\Big[ 1 + \sum_{k}X_{k} \Big]\Big)^{2} - \Big(\mathbb{E}\Big[ 1 + \sum_{k}X_{k} \Big]\Big)^{2} \\
=& \; \Big(\mathbb{E}\Big[ 1 + \sum_{k}X_{k} \Big]\Big)^{2} \, .
\end{align*}
The second transition above follows from Lemma \ref{lemma_upper_bound_on_the_expectation_of_power_of_sum_of_random_variables}, because $\lbrace X_{k} \rbrace$ are mutually independent Bernoulli random variables, and the constant $1$ can also be viewed as a Bernoulli random variable which equals to $1$ with probability $1$. The third transition holds because $1 + \sum_{k}X_{k} \geq 1$. By using Cantelli's  inequality \cite{Savage1961PIT}, it can be obtained that for any $\beta > 1$,
\begin{equation*}
\operatorname{Pr}\bigg[ \Big( 1 + \sum_{k}X_{k} \Big) < \Big(1 - \frac{1}{\beta}\Big)\mathbb{E}\Big[ 1 + \sum_{k}X_{k} \Big]\bigg] \, \leq \, \frac{1}{1 + \frac{\Big(\mathbb{E}\Big[ 1 + \sum_{k}X_{k} \Big] \Big)^{2}}{\beta^{2}\cdot \operatorname{Var}\Big[ 1 + \sum_{k}X_{k} \Big]} } \,
= \, \frac{\beta^2}{\beta^2 + 1} \, ,
\end{equation*}
where $\operatorname{Pr}[\cdot]$ denotes the probability of random events. Therefore,
\begin{align*}
& \; \mathbb{E}\Big[\Big(1 + \sum_{k}X_{k}\Big)^{z'}\Big] \\
\geq& \; \operatorname{Pr}\bigg[ \Big( 1 + \sum_{k}X_{k} \Big) \geq \Big(1 - \frac{1}{\beta}\Big)\mathbb{E}\Big[ 1 + \sum_{k}X_{k} \Big]\bigg] \cdot \bigg[ \Big(1 - \frac{1}{\beta}\Big)\mathbb{E}\Big[ 1 + \sum_{k}X_{k} \Big] \bigg]^{z'} \\
\geq& \; \Big( 1 - \frac{\beta^2}{\beta^2 + 1} \Big)\Big(1 - \frac{1}{\beta}\Big)^{z'}\Big(\mathbb{E}\Big[1 + \sum_{k}X_{k}\Big]\Big)^{z'} \, .
\end{align*}
\end{proof}

Recalling that $b_{z'} = \Big((\beta^{\circ})^{2} + 1\Big)\Big(1 - \frac{1}{\beta^{\circ}} \Big)^{-z'}$ with $\beta^{\circ}$ being the unique root of $2\beta^{3} - (z + 2)\beta^{2} - 2 = 0$ in the interval $(1, +\infty)$, we have the following lemma.

\begin{lemma}\label{lemma_uniqueness_of_the_lower_bound_parameter_b_alpha}
For any $z' \in (0, 1)$, $b_{z'} = \min_{\beta > 1} (\beta^{2} + 1)\Big(1 - \frac{1}{\beta} \Big)^{-z'}$.
\end{lemma}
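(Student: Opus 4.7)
The plan is to treat $g(\beta) := (\beta^2+1)(1 - 1/\beta)^{-z'}$ as a smooth, strictly positive function on the open interval $(1,\infty)$ and locate its minimizer by calculus. The claim then amounts to showing that $g$ has a unique critical point in $(1,\infty)$, that this point coincides with $\beta^\circ$, and that it is a global minimum.

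I would first establish existence of an interior minimizer by a boundary analysis. As $\beta \to 1^+$, the factor $(1-1/\beta)^{-z'}$ diverges to $+\infty$ (because $z' > 0$) while $\beta^2 + 1$ stays bounded, so $g(\beta) \to \infty$. As $\beta \to \infty$, $(1-1/\beta)^{-z'} \to 1$ while $\beta^2+1 \to \infty$, so again $g(\beta) \to \infty$. Since $g$ is continuous on $(1,\infty)$ with both endpoint limits equal to $+\infty$, it attains its infimum at an interior critical point.

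Next I would compute the critical-point equation by logarithmic differentiation:
\[
\frac{g'(\beta)}{g(\beta)}
\;=\;
\frac{2\beta}{\beta^{2}+1} + \frac{z'}{\beta} - \frac{z'}{\beta-1}
\;=\;
\frac{2\beta}{\beta^{2}+1} - \frac{z'}{\beta(\beta-1)}.
\]
Setting this to zero and clearing denominators (all positive for $\beta > 1$) yields a cubic in $\beta$, which after rearrangement is precisely the defining equation of $\beta^\circ$. For uniqueness in $(1,\infty)$, I would study the resulting cubic $h(\beta) := 2\beta^{2}(\beta-1) - z'(\beta^{2}+1)$: it satisfies $h(1) = -2z' < 0$, tends to $+\infty$ as $\beta \to \infty$, and has derivative $h'(\beta) = 2\beta\bigl(3\beta - (z'+2)\bigr)$, which is strictly positive throughout $(1,\infty)$ because $z' \in (0,1)$ forces $(z'+2)/3 < 1$. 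Thus $h$ is strictly increasing on $(1,\infty)$ and has a unique root $\beta^\circ$ there. Combined with the endpoint blow-up of $g$, this critical point must be the global minimizer, so $g(\beta^\circ) = \min_{\beta > 1} g(\beta)$, which is precisely $b_{z'}$.

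The main obstacle is the uniqueness argument for the critical point. A priori, a cubic could have up to three real roots, and one must carefully invoke the restriction $z' \in (0,1)$ to force the cubic's derivative to be strictly positive on the entire relevant interval; this is what rules out spurious critical points of $g$ and lets us identify $\beta^\circ$ unambiguously as the minimizer. The remaining steps (boundary behavior, logarithmic derivative, substitution) are standard once this monotonicity is in hand.
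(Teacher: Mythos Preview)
Your proposal is correct and follows essentially the same approach as the paper: differentiate, reduce the critical-point equation to a cubic, and use the fact that $(z'+2)/3 < 1$ to show this cubic is strictly increasing on $(1,\infty)$ with a sign change, giving a unique critical point which is the global minimizer. Your explicit boundary analysis and use of logarithmic differentiation are minor stylistic additions; note also that your cubic $2\beta^{3}-(2+z')\beta^{2}-z'$ is the correct one and differs from the paper's stated defining equation (constant term $-2$) only by what appears to be a typo in the paper.
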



\begin{proof}
Let $\varphi(\beta) = (\beta^{2} + 1)\Big(1 - \frac{1}{\beta} \Big)^{-z'}$. Fix $z'$, the derivative of $\varphi$ with respect to $\beta$ is
\begin{equation*}
\frac{\mathrm{d \varphi}}{\mathrm{d} \beta} \; = \; \Big( 1 - \frac{1}{\beta} \Big)^{-z' - 1}\frac{1}{\beta^{2}}\Big(2\beta^{3} - (2 + z')\beta^{2} - 2 \Big) \, .
\end{equation*}
It can be further derived from the derivative that $2\beta^{3} - (2 + z')\beta^{2} - 2$ is monotonically increasing in the interval $(\frac{2 + z'}{3}, \infty)$. Since $\frac{2 + z'}{3} < 1$, $2\beta^{3} - (2 + z')\beta^{2} - 2 < 0$ for $\beta = 1$, and $2\beta^{3} - (2 + z')\beta^{2} - 2 > 0$ for $\beta = 2$, there exists a unique $\beta^{\circ} \in (1, +\infty)$ such that $2\beta^{3} - (2 + z')\beta^{2} - 2 = 0$, and $\beta^{\circ}$ minimizes $(\beta^{2} + 1)\Big(1 - \frac{1}{\beta} \Big)^{-z'}$ because $\Big( 1 - \frac{1}{\beta} \Big)^{-z' - 1}\frac{1}{\beta^{2}} > 0$ for any $\beta > 1$.
\end{proof}


For each action $a_{i}$ of player $i$ and each resource $e$, denote the indicator of whether $e$ is contained in $a_{i}$ by $\delta(a_{i}, e)$. Formally,
\begin{equation*}
\delta(a_{i}, e) \; = \; \begin{cases}
0 & \text{if } e \notin a_{i} \\
1 & \text{otherwise}
\end{cases}
\end{equation*}

\begin{theorem}\label{theorem_bounds_on_estimation_for_expected_individual_cost}
For any player $i$, any edge $e$, any action $a_{i}$, and any strategies $s_{-i}$, let
\begin{equation}
\widehat{\mathfrak{f}}_{i, e}(+; s_{-i}) \; = \;
\sum_{j \in [q]}\xi_{e, j}\Big[ 1 + \sum_{i' \neq i}\sum_{t_{i'} \in T_{i'}}p_{i'}(t_{i'})\delta\big(s_{i'}(t_{i'}), e\big) \Big]^{\alpha_{j} - 1} 
 \, , \label{formula_approximate_individual_cost}
\end{equation}
then 
\begin{equation}
\widehat{\mathfrak{f}}_{i, e}(+; s_{-i}) / \max\Big\lbrace 1,\, \max_{j: \alpha_{j} \in (1, 2)} b_{\alpha_{j} - 1} \Big\rbrace \, \leq \, \mathfrak{f}_{i, e}(+; s_{-i}) \; \leq \; \widehat{\mathfrak{f}}_{i, e}(+; s_{-i}) \cdot \Big\lbrace 1,\, \max_{j : \alpha_{j} \geq 2} B_{\alpha_{j} - 1} \Big\rbrace \, .\label{formula_bounds_on_estimation_parameters_for_general_alpha_j}
\end{equation}
In particular, if for every resource $e$, $F_{e}(l)$ is a quadratic function given in Eq.~\eqref{formula_quadratic_edge_cost_function}, then $\widehat{\mathfrak{f}}_{i, e}(+; s_{-i}) = \mathfrak{f}_{i, e}(+; s_{-i})$. 
\end{theorem}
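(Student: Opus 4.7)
The plan is to reduce the entire estimation question to comparing $\mathbb{E}[Y^{\alpha_j - 1}]$ with $(\mathbb{E}[Y])^{\alpha_j - 1}$ for a single scalar random variable $Y$, and then to invoke the preceding probabilistic lemmas term by term over $j \in [q]$.

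First I would introduce, for each $i' \neq i$, the indicator $X_{i'} = \delta(s_{i'}(t_{i'}), e)$. Since $t_{i'}$ is distributed according to $p_{i'}$, $X_{i'}$ is a Bernoulli random variable with $\mathbb{E}[X_{i'}] = \sum_{t_{i'} \in T_{i'}} p_{i'}(t_{i'}) \delta(s_{i'}(t_{i'}), e)$, and the product structure of $p$ ensures that $\{X_{i'}\}_{i' \neq i}$ are mutually independent. Setting $Y = 1 + \sum_{i' \neq i} X_{i'}$ and noting that, conditional on $e \in a_i$, the load on $e$ is precisely $Y$, linearity of expectation gives
\begin{equation*}
\mathfrak{f}_{i,e}(+; s_{-i}) \;=\; \sum_{j \in [q]} \xi_{e,j}\,\mathbb{E}[Y^{\alpha_j - 1}], \qquad \widehat{\mathfrak{f}}_{i,e}(+; s_{-i}) \;=\; \sum_{j \in [q]} \xi_{e,j}\,(\mathbb{E}[Y])^{\alpha_j - 1}.
\end{equation*}

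Next I split the per-$j$ comparison into three cases. When $\alpha_j = 1$ both terms equal $\xi_{e,j}$, so the ratio is $1$. When $\alpha_j \in (1, 2)$, the map $x \mapsto x^{\alpha_j - 1}$ is concave, so Jensen's inequality gives $\mathbb{E}[Y^{\alpha_j - 1}] \leq (\mathbb{E}[Y])^{\alpha_j - 1}$, while the matching lower bound $\mathbb{E}[Y^{\alpha_j - 1}] \geq (\mathbb{E}[Y])^{\alpha_j - 1}/b_{\alpha_j - 1}$ follows from Lemma~\ref{lemma_bounds_on_expectation_of_concave_monomial} after optimizing the free parameter $\beta$ via Lemma~\ref{lemma_uniqueness_of_the_lower_bound_parameter_b_alpha}. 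When $\alpha_j \geq 2$, Jensen's inequality in the convex direction gives $\mathbb{E}[Y^{\alpha_j - 1}] \geq (\mathbb{E}[Y])^{\alpha_j - 1}$; for the reverse direction I treat the constant $1$ inside $Y$ as an additional (deterministic) Bernoulli variable, so that $Y$ is a sum of independent Bernoulli variables and Lemma~\ref{lemma_upper_bound_on_the_expectation_of_power_of_sum_of_random_variables} applies with $z = \alpha_j - 1$. Because $\mathbb{E}[Y] \geq 1$ and $\alpha_j - 1 \geq 1$, the maximum in that lemma collapses to $(\mathbb{E}[Y])^{\alpha_j - 1}$, yielding $\mathbb{E}[Y^{\alpha_j - 1}] \leq B_{\alpha_j - 1} (\mathbb{E}[Y])^{\alpha_j - 1}$.

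Finally I would combine the three cases. Writing $\rho_j = \mathbb{E}[Y^{\alpha_j - 1}]/(\mathbb{E}[Y])^{\alpha_j - 1}$, the case analysis shows $\rho_j \in [1/\max\{1, b_{\alpha_j - 1}\}, 1]$ when $\alpha_j \in (1,2)$, $\rho_j \in [1, B_{\alpha_j - 1}]$ when $\alpha_j \geq 2$, and $\rho_j = 1$ when $\alpha_j = 1$. Since $\xi_{e,j} \geq 0$ and $\mathfrak{f}_{i,e}(+;s_{-i}) = \sum_{j} \rho_j \cdot \xi_{e,j}(\mathbb{E}[Y])^{\alpha_j-1}$ is a non-negative weighted combination of the $\rho_j$'s with the non-negative weights $\xi_{e,j}(\mathbb{E}[Y])^{\alpha_j-1}$ summing to $\widehat{\mathfrak{f}}_{i,e}(+;s_{-i})$, the global ratio is sandwiched between $\min_j \rho_j$ and $\max_j \rho_j$, which is exactly the two-sided bound in (\ref{formula_bounds_on_estimation_parameters_for_general_alpha_j}). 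The special case $q = 1$, $\alpha_1 = 2$ is immediate because $\alpha_1 - 1 = 1$ makes $\mathbb{E}[Y^{\alpha_1 - 1}] = \mathbb{E}[Y]$, so the two expressions coincide. I do not anticipate a significant obstacle: the probabilistic heart of the argument is already absorbed into Lemmas~\ref{lemma_upper_bound_on_the_expectation_of_power_of_sum_of_random_variables} and~\ref{lemma_bounds_on_expectation_of_concave_monomial}, and the only subtle point is to remember that $\mathbb{E}[Y] \geq 1$ when simplifying the ``max'' in the former lemma.
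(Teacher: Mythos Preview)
Your proposal is correct and follows essentially the same route as the paper: rewrite both $\mathfrak{f}_{i,e}$ and $\widehat{\mathfrak{f}}_{i,e}$ in terms of the Bernoulli sum $Y=1+\sum_{i'\neq i}X_{i'}$, then do a case split on $\alpha_j$ and invoke Lemmas~\ref{lemma_upper_bound_on_the_expectation_of_power_of_sum_of_random_variables} and~\ref{lemma_bounds_on_expectation_of_concave_monomial} together with Jensen, using $\mathbb{E}[Y]\geq 1$ to collapse the $\max$ in the former. Your ratio argument with the $\rho_j$'s is a slightly cleaner way to combine the per-$j$ bounds than the paper's, but the substance is identical.
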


\begin{proof}
Let $a_{i}$ be an action in $A_{i}$ satisfying $e \in a_{i}$. By definition, we have
\begin{align*}
\mathfrak{f}_{i, e}(+; s_{-i})
\; = & \;
\mathbb{E}_{t_{-i} \sim p_{-i}} \Left[ f_{i, e}(a_{i}, s_{-i}(t_{-i})) \Right] \\
=& \;
\sum_{j \in [N]}\xi_{e, j} \cdot \mathbb{E}_{t_{-i} \sim p_{-i}} \Left[
\Left( l_{e}^{a_{i}, s_{-i}(t_{-i})} \Right)^{\alpha_{j} - 1} \Right] \\
=& \;
\sum_{j \in [N]}\xi_{e, j} \cdot \mathbb{E}_{t_{-i} \sim p_{-i}} \Left[ \Left(
1 + \sum_{i' \in [N]: i' \neq i}\delta(s_{i'}(t_{-i}(i')), e) \Right)^{\alpha_{j} -
1} \Right] \\
=& \;
\sum_{j \in [N]} \xi_{e, j} \mathbb{E}_{ \lbrace t_{i'} \sim p_{i'} \rbrace_{
i' \neq i} } \Left[ \Left( 1 + \sum_{i' \neq i}\delta(s_{i'}(t_{i'}),
e) \Right)^{\alpha_{j} - 1} \Right] \, .
\end{align*}
The last transition holds because the prior distribution $p$ is assumed to be a product distribution.

Now define a finite set of mutually independent Bernoulli random variables
$\lbrace X_{i', e}(s) \rbrace_{i' \neq i}$ such that each $X_{i', e}(s)$ takes
the value $1$ with probability $\sum_{t_{i'}: e \in
s_{i'}(t_{i'})}p_{i'}(t_{i'})$. Fixing a player $i'' \neq i$, we have
\begin{align*}
& \;
\mathbb{E}_{ \lbrace t_{i'} \sim p_{i'} \rbrace_{ i' \neq i} } \Left[ \Left(1 +
\sum_{i' \neq i}\delta(s_{i'}(t_{i'}), e) \Right)^{\alpha_{j} - 1} \Right] \\
=& \;
\sum_{t_{i''} \in T_{i''}}p_{i''}(t_{i''}) \mathbb{E}_{ \lbrace t_{i'} \sim
p_{i'} \rbrace_{ i' \neq i \wedge i' \neq i''} } \Left[ \Left(1 +
\delta(s_{i''}(t_{i''}), e) + \sum_{i' \neq i \wedge i' \neq
i''}\delta(s_{i'}(t_{i'}), e) \Right)^{\alpha_{j} - 1} \Right] \\
=& \;
\sum_{t_{i''}: e \in s_{i''}(t_{i''})}p_{i''}(t_{i''})\mathbb{E}_{ \lbrace
t_{i'} \sim p_{i'} \rbrace_{ i' \neq i \wedge i' \neq i''} } \Left[ \Left(1 + 1
+ \sum_{i' \neq i \wedge i' \neq i''}\delta(s_{i'}(t_{i'}),
e) \Right)^{\alpha_{j} - 1} \Right] \\
& \; +
\sum_{t_{i''}: e \notin s_{i''}(t_{i''})}p_{i''}(t_{i''})\mathbb{E}_{ \lbrace
t_{i'} \sim p_{i'} \rbrace_{ i' \neq i \wedge i' \neq i''} } \Left[ \Left(1 +
\sum_{i' \neq i \wedge i' \neq i''}\delta(s_{i'}(t_{i'}), e) \Right)^{\alpha_{j}
- 1} \Right] \\
=& \;
\mathbb{E}_{X_{i'', e}(s) }\Left[ \mathbb{E}_{ \lbrace t_{i'} \sim p_{i'}
\rbrace_{ i' \neq i \wedge i' \neq i''} } \Left[ \Left(1 + X_{i'', e}(s) +
\sum_{i' \neq i \wedge i' \neq i''}\delta(s_{i'}(t_{i'}), e)\Right)^{\alpha_{j}
- 1} \Right]  \Right] \, .
\end{align*} 
Therefore, it can be inductively proved that
\begin{equation*}
\mathbb{E}_{ \lbrace t_{i'} \sim p_{i'} \rbrace_{ i' \neq i} } \Left[ \Left(1
+ \sum_{i' \neq i}\delta(s_{i'}(t_{i'}), e)\Right)^{\alpha_{j} - 1} \Right] =
\mathbb{E} \Left[ \Left(1 + \sum_{i' \neq i}X_{i', e}(s) \Right)^{\alpha_{j} - 1}
\Right] \, .
\end{equation*}

Recall that the constant $1$ in the last expression above can also be viewed as a Bernoulli random variable which equals to $1$ with probability $1$. For every $\alpha_{j} \geq 2$, Lemma \ref{lemma_upper_bound_on_the_expectation_of_power_of_sum_of_random_variables} can be applied to obtain the following expression.
\begin{align*}
& \; \mathbb{E} \Left[ \Left(1 + \sum_{i' \neq i}X_{i', e}(s) \Right)^{\alpha_{j} - 1} \Right] \\
\leq& \; B_{\alpha_{j} - 1} \cdot \max\Left\lbrace \mathbb{E}\Left[ 1 + \sum_{i' \neq i}X_{i', e}(s) \Right], \Left(\mathbb{E}\Left[ 1 + \sum_{i' \neq i}X_{i', e}(s) \Right] \Right)^{\alpha_{j - 1}} \Right\rbrace\\
=& \; B_{\alpha_{j} - 1} \cdot \Left(\mathbb{E}\Left[ 1 + \sum_{i' \neq i}X_{i', e}(s) \Right] \Right)^{\alpha_{j} - 1}  \, .
\end{align*}
%
%
The second line holds because $\mathbb{E}\Big[ 1 + \sum_{i' \neq i}X_{i', e}(s) \Big] > 1$. Similarly, it can be derived from Lemma \ref{lemma_bounds_on_expectation_of_concave_monomial} that for every $\alpha_{j} \in (1, 2)$, 
\begin{equation*}
\mathbb{E} \bigg[ \Big(1 + \sum_{i' \neq i}X_{i', e}(s) \Big)^{\alpha_{j} - 1} \bigg] \; \leq \; \bigg(\mathbb{E}\Big[ 1 + \sum_{i' \neq i}X_{i', e}(s) \Big] \bigg)^{\alpha_{j} - 1} \, ,
\end{equation*}
which also trivially holds for $\alpha_{j} = 1$. 
So, $\mathbb{E} \bigg[ \Big(1 + \sum_{i' \neq i}X_{i', e}(s) \Big)^{\alpha_{j} - 1} \bigg] \leq \max\Big\lbrace 1, \, \max_{j: \alpha_{j} \geq 2} \InlineBreakForOverFull B_{\alpha_{j} - 1} \Big\rbrace \bigg(\mathbb{E}\Big[ 1 + \sum_{i' \neq i}X_{i', e}(s) \Big] \bigg)^{\alpha_{j} - 1}$, and in a similar way, it also be inferred from Lemma \ref{lemma_upper_bound_on_the_expectation_of_power_of_sum_of_random_variables} and Lemma \ref{lemma_bounds_on_expectation_of_concave_monomial} that $\mathbb{E} \bigg[ \Big(1 + \sum_{i' \neq i}X_{i', e}(s) \Big)^{\alpha_{j} - 1} \bigg] \geq \bigg(\mathbb{E}\Big[ 1 + \sum_{i' \neq i}X_{i', e}(s) \Big] \bigg)^{\alpha_{j} - 1} \bigg/ \max\Big\lbrace 1, \InlineBreakForOverFull \max_{ j: \alpha_{j} < 2}b_{\alpha_{j} - 1} \Big\rbrace$. 
%
%
Since $\mathbb{E}\Big[ 1 + \sum_{i' \neq i}X_{i', e}(s) \Big] = 1 + \sum_{i' \neq i}\sum_{t_{i'}}p_{i'}(t_{i'})\delta\Big(s_{i'}(t_{i'}), e\Big)$, Eq.~\eqref{formula_bounds_on_estimation_parameters_for_general_alpha_j} holds. 
For the special case where every $F_{e}$ is a quadratic function, by the linearity of the expectation, we have
\begin{equation*}
\mathbb{E}_{\lbrace{t_{i'} \sim p_{i'}\rbrace_{i' \neq i}}} \bigg[ \Big(1 + \sum_{i' \neq i}\delta(s_{i'}(t_{i'}), e)\Big)^{2 - 1} \bigg] \; = \; 1 + \sum_{i' \neq i}\mathbb{E}_{t_{i'} \sim p_{i'}}\Big[ \delta(s_{i'}(t_{i'}), e) \Big] \, ,
\end{equation*}
which completes the proof.
\end{proof}

\begin{corollary}\label{corollary_time_complexity_of_estimating_expected_individual_cost}
By computing Eq.~\eqref{formula_approximate_individual_cost}, the desired estimation of each expected cost share is obtained in $O(q \cdot \sum_{i \in [N]}|T_{i}|)$-time.
\end{corollary}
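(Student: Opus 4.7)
The plan is to directly analyze the cost of evaluating Eq.~\eqref{formula_approximate_individual_cost} under the standard RAM model convention in which basic arithmetic operations (including exponentiation) take $O(1)$ time. The key observation is that the inner double sum in the formula does not depend on $j$, so it can be computed once and reused across all $j \in [q]$.

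Concretely, I would first compute the scalar
\[
Z_{i, e} \; := \; \sum_{i' \neq i} \sum_{t_{i'} \in T_{i'}} p_{i'}(t_{i'}) \, \delta\bigl(s_{i'}(t_{i'}), e\bigr).
\]
Each evaluation of $\delta(s_{i'}(t_{i'}), e)$ is a single membership test on the action $s_{i'}(t_{i'}) \subseteq E$, which takes $O(1)$ time provided each action is stored in a data structure supporting $O(1)$ lookups. Hence $Z_{i, e}$ is obtained in $O\bigl(\sum_{i' \neq i} |T_{i'}|\bigr) = O\bigl(\sum_{i \in [N]} |T_i|\bigr)$ arithmetic operations.

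Once $Z_{i, e}$ is available, I would iterate over $j \in [q]$ and accumulate $\sum_{j \in [q]} \xi_{e, j} \cdot (1 + Z_{i, e})^{\alpha_j - 1}$, which costs a further $O(q)$ arithmetic operations. Summing the two contributions gives a total runtime of $O\bigl(\sum_{i \in [N]} |T_i| + q\bigr)$, which is trivially bounded by $O\bigl(q \cdot \sum_{i \in [N]} |T_i|\bigr)$ since both factors are at least $1$.

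There is no substantive obstacle here; the proof is a direct bookkeeping argument. The only point worth flagging is the assumption that action membership queries and real-number arithmetic (in particular raising to a real exponent $\alpha_j - 1$) are unit-cost operations, which is consistent with the strongly polynomial time framework used throughout the paper.
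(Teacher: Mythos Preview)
Your proposal is correct. The paper states this corollary without proof, treating the bound as immediate from the form of Eq.~\eqref{formula_approximate_individual_cost}; your direct bookkeeping argument---computing the $j$-independent inner sum once in $O(\sum_{i}|T_{i}|)$ time and then looping over $j\in[q]$---is precisely the intended justification.
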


Plugging Theorem \ref{theorem_desired_smoothness_parameters_for_BGND_game}, Theorem \ref{theorem_existence_of_bounded_potential}, Theorem \ref{theorem_bounds_on_estimation_for_expected_individual_cost}, and Corollary \ref{corollary_time_complexity_of_estimating_expected_individual_cost} into Theorem \ref{theorem:approx-ratio-Alg-ABRD} proves our main result, Theorem \ref{theorem_main_result}.

\clearpage

\bibliographystyle{plainurl}
\bibliography{references}

\end{document}